\tikzstyle{Arrow} = [
\tikzstyle{BiArrow} = [
\tikzstyle{RedArrow} = [
\tikzstyle{BlueArrow} = [
\tikzstyle{Dashed} = [
\tikzset{
    BiArrow/.style={
        thick,
        decoration={
            markings,
            mark=at position 0.99 with {
                \arrow[thick]{latex}
            },
            mark=at position 0 with {
                \arrow[thick]{latex}
            }
        },
        shorten >= 3pt,
        preaction = {decorate}
    }
}
\theoremstyle{definition}
\newtheorem{exmp}{Example}
\def \bz {\boldsymbol{z}}
\def \bZ {\boldsymbol{Z}}
\def \bA {\boldsymbol{A}} 
\def \bX {\boldsymbol{X}}
\def \bY {\boldsymbol{Y}}
\def \bS {\boldsymbol{S}}
\def \bU {\boldsymbol{U}}
\def \bD {\boldsymbol{\mathcal{D}}}
\def \cA {\mathcal{A}}
\def \bV {\boldsymbol{V}}
\def \beps {\boldsymbol{\varepsilon}}
\def \balpha {\boldsymbol{\alpha}}
\def \bphi {\boldsymbol{\phi}}
\def \boeta {\boldsymbol{\eta}}
\def \bgamma {\boldsymbol{\gamma}}
\def \btheta {\boldsymbol{\theta}}
\def \bTheta {\boldsymbol{\Theta}}
\def \bbeta {\boldsymbol{\beta}}
\def \bbE {\mathbb{E}}
\newcommand\indep{\protect\mathpalette{\protect\independenT}{\perp}}
\def\independenT#1#2{\mathrel{\rlap{$#1#2$}\mkern2mu{#1#2}}}
\begin{document}

\title{Bayesian Estimation of Causal Effects Using Proxies of a Latent Interference Network}

\author{\name Bar Weinstein \\
        Daniel Nevo
\email  \{barwein, danielnevo\}@tauex.tau.ac.il \\
       \addr
      Department of Statistics and Operations Research \\
      Tel Aviv University
      % \\
      % Tel Aviv, 6997801, Israel
      }
      %  \AND
      %  \name Daniel Nevo \email danielnevo@tauex.tau.ac.il \\
      %  \addr  Department of Statistics and Operations Research \\
      % Tel Aviv University \\
      % Tel Aviv, 6997801, Israel
      % }

% \editor{My editor}

\maketitle

\begin{abstract}% 
Network interference occurs when treatments assigned to some units affect the outcomes of others.
Traditional approaches often assume that the observed network correctly specifies the interference structure. However, in practice, researchers frequently only have access to proxy measurements of the interference network due to limitations in data collection or potential mismatches
between measured networks and actual interference pathways. 
In this paper, we introduce a framework for estimating causal effects when
only proxy networks are available. Our approach leverages a structural causal
model that accommodates diverse proxy types, including noisy measurements, 
multiple data sources, and multilayer networks, and defines causal effects as
interventions on population-level treatments. 
The latent nature of the true interference network poses significant challenges.
To overcome them, we develop a Bayesian inference framework.
We propose a Block Gibbs sampler with Locally Informed Proposals to update the latent network, thereby efficiently exploring the high-dimensional posterior space composed of both discrete and continuous parameters.  The latent network updates are driven by information from the proxy networks, treatments, and outcomes. We illustrate the performance of our method through numerical experiments, demonstrating its accuracy in recovering causal effects even when only proxies of the interference network are available.
\end{abstract}

\begin{keywords}
  Spillovers, Peer effects, Noisy networks, Network regression, Locally Informed Proposals.
\end{keywords}

% \newpage

\section{Introduction}
\label{sec:intro}
In causal inference it is commonly assumed that there is no interference between units,
 that is, the treatment assigned to one unit does not affect 
 the outcomes of other units \citep{Cox1958}.
However, researchers are often interested in settings where 
the treatment of interest can spill over from treated units to untreated ones,
leading to interference between units.
The term interference encompasses various forms of transmission mechanisms between interacting units,
 including social influence, information diffusion, contagion, and protection against infectious diseases.
% Interference arises when treatments of some units affect the outcomes of others. 
% The term interference encompasses various forms of causal transmission between interacting units,
% such as information sharing, protection against infectious diseases, 
% and the spread of social behaviors or beliefs. 
Researchers often seek to estimate the effects of a postulated treatment policy, 
e.g., vaccination allocation in a population. 
% Due to possible interference, the policies are expressed in terms of population-level treatments. 
To estimate the influence of such policies under interference,
 researchers typically rely on detailed information about the network 
 structure that encodes the structure of pairwise unit interactions.   
% These interactions are often represented by a network capturing the interference structure.

However, the network that researchers use to encode the interference structure might not be the actual interference network for at least two reasons. First,  accurately measuring networks of social interactions is a formidable task. For example, 
social networks are often measured through self-reported elicitation of social relations, where respondents are asked to list the names of others with whom they have relevant social interactions. 
 The challenge of correctly measuring social interactions is well known in the network science literature.
% \citep{wang2012measurement}. 
Common remedies for this challenge include improvements in research design \citep{marsden1990network},
 or the use of models that account for measurement error
\citep{Butts2003,Young2021,de2023latent}. 

Second, even when a network is measured without error, the observed social ties may not correspond to
the specific pathways through which interference operates. Researchers often rely on networks that are available or convenient to measure, yet the actual interference might propagate through a different, unobserved structure. For example, researchers may exploit available online social media connections, while in reality, physical contacts, which are more challenging to measure, are more relevant to that specific study.
In such cases, the observed network is a valid measurement of a social structure, 
but it is merely a proxy for the causal interference network. 
For instance, while an observed 'friendship' network is distinct from a latent 'physical contact' network, 
the two are likely correlated due to shared latent social traits.
Treating the observed proxy as the true interference structure can result
in misspecification that can bias causal effect estimates \citep{Weinstein2023}.
% Even when network measurements are accurate, 
% the observed relationships may not necessarily correspond to the 
% pathways through which interference occurs.
% This discrepancy can arise because the relational connections captured in
% the measured network may differ from those relevant to the specific interference mechanism under study.
% Consequently, even a correctly measured network might not accurately represent the interference structure.

Thus, researchers often observe only proxy measurements of the actual interference network, which remains latent. Despite this, it is common in the literature to assume that the measured network accurately represents the interference structure \citep{ugander2013, aronow_estimating_2017,sofrygin2017semi, Forastiere2020, Tchetgen2020, leung2022causal, forastiere2022jmlr, Ogburn2022, hayes2025}.

In this paper, we propose a structural causal models (SCMs) framework for estimating causal effects under network interference, when only proxy measurements of the true interference structure are available. Such scenarios include the case of a single erroneously measured network, repeated measures of noisy networks, multiple networks from varying sources, and multilayer networks (see Section \ref{subsec:sources} for extended examples). 
The SCMs view the true interference network as a realization from a random network model.
% and allows for latent common causes between the network and outcomes.
Crucially, our framework accommodates various structural configurations, including both causal and non-causal proxy mechanisms, and explicitly addresses scenarios where treatment assignment depends on either the latent interference network or the observed proxies.
We focus on the causal effects of interventions on the population treatments, addressing various policy-related estimands.

We discuss the identifiability of the structural parameters in simplified settings, illustrating how the outcomes, treatments, and proxy networks jointly contribute to identification. Identifying these structural parameters allows for the recovery of the latent network's conditional distribution, motivating our proposed Bayesian framework.

We develop a Bayesian inferential framework for estimation. Since the true interference network is latent, the posterior is composed of a mixed space of discrete and continuous components, where the discrete space (latent network) is large. That makes sampling from the posterior challenging and computationally expensive. 
To address this, we propose a Block Gibbs sampler that alternates between updating the continuous parameters and the discrete latent network. For the network updates, we employ Locally Informed Proposals \citep{zanella2019informed} to efficiently explore the high-dimensional discrete space, utilizing gradient-based approximations to render these proposals computationally feasible. 
% These procedures allow researchers to select the appropriate models for their applications flexibly. 
This inference framework allows researchers to flexibly estimate the parameters of the various SCMs described above.
We demonstrate the performance of our proposed methods through numerical illustrations on fully- and semi-synthetic data. Code is available at \url{https://github.com/barwein/BayesProxyNets}.

The remainder of this paper is organized as follows. In Section~\ref{sec:settings}, 
we introduce our formal setup, present the SCMs, and define the causal estimands of interest. 
Section~\ref{sec:proxies} categorizes the different types of proxy measurements and provides concrete examples of true and proxy network models. 
Section~\ref{sec:identifiability} investigates the identification of structural parameters in simplified settings.
Section~\ref{sec:inference} describes the proposed Bayesian algorithm for posterior sampling and the causal effects estimation scheme.
Section~\ref{sec:sim} demonstrates the performance of our method through numerical experiments.
Finally, Section~\ref{sec:discussion} concludes with a discussion of extensions and directions for future work.

\subsection{Related Literature}
\label{subsec:lit}

A significant portion of the interference literature operates under the
assumption that the interference network is fixed, correctly measured,
and exogenous to the structural models \citep{sofrygin2017semi,Tchetgen2020, Ogburn2022}. 
% This includes works investigating asymptotic properties under specific network models \citep{Li2022}
% or joint generation models that account for unit dependence \citep{Clark2024}. 
In contrast, our framework treats the interference network as a realization from 
a random network model, integrating it directly into the SCM.
Other researchers have taken similar perspectives while assuming that the
observed network accurately represents the interference structure \citep{pinkham2013,Li2022, Clark2024}.

When the assumption of correctly specified observed network is relaxed, 
existing solutions often rely on specific structural assumptions. 
A prominent line of work focuses on the linear-in-means model \citep{Manski1993}, 
developing corrections for specific error mechanisms such as random measurement
error \citep{Hardy2019, boucher2021estimating} or edge censoring \citep{Griffith2021}.
These approaches typically leverage the parametric structure of the linear framework for identification.
Other strategies require specific data availability, such as repeated noisy measurements 
to achieve identification via design-based estimators \citep{Li2021}, 
or rely on aggregating relational summaries when the full network is unobserved \citep{reeves2024model}.
Alternatively, \citet{Toulis2013} proposed modeling unobservable interactions through sufficient
statistics of treated neighbors.

Beyond specific error corrections, our work relates to the broad literature on 
latent network inference. While probabilistic models for recovering latent networks from observed proxy networks are well-studied in the network science literature 
\citep{Butts2003,Young2021,de2023latent},
 they are typically applied as distinct tasks, separate from causal effect estimation. 
A natural heuristic involves a two-stage approach: first inferring the network and then treating it 
as fixed for causal analysis. 
We show in the numerical experiments (Section~\ref{sec:sim}) 
that such two-stage approaches can lead to estimation errors and underestimation of uncertainty.
Our Bayesian framework improves upon this by jointly estimating the latent network and causal parameters, thereby properly propagating uncertainty from the network inference stage to the causal estimates.

Beyond causal inference under network interference, 
there is an emerging literature on regression problems with network-linked covariates.
While this literature typically assumes the network is observed without error \citep{zhu2017network,li2019prediction},
exceptions exist that account for sampling uncertainty of the network \citep{le2022}.
% A notable excpetion is \citet{le2022}, who studied 
% network-linked regression settings where the observed network 
% diverge from the true relational structure due to sampling uncertainty. 
% Our framework is sufficiently general to adapt to such network regression problems where 
% only proxies of the true network are observed.
Although our primary focus is on causal estimation, our framework is adaptable to such network regression problems where only proxies are observed.
% We discuss this adaptation and provide guidance for its implementation in Section \ref{sec:discussion}.

In network interference scenarios, treatments of other units are typically assumed 
to affect the potential outcome of one unit only through values of a 
summarizing exposure mapping  \citep{ugander2013, Manski2013, aronow_estimating_2017}. 
However, recent research has highlighted the challenges associated with this approach. 
\citet{Saevje2023} demonstrated that incorrect specification of these mappings 
can introduce bias and obscure the interpretation of results. 
Building on this, \citet{Weinstein2023} showed that such biases may result
 from misspecification of the network structure itself. 
 % They proposed an estimator that integrates multiple networks simultaneously, 
 % ensuring unbiased estimation when at least one network is correctly specified. 
 In this paper, we address these challenges by strictly distinguishing between modeling assumptions and causal targets. While we utilize exposure mappings as summarization tools to render the high-dimensional outcome model tractable, we define causal estimands as interventions on population-level treatments (policies). This separation allows us to define and estimate meaningful static, dynamic, and stochastic policy effects even when precise unit-level exposures are uncertain, thus mitigating the interpretability issues raised by \citet{Saevje2023}.
 % Distinct from our approach of redefining the target estimand, 
 Other methodological advancements have focused on robust estimation using multiple proxy networks \citep{Weinstein2023}, testing the exposure mapping specification through randomization tests \citep{Athey2018, Puelz2022}, or learning the exposure mapping from the data while assuming the observed network is correct \citep{ohnishi2022degree}. 
 % validating the exposure mapping itself
 % In addition, \citet{Athey2018} and \citet{Puelz2022} developed randomization 
 % tests that can assess the validity of different exposure mapping specifications. 
 % In a different direction, \cite{ohnishi2022degree} proposed a Bayesian nonparametric 
 % approach to learn the exposure mapping directly from the data, although still assuming
 %  the observed network is correctly specified.

Finally, we note that certain causal effects can be estimated without network data. 
\citet{Saevje2021} showed that the Expected Average Treatment Effect (EATE), 
which is an effect marginalized over other units' treatments, 
can be consistently estimated with the common design-based estimators
under limiting interference dependence between units. \citet{Yu2022} showed that the 
Total Treatment Effect (TTE) -- treating all units versus none -- can be unbiasedly estimated,
under restrictions on the potential outcomes and the experimental design. 
In addition, \citet{shirani2024causal} developed methods for estimating TTE 
using only an assumed distribution of the network.
However, analyzing more granular policy interventions, such as the dynamic or stochastic policies considered in this paper, requires correct network measurements that are often unavailable 
to researchers, as discussed in Section \ref{sec:proxies}. 

\subsection{Our Contribution}
\label{subsec:contribution}

Our work advances the literature on causal inference with network interference through three primary contributions.

% \begin{enumerate}[(I)]
    % \item \textbf{Flexible Structural Causal Models.} 
    \paragraph{Flexible Structural Causal Models.} 
    We propose a general SCMs framework where the interference network is treated as a latent variable. 
    This framework generalizes existing approaches by accommodating diverse structural configurations, including both \textit{causal proxies} (e.g., noisy measurements) and \textit{non-causal proxies} (e.g., multilayer networks sharing latent coordinates). 
    Furthermore, it explicitly models distinct treatment assignment mechanisms, where treatment allocation depends either on the latent interference network or on the observed proxies.

    \paragraph{Joint Estimation.} 
    We develop a unified Bayesian framework that jointly estimates the causal effects and reconstructs the latent interference network. 
    This contrasts with heuristic two-stage approaches that first infer the network and then treat it as fixed, which often leads to underestimation of uncertainty.
    Our end-to-end estimation leverages information from all available modules (e.g., proxies, treatments, and outcomes) to recover the latent structure, while ensuring that uncertainty in the network is propagated to the final causal estimates.

    \paragraph{Inference for Discrete Latent Networks.} 
    The joint estimation poses a significant computational challenge due to the high-dimensional discrete space of the latent network. 
    We address this by developing a Block Gibbs sampler equipped with Locally Informed Proposals (LIP) \citep{zanella2019informed}. 
    We utilize gradient-based approximations to explore the posterior space, rendering Bayesian inference feasible for network sizes where brute-force marginalization is impossible.
     Furthermore, we provide a formal analysis of the approximation error, deriving a refined efficiency bound that exploits the structure of network updates to improve upon general global bounds \citep[e.g.,][]{grathwohl_2021}.

\section{Settings and Causal Models}
\label{sec:settings}

\subsection{Setup and Notations}
\label{subsec:notations}

Consider a finite population of $N$ units indexed by 
$i\in [N] = \{1,\dots,N\}$. 
Denote the treatment vector of the entire population by $\bZ = (Z_1,\ldots,Z_N)$.
% , and let $\mathcal{Z}$, by the treatments' space (the support of $\bZ$). 
% Let $\bZ$ be the treatment vector of the entire population, and let $\mathcal{Z}$ denote the treatments' space, which is assumed to be finite. 
Denote the observed outcomes by $\bY = (Y_1,\dots, Y_N)$.
The interference network is represented by its $N \times N$ adjacency matrix $\bA^\ast$.
% The interference structure between units is represented by a network. Let $\bA^\ast$ be the $N \times N$ adjacency matrix of the true, latent, interference network. 
Denote by $\bA^\ast_i$ the $i$-th row of $\bA^\ast$.
For simplicity, $\bA^\ast$ is assumed to be unweighted and undirected, that is,
 $A^\ast_{ij}=A^\ast_{ji}=1$ only if there is an edge between units $i$ and $j$,
  and by convention $A^\ast_{ii}=0$. 
%   Extensions for weighted and directed networks are possible (Section~\ref{sec:discussion}). 

The true network $\bA^\ast$ is not observed. Instead, we assume that researchers observe $B\geq 1$ proxy measurement(s) $\cA= \big\{\bA_1,\dots,\bA_B\big\}$ of $\bA^\ast$, such that each $\bA_b$, $b\in [B] =\{1,\ldots,B\}$, is an adjacency matrix of the same dimension as $\bA^\ast$.
Section \ref{subsec:sources} provides detailed examples of possible proxies.
Crucially, treatment interference is assumed to transmit through $\bA^\ast$ and not through any of the proxies in $\cA$. This assumption is formalized in the SCMs given below.
Let $\bX \in \mathbb{R}^{N\times q}$ be a matrix of baseline covariates, where $\bX_i$ is the $q$ covariates of unit $i$. The observed data are $\boldsymbol{\mathcal{D}} = (\bY,\bZ,\cA,\bX)$. Lastly, throughout the paper, we use $p(\cdot)$ and $p(\cdot \mid \cdot)$  as generic notations for  (conditional) probability distributions.

\subsection{Structural Causal Models}
\label{subsec:struc}

We use an SCM framework \citep{Pearl2009}, which models the data-generation mechanism through sequential evaluation of structural equations. 
Our proposed SCMs generalize previous models \citep{sofrygin2017semi, Ogburn2022} by including the interference network generation, as well as the proxy formation, as part of the model.
The directed acyclic graphs (DAGs) in \citet{Ogburn2022} are essentially conditioned on a fixed network, while our SCMs treat the interference network as a realization from a random network model.
That is, each DAG in \citet{Ogburn2022} can be viewed as a slice of our SCMs, given a specific network $\bA^\ast$ realization.
The underlying assumed structures can be represented by population-level DAGs, as illustrated in Figure~\ref{fig:DAGs}. 
\begin{figure}[H]
  \centering
  
  % Top Row: Causal Proxies
  \begin{subfigure}[b]{0.45\textwidth}
   \centering
    \begin{tikzpicture}[>=Stealth,line width=.8pt, scale=0.9, transform shape]
      \node (X) at (0, 2.5) {$\bX$};
      \node (A*) at (2, 2) {$\bA^\ast$};
      \node (A) at (2, .5) {$\cA$};
      \node (Z) at (4, 2) {$\bZ$};
      \node (Y) at (6, 2) {$\bY$};
      \node (U) at (4, .5) {$\bU$};

      \draw[->] (X) -- (A*);
      \draw[->] (X) -- (A);
      \draw[->] (X) to [out=10,in=160] (Z);
      \draw[->] (X) to [out=20,in=150] (Y);
      \draw[->, color=red] (A*) -- (Z); % Observational
      \draw[->, color=blue] (A*) -- (A); % Causal
      \draw[->] (Z) -- (Y);
      \draw[->] (A*) to [out=25,in=160] (Y);
    %   \draw[->] (A) -- (Z); % Removed for Observational
      \draw[->, dashed] (U) to (A*);
      \draw[->, dashed] (U) to (Y);
    \end{tikzpicture}
   \caption{Causal proxies \& Latent assignment}
  \end{subfigure}
  \hfill
  \begin{subfigure}[b]{0.45\textwidth}
   \centering
    \begin{tikzpicture}[>=Stealth,line width=.8pt, scale=0.9, transform shape]
      \node (X) at (0, 2.5) {$\bX$};
      \node (A*) at (2, 2) {$\bA^\ast$};
      \node (A) at (2, .5) {$\cA$};
      \node (Z) at (4, 2) {$\bZ$};
      \node (Y) at (6, 2) {$\bY$};
      \node (U) at (4, .5) {$\bU$};

      \draw[->] (X) -- (A*);
      \draw[->] (X) -- (A);
      \draw[->] (X) to [out=10,in=160] (Z);
      \draw[->] (X) to [out=20,in=150] (Y);
      % \draw[->] (A*) -- (Z); % Removed for RCT
      \draw[->, color=blue] (A*) -- (A); % Causal
      \draw[->] (Z) -- (Y);
      \draw[->] (A*) to [out=25,in=160] (Y);
      \draw[->, color=red] (A) -- (Z); % RCT
      \draw[->, dashed] (U) to (A*);
      \draw[->, dashed] (U) to (Y);
    \end{tikzpicture}
   \caption{Causal proxies \& Proxy assignment}
  \end{subfigure}

  \vspace{1cm} % Add some vertical space between rows

  % Bottom Row: Non-Causal Proxies
  \begin{subfigure}[b]{0.45\textwidth}
   \centering
    \begin{tikzpicture}[>=Stealth,line width=.8pt, scale=0.9, transform shape]
      \node (X) at (0, 2.5) {$\bX$};
      \node (A*) at (2, 2) {$\bA^\ast$};
      \node (A) at (2, .5) {$\cA$};
      \node (Z) at (4, 2) {$\bZ$};
      \node (Y) at (6, 2) {$\bY$};
      \node (U) at (4, .5) {$\bU$};
      \node (V) at (0, 1) {$\bV$};

      \draw[->] (X) -- (A*);
      \draw[->] (X) -- (A);
      \draw[->] (X) to [out=10,in=160] (Z);
      \draw[->] (X) to [out=20,in=150] (Y);
      \draw[->, color=red] (A*) -- (Z); % Observational
      \draw[->] (Z) -- (Y);
      \draw[->] (A*) to [out=25,in=160] (Y);
      % \draw[->] (A) -- (Z); % Removed for Observational
      \draw[->, dashed] (U) to (A*);
      \draw[->, dashed] (U) to (Y);
      \draw[->, color=blue] (V) -- (A); % Non-causal
      \draw[->, color=blue] (V) -- (A*); % Non-causal
    \end{tikzpicture}
    \caption{Non-causal proxies \& Latent assignment}
  \end{subfigure}
  \hfill
  \begin{subfigure}[b]{0.45\textwidth}
   \centering
    \begin{tikzpicture}[>=Stealth,line width=.8pt, scale=0.9, transform shape]
      \node (X) at (0, 2.5) {$\bX$};
      \node (A*) at (2, 2) {$\bA^\ast$};
      \node (A) at (2, .5) {$\cA$};
      \node (Z) at (4, 2) {$\bZ$};
      \node (Y) at (6, 2) {$\bY$};
      \node (U) at (4, .5) {$\bU$};
      \node (V) at (0, 1) {$\bV$};

      \draw[->] (X) -- (A*);
      \draw[->] (X) -- (A);
      \draw[->] (X) to [out=10,in=160] (Z);
      \draw[->] (X) to [out=20,in=150] (Y);
      % \draw[->] (A*) -- (Z); % Removed for RCT
      \draw[->] (Z) -- (Y);
      \draw[->] (A*) to [out=25,in=160] (Y);
      \draw[->, color=red] (A) -- (Z); % RCT
      \draw[->, dashed] (U) to (A*);
      \draw[->, dashed] (U) to (Y);
      \draw[->, color=blue] (V) -- (A); % Non-causal
      \draw[->, color=blue] (V) -- (A*); % Non-causal
    \end{tikzpicture}
    \caption{Non-causal proxies \& Proxy assignment}
  \end{subfigure}
  \caption{DAGs representing the assumed structural models for four different settings, corresponding to causal or non-causal proxies and treatment assignment based on the latent network $\bA^\ast$ or the proxies $\cA$: (a) Causal proxies with latent assignment. (b) Causal proxies with proxy assignment. (c) Non-causal proxies with latent assignment. (d) Non-causal proxies with proxy assignment. The dashed arrows are optional.}
  \label{fig:DAGs}
\end{figure}
These SCMs also include latent variables $\bU$. 
According to the SCMs, the covariates $\bX$ and the latent variables $\bU$ 
are initially generated for the entire population. 
Subsequently, the true interference network $\bA^\ast$ is generated,
 potentially depending on both $\bX$ and $\bU$. 
 We consider two distinct scenarios for how the observed proxies $\cA$ relate to $\bA^\ast$.
\begin{enumerate}[(1)]
    \item \textbf{Causal proxies} (Figures~\hyperref[fig:DAGs]{1(a)-(b)}). 
        Here, the proxies $\cA$ are direct descendants of the true network $\bA^\ast$.
         For example, such proxies represent noisy measurements of the true interference structure.
    \item \textbf{Non-causal proxies} (Figures~\hyperref[fig:DAGs]{1(c)-(d)}).
     In this scenario, both the true network $\bA^\ast$ and the proxies $\cA$ depend
      on additional latent variables $\bV$, and no direct edges in the DAG link $\bA^\ast$ and $\cA$.
       % Thus, conditionally on $\bX$ and $\bV$, the true network and the proxies are independent.
        For example, in a multilayer network \citep{kivela2014multilayer}, 
        each layer represents a distinct type of relationship. 
        The latent variables $\bV$ may represent latent positions \citep{Hoff2002}, 
        and the latent interference network $\bA^\ast$ corresponds to one of these layers.
\end{enumerate}
Next, we distinguish between two scenarios for the treatment assignment mechanism.
In the first, treatments are assigned based on the latent network $\bA^\ast$ (Figures~\hyperref[fig:DAGs]{1(a) and (c)}).
This \textbf{latent assignment} is typical of observational studies, where researchers do not control the treatment assignment, and $\bZ$ may be a function of the true network $\bA^\ast$ rather than the proxies.
% For example, in observational studies, where
% the researchers do not control the treatment assignment, $\bZ$ may be a function of the true latent network $\bA^\ast$ rather than the proxies $\cA$.
In the second scenario, treatments are assigned based on the proxy networks $\cA$ (Figures~\hyperref[fig:DAGs]{1(b) and (d)}).
% For example, if the DAGs correspond to observations from randomized controlled trials where the network is clustered and treatments are randomized to clusters \citep{ugander2013, Eckles2017}.
% When only the proxies $\cA$ are observed, treatment allocation will be a function of the proxies $\cA$ rather than the latent network $\bA^\ast$.
This \textbf{proxy assignment} is common in randomized trials, such as graph cluster randomization, where the network is clustered and treatments are randomized to clusters \citep{ugander2013, Eckles2017}.
As the true network $\bA^\ast$ is latent, treatment allocation in these designs will often be a function of the observed proxies $\cA$.

Each of the DAGs in Figure~\ref{fig:DAGs} implies a different set of 
structural equations for each of the four scenarios. We describe here the structural equations for the scenario of causal proxies with latent assignment (Figure~\hyperref[fig:DAGs]{1(a)}), and provide the set of equations corresponding to each of the other DAGs in Appendix~\ref{apdx.sec:SCM_extra}. 
The structural equations corresponding to Figure~\hyperref[fig:DAGs]{1(a)} are
\begin{equation}
    \label{eq:SCM}
    \begin{aligned}
        \bU_i &= f_U\big(\beps_{U_i} \big), &&\quad i \in [N] \\ 
        \bX_i &= f_X\big( \beps_{X_i} \big), &&\quad i \in [N] \\
        \bA^\ast &= f_{A^\ast}\big(\bX, \bU, \beps_{A^\ast}\big), \\
        \bA_b &= f_{A_b}\big(\bA^\ast,\bX,\beps_{A_b}\big) ,&&\quad b \in [B] \\
        Z_i &= f_Z\big(\bA^\ast, \bX_i, \bX_{-i},\varepsilon_{Z_i}\big), &&\quad i\in [N] \\
        Y_i &= f_Y\big(Z_i,\bX_i, \bZ_{-i},\bX_{-i},\bA^\ast, \bU_i, \varepsilon_{Y_i} \big) &&\quad i\in [N],
    \end{aligned}
\end{equation}
where $f_U, f_X, f_{A^\ast},f_{A_b}, f_Z, f_Y$ are fixed functions and $\beps_U, \beps_X, \beps_{A^\ast},\beps_{A_b}, \beps_Z, \beps_Y$ are noise terms. 
We assume the pairwise noise vectors are independent, e.g., $\beps_Z\indep \beps_Y$,  
but do not limit the dependence structure between units (e.g., the dependence of $\varepsilon_{Y_i}$
 and $\varepsilon_{Y_j}$). 
 The proxy models $f_{A_b}$ can be adapted to include settings where proxies also influence each other. For example, if the proxies follow an auto-regressive model where $\bA_b$ is a function of $\bA^\ast$ and all preceding proxies $\bA_1,\ldots, \bA_{b-1}$, we can modify $\bA_b$ model in \eqref{eq:SCM} to 
 \begin{equation*}
     \begin{aligned}
        \bA_1 &=
        f_{A_1}\big(\bA^\ast,\bX,\beps_{A_1}\big) ,&& \\
        \bA_b &= f_{A_b}\big(\bA_1,\ldots,\bA_{b-1},\bA^\ast,\bX,\beps_{A_b}\big) ,&&\quad b>1. \\
     \end{aligned}
 \end{equation*}
Specific network models for $\bA^\ast$ and proxies $\cA$ are detailed in Section~\ref{subsec:net_models}.
 In the non-causal proxies scenarios (Figures~\hyperref[fig:DAGs]{1(c) and 1(d)}),
  we add a model for the latent variables $\bV_i = f_{V}(\beps_{V_i}), 
  i\in [N]$, let $\bV$ influence $\bA^\ast$ by including $\bV$ in $f_{A^\ast}$, 
  and replace $\bA^\ast$ with $\bV$ in $f_{A_b}$. 
In the proxy-based treatment assignments (Figures~\hyperref[fig:DAGs]{1(b) and 1(d)}), 
we replace $\bA^\ast$ in $f_Z$ with the proxies $\cA$. Details are given in Appendix~\ref{apdx.sec:SCM_extra}.

% For example, in randomized controlled trials (RCTs), treatment assignment often depends on the network structure, e.g., by first clustering the network \citep{ugander2013,Eckles2017}.
% When only the proxies $\cA$ are observed, treatment allocation will depend directly on the proxies $\cA$ rather than on the latent network $\bA^\ast$,
% implying the edge $\cA \rightarrow \bZ$ while $\bA^\ast \not \rightarrow \bZ$ in Figures~\hyperref[fig:DAGs]{1(b)} and \hyperref[fig:DAGs]{1(d)}.
% On the other hand, in observational studies
% treatment assignment is not controlled by the researchers and therefore 
% a function the true latent network $\bA^\ast$ rather than the proxies $\cA$. 
% This situation is represented by the edge $\bA^\ast \rightarrow \bZ$ while $\cA \not \rightarrow \bZ$ in Figures~\hyperref[fig:DAGs]{1(a)} and \hyperref[fig:DAGs]{1(c)}.

Under all the SCMs, the outcomes $\bY$ are a function of all preceding variables except $\cA$. 
Specifically, $\bZ$ influences the outcome of each unit $Y_i$ through the direct treatment, $Z_i$, and the treatments of other units, $\bZ_{-i}$.
Structural assumptions regarding interference can be expressed by the specification of $f_Y$.
For example, if interference is assumed to occur only between neighbors in $\bA^\ast$, 
then $f_Y$ depends on $\bZ_{-i}$ only through $\bZ_{\mathcal{N}^\ast_i}$, 
where $\mathcal{N}^\ast_i = \{j : \bA^\ast_{ij}=1\}$ is the neighbor set of unit $i$ in $\bA^\ast$. 
The outcome model in \eqref{eq:SCM} includes high-dimensional variables: 
covariates of other units $\bX_{-i}\in \mathbb{R}^{(N-1) \times q}$, 
treatments of other units, $\bZ_{-i}\in \mathbb{R}^{N-1}$, 
and the true network $\bA^\ast \in \{0,1\}^{N\times N}$.  
Practical applications often require simplifying this complex dependence through modeling assumptions.
We assume that $Y_i$ depends on these variables only through 
summarizing functions $\phi_1, \phi_2$, and $\phi_3$, leading to a modified SCM for $\bY$
\begin{equation}
    \label{eq:SCM_mod}
    % \begin{aligned}
        % Z_i &= f_Z\big(\bX_i, \phi_2(\bX_{-i},\bA^\ast), \phi_{3,i}(\bA^\ast),\varepsilon_{Z_i}\big) &&\quad i\in [N] \\
        Y_i = f_Y\big(Z_i,\bX_i, \phi_1(\bZ_{-i},\bA^\ast), \phi_2(\bX_{-i},\bA^\ast),\phi_{3,i}(\bA^\ast), \bU_i, \varepsilon_{Y_i} \big), \; i\in [N].
    % \end{aligned}
\end{equation}
The function $\phi_1(\bZ_{-i},\bA^\ast)$ 
defines the \emph{exposure mapping},
summarizing treatments assigned to units other than $i$. 
For example, under the common assumption of neighborhood interference \citep{Forastiere2020,Ogburn2022}, 
this function simplifies to $\phi_1(\bZ_{\mathcal{N}^\ast_i},\bA^\ast)$, 
which depends solely on the treatments of immediate neighbors. For binary treatments, 
a typical example of such an exposure mapping is the proportion of treated neighbors,
$\phi_1(\bZ_{\mathcal{N}^\ast_i},\bA^\ast)=(d_i^{\ast})^{-1} \sum_{j \in \mathcal{N}^\ast_i}Z_j$,
where $d^\ast_i=\lvert \mathcal{N}^\ast_i \rvert$ is the degree of unit $i$. 
The neighborhood interference assumption can be relaxed, for example, by considering 
higher-order neighbors with decreasing magnitudes \citep{leung2022causal}. 
Heterogeneous effects can be included by defining appropriate $\phi_1 = \phi_1(\bZ_{\mathcal{N}^\ast_i},\bX_{\mathcal{N}^\ast_i},\bA^\ast)$.
The function $\phi_2(\bX_{-i},\bA^\ast)$ typically summarizes covariates of neighbors, for example, by using the mean or sum of neighbors' covariates \citep{Tchetgen2020, Ogburn2022}. 
Finally, the function $\phi_3(\bA^\ast)$ reduces the network $\bA^\ast$ into summary statistics such as node centrality, degree, or eigenvectors; see \citet{Kolaczyk2009} for additional examples and analysis of network summary statistics. Here, $\phi_{3,i}$ represents the statistics corresponding to unit $i$. 
% Although different functions $\phi_2$ and $\phi_3$ could be specified for $\bZ$ and $\bY$, we use identical functions for simplicity of presentation.

Figures~\hyperref[fig:DAGs]{1(a)-(c)} imply that all back-door paths between $\bZ$ and 
$\bY$ are blocked by conditioning on the covariates $\bX$ and the true network $\bA^\ast$, in those three scenarios\footnote{A subtle case 
arises in Figure~\hyperref[fig:DAGs]{1(d)} in the scenario of non-causal proxies $\cA$ and proxy-based treatment assignment where $\cA \rightarrow \bZ$. In that case, $\bA^\ast$ is a collider since conditioning on it opens the back-door path 
$\bY \leftarrow \bU \rightarrow \bA^\ast \leftarrow \bV \rightarrow \cA \rightarrow \bZ$.
However, this back-door path can be blocked by the observed proxies $\cA$.}.
Our SCMs explicitly allow latent variables 
$\bU$ to affect both the true interference network $\bA^\ast$ and the outcomes $\bY$, but exclude direct effects of $\bU$ on treatments $\bZ$. 
Such latent confounding might arise due to \emph{latent homophily} \citep{Shalizi2011}, 
where unmeasured similarities among units influence both network formation and outcomes
 --- a phenomenon also known as \emph{network endogeneity} \citep{pinkham2013}.
As we discuss in Section~\ref{sec:inference}, our Bayesian inferential framework
can naturally accommodate this latent confounding.
% Nevertheless, certain causal estimands, such as those involving interventions on the network structure,
% are generally not identifiable from the observed data alone,
% thus their estimation relies on additional model assumptions and priors (Section~\ref{sec:discussion}).

We adopt the standard positivity assumption from graphical models \citep{Pearl2009}, requiring that all variables have positive joint probability across their respective supports, formally $p(\bY,\bZ,\cA,\bA^\ast,\bV,\bX,\bU)>0$. This assumption can be relaxed to conditional positivity, requiring positivity to hold only given $\bV,\bU,\bX$.

\subsection{Causal Estimands}
\label{subsec:estimands}

Our analysis focuses on estimands that quantify the effects of hypothetical interventions on the population-level treatments, corresponding to various treatment assignment policies.  
Although the definition of estimands does not depend on the exposure mapping $\phi_1$, 
their numerical values rely on modeling assumptions. We explicitly distinguish between structural assumptions, such as those concerning the exposure mapping, and causal estimands, which represent interventions directly on the population treatments $\bZ$ \citep{Saevje2023}. 
Interventions on exposure levels (defined by $\phi_1$) 
are generally not feasible in practice, especially at the population level. 
Throughout the paper, we treat the population covariates $\bX$ as fixed. 
All estimands we define are explicitly conditioned on the true latent network $\bA^\ast$
 and covariates $\bX$, although we often omit this conditioning for brevity. 
 Following the terminology in \citet{Li2023}, the estimands can be viewed as mixed average treatment effects.

Let $Y_i(\bz)$ be the outcome of unit $i$ under an intervention that sets the treatment vector $\bZ$
to a specific value $\bz$ with probability one, 
similar to the operator $do(\bZ=\bz)$ \citep{Pearl2009}. 
Using the SCM notations in \eqref{eq:SCM}, $Y_i(\bz)$ corresponds to the outcome under the intervention that sets
\begin{equation*}
    \begin{aligned}
        Z_i &= z_i &&\quad i\in [N] \\
        Y_i &= f_Y\big(z_i,\bX_i, \bz_{-i},\bX_{-i},\bA^\ast, \bU_i, \varepsilon_{Y_i} \big) &&\quad i\in [N].
    \end{aligned}
\end{equation*}
We consider estimands corresponding to three distinct types of treatment policies:
 \emph{static}, \emph{dynamic}, or \emph{stochastic} \citep{Pearl2009,Ogburn2022}. 
Let $\mu_i(\bz) \equiv \mu_i(\bz; \bA^\ast,\bX) = \bbE[Y_i(\bz) \mid \bA^\ast, \bX]$ be the expected outcome for unit $i$ under an intervention setting the treatment vector $\bZ$ to $\bz$, given the true network and covariates.

We define the \emph{static policy} as setting the treatment vector $\bZ$ to a fixed value $\bz$.
 Let $\mu(\bz) = N^{-1}\sum_{i=1}^{N}\mu_i(\bz)$. 
 The corresponding estimand compares two static interventions
\begin{equation*}
    % \label{eq:avg_static_po}
  % $
    % \mu(\bz) = N^{-1}\sum_{i=1}^{N}\mu_i(\bz).
    \tau(\bz,\bz') = \mu(\bz) - 
    \mu(\bz').%, \; \text{where}\; \mu(\bz) = N^{-1}\sum_{i=1}^{N}\mu_i(\bz).
  % $
\end{equation*}
For example, for binary treatments,  $\tau(\boldsymbol{1},\boldsymbol{0})$ 
is the TTE -- the effect of treating all units versus treating none. 
% commonly called the total treatment effect \citep{Yu2022}. 
% For continuous treatments, one might consider a marginal perturbation of the observed treatments, such as $\tau(\bZ + \boldsymbol{\delta},\bZ)$ for some small $\boldsymbol{\delta} \in \mathbb{R}^N$.

A \emph{dynamic policy} assigns treatments deterministically based on covariates and the network structure. Let $h(\bX)$ denote a deterministic function that sets treatments based on covariates $\bX$. With a slight abuse of notation, we write the corresponding average expected outcome as $\mu(h) = N^{-1}\sum_{i=1}^{N}\mu_i(h)$. The estimand
\begin{equation*}
    % $ 
    % \label{eq:dynamic_estimand}
    % \mu(h) = N^{-1}\sum_{i=1}^{N}\mu_i(h).
    \tau(h_1,h_0)=\mu(h_1)-\mu(h_0),
    % $
\end{equation*}
compares two dynamic policies, $h_1$ and $h_0$. For example,
 comparing two different age thresholds when studying policies that treat all individuals 
 above a certain age.
 % , or evaluating the impact of treating only highly influential units in the network.

Finally, a \emph{stochastic policy} assigns treatments according to a distribution $\pi_{\balpha}(\bz)$, parameterized by 
$\balpha$. The policy $\pi_{\balpha}$ can depend on $\bX$, but we omit this option for brevity. 
Under such a policy, the intervention $do(\bZ=\bz)$ occurs with probability $\pi_{\balpha}(\bz)$ \citep{Pearl2009}. The average of expected outcomes marginalized over the stochastic policy is
\begin{equation*}
    % \label{eq:mean_stochastic_inter}
    % $
    \mu(\balpha) = N^{-1}\sum_{i=1}^{N}\sum_{\bz}\pi_{\balpha}(\bz)\mu_i(\bz).
    % $
\end{equation*}
Accordingly, the estimand $\tau(\balpha_1,\balpha_0)$ 
evaluates the impact of implementing one stochastic policy $\pi_{\balpha_1}$ 
compared to another $\pi_{\balpha_0}$. For example, the effect of randomly treating
 $80\%$ versus $20\%$ of the population.
The estimand $\tau(\balpha_1,\balpha_0)$ is similar to the overall causal effect defined by
 \citet{Hudgens2008}. Compared to \citet{Hudgens2008}, who considered a design-based framework,
  we replace the outcomes under the intervention with their expected value. 
  Similar approaches were taken by others \citep{Tchetgen2020, Ogburn2022}. 

Under the SCMs described in Figures~\hyperref[fig:DAGs]{1(a)--(c)}, all back-door paths between $\bZ$ and $\bY$ are blocked by conditioning on $\bX$ and $\bA^\ast$. Thus, using the back-door adjustment formula \citep{Pearl2009}, we have
\begin{equation*}
% \label{eq:identitfication}
        % $
        \mu_i(\bz) = \bbE[Y_i \mid \bZ=\bz,\bA^\ast,\bX]
        % $.
\end{equation*}
Under the modified SCM \eqref{eq:SCM_mod}, this further simplifies to
$\bbE[Y_i \mid z_i,\bX_i, \phi_1(\bz_{-i},\bA^\ast),\phi_2(\bX_{-i},\bA^\ast),\phi_{3,i}(\bA^\ast)]$.  In the case of Figure~\hyperref[fig:DAGs]{1(d)},
where conditioning on $\bA^\ast$ opens a back-door path between $\bZ$ and $\bY$,
an additional conditioning on the proxies $\cA$ is required for identification.
The back-door adjustment is therefore $\mu_i(\bz) = \bbE[Y_i \mid \bZ=\bz,\cA,\bA^\ast,\bX]$.

Since the interference network $\bA^\ast$
and the parameters governing the conditional expectation of outcomes $\bY$ are both latent, all estimands depend on these unknowns. The interpretation of the estimands is therefore conditional on the true interference network $\bA^\ast$,
similar to \citet{Ogburn2022}, with the distinction that here $\bA^\ast$ is latent. Thus, the interpretation of the estimands is the same had the network been known. Note also that because we consider interventions on the population treatments, the interpretation of the causal estimands does not need to take the network into account. For example, the interpretation of randomly treating 80\% versus 20\% of the population remains the same.
In Section~\ref{subsec:bayes_g_formula}, we describe how to estimate the estimands using samples from the posterior distribution. 
% Before presenting the estimation scheme, we first review types of proxy measurements and potential models for the true and proxy networks.

\subsection{Probabilistic Models}
\label{subsec:prob_models}

While the SCMs described above provide a general causal framework, practical applications require specifying probabilistic models for some of the structural equations. 
The components requiring modeling depend on the assumed SCM structure. For example, whether the proxies are causal or not and whether treatment assignment is based on the latent network or observed proxies. Throughout our analysis, we treat the covariates $\bX$ as fixed.

Consider the setting of causal proxies with latent assignment (Figure~\hyperref[fig:DAGs]{1(a)}).
In this case, researchers must specify probabilistic models for the outcome, treatment assignment, latent network, and proxy networks measurement mechanisms.
Let $\boeta$ parameterize the outcome model $p(\bY \mid \bZ,\bA^\ast,\bX,\boeta)$,
$\bbeta$ parameterize the treatment assignment model $p(\bZ \mid \bA^\ast,\bX,\bbeta)$,
$\btheta$ the true network model $p(\bA^\ast \mid \bX,\btheta)$, 
and $\bgamma$ the observed proxy networks model $p(\cA \mid \bA^\ast, \bX,\bgamma)$.
Let $\bTheta = (\boeta, \bbeta, \btheta, \bgamma)$ denote the set of all the structural model parameters. Under this specification, the complete data likelihood of $(\bD, \bA^\ast)$  can be factorized according to Figure~\hyperref[fig:DAGs]{1(a)} as
\begin{equation}
    \label{eq:data_factorization}
    p(\bD, \bA^\ast \mid \bTheta) = p(\bY \mid \bZ,\bA^\ast,\bX, \boeta)
    p(\bZ \mid \bA^\ast, \bX, \bbeta)p(\cA \mid \bA^\ast, \bX, \bgamma)p(\bA^\ast \mid \bX, \btheta),
    % p(\bY,\bZ,\cA,\bA^\ast \mid\bX) = p(\bY \mid \bZ,\bA^\ast,\bX)p(\bZ \mid \bA^\ast, \bX)p(\cA \mid \bA^\ast, \bX)p(\bA^\ast \mid \bX).
\end{equation}
highlighting that the complete data likelihood is composed of three data modules (outcomes, treatments, and proxies) and the prior network model.

In scenarios with proxy-based treatment assignment (Figures~\hyperref[fig:DAGs]{1(b) and (d)}), the treatment assignment model $p(\bZ \mid \cA, \bX)$ may depend on the proxies $\cA$ but not on $\bA^\ast$. Similarly, in non-causal proxy settings (Figures~\hyperref[fig:DAGs]{1(c)-(d)}), the proxy model $p(\cA \mid \bA^\ast, \bX, \bgamma)$ and true network model $p(\bA^\ast \mid \bX, \btheta)$ are replaced by models conditioned on the latent common causes $\bV$, i.e., $p(\cA \mid \bV, \bX, \bgamma)$ and $p(\bA^\ast \mid \bV, \bX, \btheta)$.
Scenarios with latent variables $\bU$ are discussed in Appendix~\ref{apdx.sec:posterior}.

\section{Proxy Measurements and Network Models}
\label{sec:proxies}

\subsection{Types of Proxy Networks}
% \subsection{Types of Proxy Measurements}
\label{subsec:sources}

We categorize the observed proxy measurements $\cA$ of the true latent interference network $\bA^\ast$ into the following types:

\begin{enumerate}[(i)]
\item \textbf{Measurement error.} Researchers observe a single \citep{Butts2003} or multiple \citep{de2023latent} proxy networks, each reflecting the true network $\bA^\ast$ with varying degrees of error, which can depend on covariates. For unweighted networks, measurement error corresponds to edge misclassification.
For instance, in survey-derived networks, respondents might forget certain connections or misunderstand reporting instructions, resulting in missing or spurious edges. 

Another scenario occurs under partial interference assumptions \citep{Hudgens2008}, where researchers observe networks composed of distinct clusters, while the true interference network may have cross-cluster relations, thus ignoring potential cross-cluster interactions \citep{Weinstein2023}. 

\item \textbf{Multiple sources.} Researchers obtain several proxies derived from distinct data sources, each potentially providing complementary information about the latent interference network. These sources can include online platforms, surveys, biological measurements, physical traces, or spatial data, each capturing different relational aspects. For example, \citet{Goyal2023} combined epidemiological contact tracing, genetic testing, and behavioral surveys to estimate disease transmission networks.
Although valuable, integrating multiple proxy networks poses inference challenges, as proxies may differ in their accuracy and relevance for capturing the interference mechanism under study

\item \textbf{Multilayer networks.}
Researchers collect multiple networks representing distinct types of relationships among the same set of units \citep{kivela2014multilayer}. 
Unlike multiple-source proxies that offer complementary information about a single underlying structure, multilayer networks explicitly represent fundamentally different social relations.
An example would be one network layer representing friendship ties and another representing work collaboration.
Within our framework, multilayer networks can be represented using two modeling approaches.
In the first, the observed layers are non-causal proxies (Figures~\hyperref[fig:DAGs]{1(c)-(d)}), 
generated from shared latent positions \citep{salter2017latent,Sosa_Betancourt_2022}.
Alternatively, one can view the true latent network $\bA^\ast$ as a common ancestor 
of all the observed layers, representing causal proxies (Figures~\hyperref[fig:DAGs]{1(a)-(b)}). 
\end{enumerate}

Next, we show how these different types of proxies can be expressed mathematically within our modeling framework.

\subsection{True and Proxy Network Models}
\label{subsec:net_models}

Statistical analysis of network data encompasses a wide variety of models suited for
 different applications \citep{Fienberg2012}. 
Common examples include random graphs 
\citep{erdHos1959random}, stochastic block models (SBM) \citep{Holland1983}, exponential random graphs \citep{Holland1981}, 
and latent space models (LSM) \citep{Hoff2002}.
Our framework is flexible regarding the choice of the model for the true network $\bA^\ast$,
 subject to two practical constraints: computational feasibility and parameter identifiability.
 We further discuss the parametric identifiability of the structural model parameters $\bTheta$ in Section~\ref{sec:identifiability}.

The observed proxy network(s) can be modeled to reflect different measurement scenarios. 
We now illustrate how some of the previously mentioned examples (Section \ref{subsec:sources}) can be formalized within our framework.

\begin{exmp}[Random noise]
    \label{exmp:rnd.noise}
  A single network $\bA$ is observed, generated by independently measuring each edge of the latent network $\bA^\ast$ with random error
  % The false and true positive rates are denoted by $\gamma_0$ and $\gamma_1$, respectively.
  % Formally,
    % \begin{equation*}
    %     p(\bA \mid \bA^\ast, \bX,\bgamma) =
    %     \prod_{i>j}\Pr(A_{ij} \mid A^\ast_{ij},\bgamma),
    % \end{equation*}
    % where
    $\Pr(A_{ij}=1 \mid A^\ast_{ij}=k,\bgamma) = \gamma_k,\; k=0,1$, corresponding to false positive rate when $k=0$ and true positive rate when $k=1$. Extending to covariate-dependent measurement error is possible by modeling the error rates as a function of covariates $\bX$.
\end{exmp}

\begin{exmp}[Censoring]
    \label{exmp:censor}
    The observed network $\bA$ is a censored version of $\bA^\ast$, namely,  each unit is allowed to report at most $C>0$ edges (friends) from its true connections \citep{Griffith2021}. If unit $i$ has more than $C$ true edges (i.e., if $d^\ast_i>C$ where $d^\ast_i$ is its degree in $\bA^\ast$), assume the unit independently reports a random subset of $C$ neighbors; otherwise, it reports all. 
    Assuming an edge is observed if either node reports it and no false positives, i.e.,
    $\Pr(A_{ij}=1 \mid A^\ast_{ij}=0)=0$, the reporting probability for existing edges is
\begin{equation*}
\Pr(A_{ij}=1 \mid A^\ast_{ij}=1) =
1 - 
\Bigg[1- \min\left(1, \frac{C}{d^\ast_i}\right)\Bigg]
\times
\Bigg[1- \min\left(1, \frac{C}{d^\ast_j}\right)\Bigg].
\end{equation*}
 % Covariate dependence can be included by adjusting the reporting probability.
    % Dependence on covariates $\bX$ can be added, e.g., by taking the friend's selection process as a function of $\bX$. Possible misclassification of reported edges can be modeled by multiplying the friend's selection probabilities $\min(1,C/d_i)$ by some parameter similar to Example~\ref{exmp:rnd.noise}.
\end{exmp}

\begin{exmp}[Cross-cluster contamination]
\label{exmp:contamination}
A single network $\bA$ consisting of well-separated clusters is observed, while $\bA^\ast$ is generated from a network model that may include cross-cluster edges, such as SBM \citep{Holland1983}. Researchers assume that interference is only possible within clusters (partial interference), but cross-cluster contamination exists. Let $X_{1,i}$ denote the  cluster of unit $i$. One possible measurement model is
    \begin{equation*}
        \Pr(A_{ij}=1 \mid A^\ast_{ij}=k, X_{1,i}, X_{1,j}, \bgamma) = \gamma_k \mathbb{I}\{X_{1,i} = X_{1,j}\}, \;  k=0,1,
    \end{equation*}
     with false positive rate $\gamma_0$ and true positive rate $\gamma_1$ within clusters, and no edges observed between clusters. This model can be generalized, e.g., by replacing $\gamma_k$ with $\gamma_{k,ij}$ that depends on the cluster memberships. 
 While the proxy network $\bA$ provides no information regarding cross-cluster edges, our estimation framework allows for learning these structures through other available modules, as described in Section~\ref{sec:inference}.
\end{exmp}

\begin{exmp}[Repeated noisy measurements]
\label{exmp:repeated.noisy}
    Multiple proxy networks are observed ($B > 1$), each is a noisy measurement of the latent network $\bA^\ast$ \citep{de2023latent}.
    Each proxy follows some measurement model
    % \begin{equation*}
    $
    p(\bA_b \mid \bA^\ast,\bX,\bgamma_b) 
    % =  \prod_{i>j} \text{Pr}_b (A_{b,ij} \mid A^\ast_{ij},\bgamma_b),
    $
    % \end{equation*}
    which may be identical or vary across proxies. 
    Dependence among proxies can be modeled hierarchically or temporally. In a temporal setting, the measurement of network $\bA_b$ can follow an auto-regressive model, e.g., by taking $p(\bA_b \mid \bA_1,\ldots,\bA_{b-1}, \bA^\ast, \bX, \bgamma_b)$.
    
    % for some function $\Pr_b(A_{b,ij} \mid A^\ast_{ij},\bgamma_b)$ (e.g., as in  Examples \ref{exmp:rnd.noise}-\ref{exmp:censor}), and as a special case the models can be identical, i.e., $\Pr_b(A_{b,ij} \mid A^\ast_{ij},\bgamma_b)= \Pr(A_{b,ij} \mid A^\ast_{ij},\bgamma_b)$. The joint distribution of $\bA_1,\dots,\bA_B$ can be independent or dependent (e.g., hierarchical model). It can be extended for auto-regressive models by viewing $B$ as time and taking $p(\bA_b \mid \bA^\ast, \bX, \bA_{<b}, \gamma_b)$ where $\bA_{<b}$ are all proxies measured prior to $\bA_b$.
\end{exmp}

\begin{exmp}[Multilayer networks]
\label{exmp:multi.layer.nets}
Researchers observe multiple networks on the same $N$ units, each representing distinct relationship types. As previously discussed, these observed multilayer networks can be modeled under two different assumptions about their relationship with the latent network $\bA^\ast$:
\begin{enumerate}[(a)]
\item \textbf{Non-causal proxies.} Both the true network $\bA^\ast$ and the observed proxies $\cA$ 
depend on common latent variables $\bV$ (Figure~\hyperref[fig:DAGs]{1(a)-(b)}).
 Given these latent variables and covariates $\bX$,
 the observed proxies and
 the latent network are independent. For instance, each network layer (including $\bA^\ast$)
  can be generated from a latent space model \citep[LSM;][]{Hoff2002}, 
  where $\bV$ represents latent unit-level positions. 
Latent positions $\bV$ can either be shared across all layers, 
including the latent network \citep{salter2017latent}, or structured hierarchically \citep{Sosa_Betancourt_2022}. 
Under this setup, proxy networks $\cA$ provide information about $\bA^\ast$ only indirectly through 
their common latent positions $\bV$.
\item  \textbf{Causal proxies.} 
The network $\bA^\ast$ is a latent predecessor of all the observed proxies $\cA$ (Figure~\hyperref[fig:DAGs]{1(c)-(d)}).  
For example, each proxy network layer $b$ can be modeled as
\begin{equation*}
\text{Pr}_{b}(A_{b,ij}=1 \mid A^\ast_{ij},\bX,\bgamma_b) = r^{-1}\big(\gamma_{b,0} A^\ast_{ij} + \bgamma_b'\widetilde{\bX}_{ij}\big),
\end{equation*}
where $r: [0,1] \to \mathbb{R}$ is a link function and $\widetilde{\bX}_{ij}$ are edge-level covariates, derived from the nodal covariates $\bX_i$ and $\bX_j$ (e.g., as done in Section~\ref{sec:sim}).
\end{enumerate}
   
    % Researchers observe multi-layer collection of networks $\bA = (\bA_1,\dots,\bA_B)$. Each $\bA_b$ measures different social relationships. It is possible to model the observed multi-layer networks using LSM with shared latent positions:
    % \begin{equation*}
    %     p(\bA_b \mid \bX,\bU, \gamma_b) = \prod_{i>j}\psi(\gamma_b - \lVert \bU_i - \bU_j \rVert)^{A_{b,ij}}(1-\psi(\gamma_b - \lVert \bU_i - \bU_j \rVert))^{1-A_{b,ij}},
    % \end{equation*}
    % for $\psi: \mathbb{R} \to [0,1]$ and latent positions $\bU_i \in \mathbb{R}^d$. Covariates can also be included. Dependence between the networks can be modeled with multinomial \citep{salter2017latent} or hierarchical \citep{Sosa_Betancourt_2022} models. 
    % The true network $\bA^\ast$ can be viewed as either a ``consensus" or aggregate version or a latent ``ancestor" of the observed proxy networks. That can be achieved with a simple union \citep{salter2017latent}, or in a model-based approach \citep{Sosa_Betancourt_2022}.
\end{exmp}

\section{Identifiability of Structural Parameters}
\label{sec:identifiability}
Under the proposed class of SCMs (Figure~\ref{fig:DAGs}), establishing the identifiability of the structural parameters $\bTheta$ is a fundamental prerequisite for valid statistical inference.
It ensures that the true structural parameters $\bTheta$ can, in principle,
be uniquely recovered from the distribution of the observed data $\bD = (\bY, \bZ, \cA, \bX)$.
In the absence of identifiability, the posterior distribution in 
a Bayesian framework (Section~\ref{sec:inference})
may be heavily influenced by the prior, as the data alone are insufficient to uniquely recover the parameters.
% may essentially 
% reflect the prior rather than the information contained in the data.

In our framework, the network $\bA^\ast$ is latent, rendering the observed data likelihood
a complex mixture model of the 
complete data likelihood \eqref{eq:data_factorization}, $p(\bD \mid \bTheta) = \sum_{\bA^\ast} p(\bD,\bA^\ast \mid \bTheta)$.
Since this summation is computationally intractable due to the high-dimensional discrete space of $\bA^\ast$,
establishing global identifiability analytically for the general case is prohibitively complex.

Instead, we investigate the global identifiability of the structural parameters in simplified settings.
We illustrate how the conditional independence structure implied by the SCMs allows for the unique recovery of parameters using moment equations \citep{rothenberg1971}.
Specifically, because the outcomes and proxy networks are conditionally independent given the latent network $\bA^\ast$, their joint moments provide distinct moment equations that can be solved to recover the parameters $\bTheta$.
This analysis highlights the mechanism through which the distinct data modules inform the structural parameters.
This perspective aligns with the identification results for latent 
variable models using at least three independent views \citep{Allman2009}.

In addition, we discuss sufficient conditions for local identifiability, which require that the likelihood function is not flat in a neighborhood of the true parameters.
This is assessed via the nonsingularity of the Fisher Information Matrix (FIM).
In Appendix~\ref{apdx.sec:identif}, we utilize the Missing Information Principle \citep{Louis1982} to characterize the observed data FIM as the difference between the complete data information and the missing information induced by the latent network.
Intuitively, we show that the parameters are locally identifiable if the information provided by the distinct data modules dominates the uncertainty surrounding the latent network.
Details are provided in Appendix~\ref{apdx.sec:identif}.

\begin{remark}[From Structural Parameters to Causal Estimands]
\label{remark:1}
% It is important to clarify the relationship between the identification of the structural parameters $\bTheta$ and the causal estimands defined in Section~\ref{subsec:estimands}.
The causal estimands defined in Section~\ref{subsec:estimands} are functions of the specific realization of the latent network $\bA^\ast$.
Consequently, establishing the identifiability of $\bTheta$ does not imply that the causal effects are nonparametrically identified, as $\bA^\ast$ remains latent.
Instead, the identification of $\bTheta$ allows for the recovery of the conditional distribution of the latent network given the data, $p(\bA^\ast \mid \bD, \bTheta)$, and subsequently, the distribution of the causal estimands.
This distinction is central to our framework: we do not seek to recover a point estimate of the causal effects, but rather to recover their posterior distribution.
This motivates our Bayesian inference approach in Section~\ref{sec:inference}, which targets this full posterior distribution.
\end{remark}

\subsection{Global Identifiability In Simplified Settings}

Global identifiability requires that the map from the structural parameters $\bTheta$ to the observed data likelihood $p(\bD \mid \bTheta)$ is injective. 
Define the \textit{moment map} $\mathcal{M}(\bTheta) = \bbE_{\bD \mid \bTheta}[\mathbf{m}(\bD)]$ as the expected value of a vector of observable moments $\mathbf{m}(\bD)$.
The parameter $\bTheta$ is globally identified with respect to these moments if
$\mathcal{M}(\bTheta)$ is injective \citep{rothenberg1971}.

In the context of the proposed SCMs, we construct these moment equations by exploiting the conditional independence structure of the model.
For instance, in Figure~\ref{fig:DAGs}(a), the independence of outcomes $\bY$ and proxies $\cA$ given the latent network $\bA^\ast$ and covariates $\bX$ implies that the observed correlation (given $\bX$) between $\bY$ and $\cA$ is entirely mediated by $\bA^\ast$.
Consequently, the observable cross-moments become explicit functions of $\bTheta$.
By deriving these analytical expectations, we can construct a system of moment equations and assess the injectivity of the map.
We now demonstrate this identification strategy by explicitly deriving the system of moment equations for a simplified class of models.

Consider a setup without covariates,
where the true network $\bA^\ast$ follows an Erd\H{o}s-R\'enyi model \citep{erdHos1959random},
the outcome model is linear with exposure mapping $\phi_1$ (as in \eqref{eq:SCM_mod})
representing the number of treated neighbors, 
and proxy network measurements are subject to random measurement error as in Example~\ref{exmp:rnd.noise}.
The models are specified as follows:
\begin{equation}
    \label{eq:simple_model}
    \begin{aligned}
        A^\ast_{ij} &\sim \text{Ber}(\theta), \\
        A_{ij} \mid A^\ast_{ij} = k &\sim \text{Ber}(\gamma_k), \; k=0,1,\\
        Z_i &\sim \text{Ber}(p_z), \\
        Y_i &= \eta_1 Z_i + \eta_2 \sum_{j \neq i} A^\ast_{ij} Z_j + \varepsilon_i, \\
    \end{aligned}
\end{equation}
where $\varepsilon_i$ are independent and zero-mean error terms.
The set of parameters is $\bTheta = (\theta, \gamma_0, \gamma_1, p_z, \eta_1, \eta_2)$.
The parameters $p_z$ and $\eta_1$ are identified from the observed first moments $\bbE[Z_i]$ and 
$\bbE[Y_i \mid Z_i=1] - \bbE[Y_i \mid Z_i=0]$, respectively.
Deriving the analytical moments (Appendix~\ref{apdx.sec:identif}),
 we obtain the following system of three equations:
\begin{align*}
    \bbE[Y_i] &= \eta_1 p_z + \eta_2 (N-1) p_z \theta, \\
    \Pr(A_{ij}=1) &= \gamma_1 \theta + \gamma_0 (1-\theta), \nonumber \\
    \text{Cov}(Y_i, d_i^{obs}) &= \eta_2 (N-1) p_z \theta (1-\theta) (\gamma_1 - \gamma_0),
\end{align*}
where $d_i^{obs} = \sum_{j \neq i} A_{ij}$ is the observed degree of unit $i$ in the proxy network.
This system presents three equations for four unknowns ($\theta, \gamma_0, \gamma_1, \eta_2$). 
Consequently, the simplified model \eqref{eq:simple_model} with
 a single proxy network is not globally identifiable without further restrictions.
However, this lack of identification is resolved under several realistic extensions of the model,
which provide the necessary additional information to achieve global identifiability (see Appendix~\ref{apdx.sec:identif} for derivations):
\begin{enumerate}
    \item \textbf{Constraints on Measurement Error:} If reliability data is 
    available such that the false positive rate $\gamma_0$ is exogenously identified (or fixed to zero),
     the system reduces to three unknowns, rendering the parameters globally identifiable.
    
    \item \textbf{Multiple Proxies:} When a second proxy network $\bA_2$ is available,  such that $\bA_2\indep \bA_1| \bA^\ast$, and  $\bA_1$ and $\bA_2$ are identically distributed,  we can utilize the expected disagreement between the proxies 
    $\frac{2}{N(N-1)} \sum_{i<j} |A_{1,ij} - A_{2,ij}|$.
    % , we derive a fourth independent equation:
    % \begin{equation*}
    %     \bbE[m_{AD}] = 2 \left[\gamma_1 (1-\gamma_1) \theta + \gamma_0 (1-\gamma_0) (1-\theta)\right].
    % \end{equation*}
    This additional moment allows for the unique recovery of all four parameters ($\theta, \gamma_0, \gamma_1, \eta_2$).
    
    \item \textbf{Network-Dependent Treatment:} 
    Consider settings where treatment assignment depends on the latent network (Figures~\hyperref[fig:DAGs]{1(a),(c)}). For example, the treatment assignment mechanism is $\Pr(Z_i=1 \mid \bA^\ast) \propto d_i^\ast$, 
    where $d_i^\ast$ is the degree of $i$ in the true network.
    The observed treatments act as an additional proxy source.
    Treatments are independent only conditional on the latent network,
    thus $\bbE[Z_i Z_j]$ provides an additional equation involving $\theta$ and $p_z$.
    Also, the covariance $\text{Cov}(Z_i, d_i^{obs})$ provides an additional
     equation linking $\theta$ and the proxy error rates.
     Thus, we have over-identification of the structural parameters.
     % Global identifiability is thus achieved.

     \item \textbf{Network-Correlated Outcomes:} In many applications,
      the outcome errors $\varepsilon_i$ are correlated among neighbors. 
      For example, $\text{Cov}(\varepsilon_i, \varepsilon_j) = \rho A^\ast_{ij}$.
       While this introduces a new parameter $\rho$,
    it allows us to leverage the second moments of the outcome.
     Specifically, the conditional expectations $\bbE[Y_i Y_j \mid A_{ij}=k]$ for $k \in \{0,1\}$ provides two distinct equations that allow for the unique identification of the full parameter set, including $\rho$.
\end{enumerate}
These results highlight a fundamental insight:
 while the latent nature of $\bA^\ast$ poses a challenge, 
the parameters $\bTheta$ are recoverable provided there 
 is sufficient information from the outcomes, treatments, and proxy networks.
Furthermore, under \eqref{eq:simple_model}, had we ignored the outcomes and treatments 
and tried to identify $(\gamma_0, \gamma_1, \theta)$ from the proxy networks $\cA$ alone,
we would have needed at least three independent proxy networks to achieve identifiability \citep{chang2022estimation}.
That illustrates that in our framework, each data module (e.g., outcomes, treatments, proxies) contributes to identifying the structural parameters.
That is, $\theta$ and $\gamma$ are recovered not just using the proxies $\cA$, but also through information contained in the outcomes $\bY$ and possibly the treatments $\bZ$ modules.

\section{Bayesian Estimation and Inference}
\label{sec:inference}

We propose a Bayesian framework for estimation.
We describe here inference in the scenario without latent variables $\bU$ and with causal proxies $\cA$ (Figures~\hyperref[fig:DAGs]{1(a)-(b)}). 
The details for the scenarios involving latent variables $\bU$ or 
non-causal proxies (Figures~\hyperref[fig:DAGs]{1(c)-(d)}) are 
given in Appendix \ref{apdx.sec:posterior}.

We first consider the setting of treatment assignment based on the latent network  (Figure~\hyperref[fig:DAGs]{1(a)}).
% Because all estimands of interest are conditional on the observed covariates $\bX$, we replace the covariates distribution $p(\bX)$ with a point mass. 
We assume prior independence $p(\boeta, \bbeta, \btheta,\bgamma)=p(\boeta)p(\bbeta)p(\btheta)p(\bgamma)$.
Recall that the observed data are $\bD = (\bY,\bZ,\cA,\bX)$. 
The latent variables are the true network $\bA^\ast$ and 
the parameters $\bTheta = (\boeta, \bbeta, \btheta,\bgamma)$. 
The factorization \eqref{eq:data_factorization} and prior independence yield the joint posterior
\begin{equation}
\label{eq:observed.posterior}
    \begin{aligned}
  p(\bTheta,\bA^\ast \mid \bD) 
    \propto
    &\; p(\bY \mid \bZ,\bA^\ast,\bX,\boeta) p(\boeta) 
    \\ & \times
    p(\bZ \mid \bA^\ast, \bX,\bbeta) p(\bbeta)
    \\ & \times
     p(\cA \mid \bA^\ast, \bX,\bgamma) p(\bgamma)
    \\ & \times
     p(\bA^\ast \mid \bX,\btheta) p(\btheta).
    \end{aligned}
\end{equation}
% Due to prior independence, the treatment assignment model $p(\bZ \mid \bA^\ast, \bX, \bbeta)$ provides information only about $\bbeta$ and the latent network $\bA^\ast$.
% That is, the treatment model does not inform the outcome model parameters $\boeta$ directly,
% but only indirectly through $\bA^\ast$.
When treatment assignment is based on the proxy networks (Figure~\hyperref[fig:DAGs]{1(b)}), 
 the treatment model becomes $p(\bZ \mid \cA, \bX, \bbeta)$. 
 Consequently, by replacing $p(\bZ \mid \bA^\ast, \bX,\bbeta)$ with $p(\bZ \mid \cA, \bX, \bbeta)$ in \eqref{eq:observed.posterior}, we can marginalize 
the posterior $p(\bTheta,\bA^\ast \mid \bD)$  over $\bbeta$, yielding
\begin{equation}
\label{eq:observed.posterior.obs}
\begin{aligned}
    p(\boeta, \btheta,\bgamma,\bA^\ast \mid \bD) 
      \propto &\; 
      \int_{\bbeta} p(\bTheta,\bA^\ast \mid \bD) d\bbeta
    \\ \propto &\; 
    p(\bY \mid \bZ,\bA^\ast,\bX,\boeta) p(\boeta) 
    \\ & \times
     p(\cA \mid \bA^\ast, \bX,\bgamma) p(\bgamma)
    \\ & \times
     p(\bA^\ast \mid \bX,\btheta) p(\btheta).
\end{aligned}
\end{equation}
In either scenario \eqref{eq:observed.posterior} or \eqref{eq:observed.posterior.obs}, augmenting the outcome model with propensity scores as additional covariates is possible but often requires the cutting of `model feedback' \citep{zigler2013}.
We describe such extensions in Appendix \ref{apdx.sec:posterior}.
% Additionally, if latent network confounding, represented by latent variables $\bU$ (Section \ref{subsec:struc}), is present, it can be modeled and adjusted for \citep{McFowland2021, um2024bayesian}. We provide details for these extensions in Appendix \ref{apdx.sec:posterior}.

Sampling from the joint posterior
\eqref{eq:observed.posterior} or \eqref{eq:observed.posterior.obs} 
is challenging due to its mixed space, 
which includes continuous parameters $\bTheta$ and 
discrete latent variables $\bA^\ast$. The discrete space grows super-exponentially with $N$,
 containing $\mathcal{O}(2^{N^2})$ terms.
A common approach for mixed-space posteriors is to marginalize discrete variables \citep{stan_latent_discrete}.
However, in our setting, the outcome model for each unit depends on $\bA^\ast$ through its entire neighborhood $\bA^\ast_i$ (and possibly beyond), 
inducing complex interdependencies that preclude obtaining a closed-form 
marginalized posterior (indeed, even for the simple linear model in \eqref{eq:simple_model}, the marginal likelihood is analytically intractable).
Brute-force marginalization of $\bA^\ast$ requires summing over $\mathcal{O}\big(2^{N^2}\big)$ terms, making it computationally infeasible. 

Given this intractability, we explored various alternatives. 
One approach is to relax the discrete space into a continuous one using 
reparameterization techniques 
\citep{zhang_2012, maddison2017concrete, nishimura_discontinuous_2020}.
However, we found that in our setting, such continuous relaxation methods
introduce approximation errors that result in unsatisfactory performance (Section~\ref{sec:sim}).
Another option is to use Mixed Hamiltonian Monte Carlo (HMC), which explicitly accounts for the discrete space \citep{zhou_mixed_2020}. We found that Mixed-HMC does not scale well in terms of computation time, even for small $N$. These limitations highlight the need for more efficient sampling strategies, which we explore next.

To this end, we propose a Block Gibbs MCMC algorithm that decomposes the sampling process into two blocks
consisting of continuous parameters $\bTheta$ and discrete network $\bA^\ast$.
Given the network state $\bA^\ast$, updating the continuous parameters $\bTheta$ is straightforward using standard MCMC techniques.
The challenge lies in updating the latent network $\bA^\ast$.
Due to its large discrete spaces, 
updating $\bA^\ast$ entries with a Metropolis-Hastings algorithm using a random walk Markov kernel can lead to poor mixing and slow convergence. 
 Therefore, we use a variant of Locally Informed Proposals \citep[LIP;][]{zanella2019informed}. 
 LIP uses information from the conditional posterior of $\bA^\ast$ to explore possible updates efficiently and overcome the poor mixing of standard random walk proposals in high-dimensional discrete spaces.
We now provide details about LIP in our settings and then describe the complete Block Gibbs algorithm.

\subsection{Locally Informed Proposals}
\label{subsec:lip}
 
In our case of a latent network with binary edges, each update of the current $\bA^\ast$ state 
corresponds to `flipping' some entries ($A^\ast_{ij}=1 \to A^\ast_{ij}=0$ or 
$A^\ast_{ij}=0 \to A^\ast_{ij}=1$). 
A standard random walk kernel proposes entries to be flipped at random.
In high-dimensional spaces, such uninformed proposals may lead to states with low posterior probability,
resulting in high rejection rates and slow mixing. To address this, we adopt the LIP framework, which constructs a proposal distribution that biases flips towards states with higher posterior probability. Theoretically, LIPs are designed to be locally reversible with respect to the target distribution,
 a property shown to improve the asymptotic efficiency and mixing times of the Markov chain \citep{zanella2019informed}.

 However, implementing exact LIP requires evaluating the posterior probability 
 for every possible proposal of a single-edge flip. For a network of size $N$, this entails $\mathcal{O}(N^2)$ posterior evaluations per iteration,
which can be computationally prohibitive. 
To render LIP feasible, we leverage the method proposed by \citet{grathwohl_2021},
which approximates the differences using gradients.
This reduces the computational cost of generating a proposal to a single gradient evaluation per iteration.
% Moreover, given that the posterior differences are Lipschitz continuous,
% the approximation error can be shown to not severely impact the convergence properties of the Markov chain \citep{grathwohl_2021}.

Consider the joint posterior \eqref{eq:observed.posterior}.
Given the other unknowns $\bTheta$ and the data $\bD$,
the posterior of $\bA^\ast$ is proportional to
\begin{equation}
    \label{eq:a_star_post}
    % p(\bA^\ast \mid \cdot) \equiv
     p(\bA^\ast \mid \bD, \bTheta) \propto 
    %  p(\bY \mid \bZ, \bA^\ast, \bX, \boeta) 
    %  p(\bZ \mid \bA^\ast, \bX, \bbeta)
    %  p(\cA \mid \bA^\ast, \bX, \bgamma)
    %  p(\bA^\ast \mid \bX, \btheta).
    \underbrace{p(\bY \mid \bZ, \bA^\ast, \bX, \boeta)}_{\text{Outcomes}} \times
     \underbrace{p(\bZ \mid \bA^\ast, \bX, \bbeta)}_{\text{Treatments}} \times
     \underbrace{p(\cA \mid \bA^\ast, \bX, \bgamma)}_{\text{Proxies}} \times
     \underbrace{p(\bA^\ast \mid \bX, \btheta)}_{\text{Prior}}.
\end{equation}
Equation~\eqref{eq:a_star_post} illustrates that the conditional posterior of $\bA^\ast$ is composed of the likelihood terms from all data modules dependent on $\bA^\ast$ (outcomes, treatments, and proxies in this scenario), alongside the prior network model. Since the LIP algorithm utilizes this conditional posterior to propose updates for the Markov chain, the reconstruction of the latent network is not driven solely by the observed proxies $\cA$. Instead, the inference actively leverages the signal embedded in the outcomes and, where applicable, treatments. 
In the specific case of posterior \eqref{eq:observed.posterior.obs}, treatment allocation depends solely on $\cA$ and $\bX$. Consequently, the treatment assignment model is constant with respect to $\bA^\ast$, and the conditional posterior $p(\bA^\ast \mid \bD, \bTheta)$ is driven only by the outcomes, proxies, and the prior network model.

% Equation~\eqref{eq:a_star_post} serves as the computational realization of the 
% \textit{information triangulation} principle established in Section~\ref{sec:identifiability}.
% While Section~\ref{sec:identifiability} relied on these distinct data views to prove parameter
%  recoverability, the LIP sampler actively leverages this structure for efficient exploration.
% By conditioning on all data modules simultaneously, the update step for $\bA^\ast$ is 
% constrained not just by the observed proxies $\cA$, but also by the signal embedded in 
% the outcomes $\bY$ and treatments $\bZ$.
% This ensures that the proposed network updates are made using all available sources 
% of information, effectively ``triangulating'' the true latent structure within the Bayesian framework.

Let $p(\bA^\ast \mid \cdot) \equiv
     p(\bA^\ast \mid \bD, \bTheta)$ denote the conditional posterior in \eqref{eq:a_star_post}.
Since the interference network is undirected, we only need to update the
 upper (or lower) triangle values of $\bA^\ast$ and set the lower (or upper) triangle values to be the same.
The LIP for updating network state from $\bA^\ast_t$ to $\bA^\ast_{t+1}$ have the structure \citep{zanella2019informed}
\begin{equation}
    \label{eq:local_balanced_ip}
    Q(\bA^\ast_{t+1} \mid \bA^\ast_t, \cdot) \propto g\bigg(\frac{p(\bA^\ast_{t+1} \mid \cdot)}{p(\bA^\ast_{t} \mid  \cdot)}  \bigg) \mathbb{I}\big(\bA^\ast_{t+1} \in H(\bA^\ast_t)\big),
\end{equation}
where $H(\bA^\ast_t) = \big\{\bA^* : \sum_{i<j} \lvert A^\ast_{ij} - A^\ast_{t,ij} \rvert = 1 \big\}$ is 
the Hamming ball of size one, with $A^\ast_{t,ij}$ being the $ij$ entry of $\bA^\ast_t$, and $g$ is a function that must satisfy the balancing condition $g(a) = a g(1/a)$ for all $a>0$ for the proposal $Q(\cdot|\cdot)$ to be locally reversible with respect to $p(\bA^\ast \mid \cdot)$ \citep{zanella2019informed}.
Common choices include $g(a) = \sqrt{a}$ or $g(a) = \frac{a}{1+a}$.
The proposals in \eqref{eq:local_balanced_ip} use the posterior ratio to 
inform possible moves constrained to flipping only one of $\bA^\ast_t$ entries, and are typically followed by an accept/reject step. 

Let $\log p(\bA^\ast \mid \cdot)$ be the
 log of the conditional posterior in \eqref{eq:a_star_post} and denote the difference 
 between two subsequent states by 
 $\Delta(\bA^\ast_{t+1}, \bA^\ast_t \mid \cdot) = \log p(\bA^\ast_{t+1} \mid \cdot) - \log p(\bA^\ast_t \mid \cdot)$.
Assume that $g(a) = \sqrt{a}$. The proposals \eqref{eq:local_balanced_ip} can be written as 
\begin{equation}
    \label{eq:lip_logpost}
     Q(\bA^\ast_{t+1} \mid \bA^\ast_t, \cdot) \propto \exp\Big(\frac{1}{2}\Delta (\bA^\ast_{t+1}, \bA^\ast_t \mid \cdot)\Big) \mathbb{I}\big(\bA^\ast_{t+1} \in H(\bA^\ast_t)\big).
\end{equation}
Let $\bA^\ast_{t+1}(ij)$ be $\bA^\ast_t$ with only entry $A^\ast_{t,ij}$ flipped.
 With a slight abuse of notation, denote 
 $\Delta(ij,t \mid \cdot) \equiv \Delta (\bA^\ast_{t+1}(ij), \bA^\ast_t \mid \cdot)$. 
 Since $\bA^\ast_{t+1}(ij) \in H(\bA^\ast_t)$, proposing an update using \eqref{eq:lip_logpost} 
 is equivalent to sampling an entry with probability 
 $q(ij \mid \bA^\ast_t ,\cdot) \equiv \text{Softmax} \Big(\frac{1}{2}\Delta(ij,t \mid \cdot)\Big)$ 
 and flip its value. 
 However, that entails computing $\log p(\bA^\ast_{t+1}(ij) \mid \cdot)$ 
 for all $1\leq i < j \leq N$ which entails $\mathcal{O}(N^2)$ evaluations of the log-posterior. 

To overcome this computational challenge, 
we approximate the differences $\Delta(ij,t \mid \cdot)$ using gradients of the log-posterior.
The key insight is that although $\bA^\ast$ is discrete, the structure of the log-posterior
 often allows for accurate gradient-based approximations \citep{grathwohl_2021}.
This structure enables us to approximate the differences $\Delta$ with a first-order Taylor series
\begin{equation}
    \label{eq:grad_approx_diff_single}
    \widetilde{\Delta}(ij,t \mid \cdot) = -(2A^\ast_{t,ij} -1) \frac{\partial \log p(\bA^\ast_t \mid \cdot)}{\partial A^\ast_{t,ij}} \approx \Delta(ij,t \mid \cdot).
\end{equation}
% Thus, the approximation is $\widetilde{\Delta}(ij,t \mid \cdot) \approx \Delta(ij,t \mid \cdot)$. 
In vector form, we can compute the gradients $\nabla\log p(\bA^\ast_t \mid \cdot)$ with respect to entries in the upper triangle of $\bA^\ast$ and approximate all the $N(N-1)/2$ differences $\Delta(ij,t \mid \cdot)$ through
\begin{equation}
    \label{eq:grad_approx_diff}
    \boldsymbol{\widetilde{\Delta}}(\bA^\ast_t \mid \cdot) = -(2  \text{vec}(\bA^\ast_t) -1) \odot \nabla \log p(\bA^\ast_t \mid \cdot).
\end{equation}
where $\odot$ is the element-wise product and $\text{vec}(\bA^\ast)$ is the vector of $N(N-1)/2$ upper triangle values of $\bA^\ast$.
Thus, 
$ \boldsymbol{\widetilde{\Delta}}(\bA^\ast_t \mid \cdot) = \big\{\widetilde{\Delta}(ij,t): 1 \leq i < j \leq N \big\}$ is the set of all approximate differences \eqref{eq:grad_approx_diff_single}.
Computing \eqref{eq:grad_approx_diff} requires only a single gradient evaluation and can 
be performed with any automatic differentiation library. 
We formally analyze the approximation error in Appendix~\ref{apdx.sec:gradient_approx_error}. We demonstrate that for an extended version of the simplified model \eqref{eq:simple_model}, the efficiency of the gradient-based approximation relative to the exact proposal decreases as the interference signal-to-noise ratio increases, but, crucially, remains independent of the network size $N$. Furthermore, this approximation was accurate in the numerical illustrations (Appendix~\ref{apdx.sec:sim}).

The pseudocode for a single update of $\bA^\ast$ state with LIP is given in Algorithm~\ref{algo:lip}.
The pseudocode is applicable for both posterior \eqref{eq:observed.posterior} and \eqref{eq:observed.posterior.obs}
by replacing $\log p(\cdot)$ with the corresponding log-posterior.
We extend the single-entry update by proposing $K \geq 1$ flips per iteration. 
Instead of selecting a single entry, we sample $K$ entries without replacement 
using the Gumbel-Max trick (Step 3). 
The selected indices are saved in a set $\mathcal{I}$. 
The function \texttt{FlipEntries} simply takes a network state $\bA^\ast_t$ 
and flips the values of all entries in the set of indices $\mathcal{I}$.
For $K=1$, the algorithm performs \emph{exact} LIP but is likely to require more iterations, 
while for $K>1$, the updates become \emph{approximate} but can accelerate convergence. 
Increasing $K$ poses a trade-off between accuracy and computational efficiency. 
In our numerical illustrations (Section \ref{sec:sim}), we varied $K$ depending 
on the network model and sample size.

\begin{center}
    % \begin{minipage}{0.8\textwidth}
        \begin{algorithm}[ht]
        \captionsetup{width=1\textwidth}  
        \caption{A Single  $\bA^\ast$ Update with Locally Informed Proposals}
        \label{algo:lip}
        % \begin{center}  
            % \begin{minipage}{0.8\textwidth} 
        \begin{algorithmic}[1]
        \Statex \textbf{Input:} Data $\bD$, continuous parameters $\bTheta$, current state $\bA^\ast_t$, log-posterior $\log p(\cdot)$,  number of entries to update $K\geq 1$.
        \State Compute differences $\boldsymbol{\widetilde{\Delta}}(\bA^\ast_t \mid \cdot) \equiv \boldsymbol{\widetilde{\Delta}}(\bA^\ast_t \mid \bD, \bTheta)$ using \eqref{eq:grad_approx_diff}.
        \State Compute $q(ij \mid \bA^\ast_t, \cdot) = \text{Softmax} \Big(\frac{1}{2}\boldsymbol{\widetilde{\Delta}}(\bA^\ast_t \mid \cdot) \Big)$.
        \State Sample without replacement $K$ edges $\mathcal{I}$ using Gumbel-Max trick with probabilities $q(ij \mid \bA^\ast_t, \cdot)$.
        \State Compute \emph{forward} probability $q(\mathcal{I} \mid \bA^\ast_t, \cdot) = \prod_{ij \in \mathcal{I}}q(ij \mid \bA^\ast_t, \cdot)$.
        \State $\bA^\ast_{t+1} \gets \texttt{FlipEntries}(\bA^\ast_t, \mathcal{I})$.
        \State Compute \emph{backward} probability $q(\mathcal{I} \mid \bA^\ast_{t+1}, \cdot) = \prod_{ij \in \mathcal{I}}q(ij \mid \bA^\ast_{t+1}, \cdot)$ similarly to step 2.
        \State Accept with probability
        \[
        \min \left( \exp\big(\Delta(\bA^\ast_{t+1}, \bA^\ast_t \mid \cdot)\big) \frac{q(\mathcal{I} \mid \bA^\ast_{t+1}, \cdot)}{q(\mathcal{I} \mid \bA^\ast_t, \cdot)}, 1 \right).
        \]
        \end{algorithmic}
        \end{algorithm}
    % \end{minipage}
\end{center}
 
\subsection{Block Gibbs Algorithm}
\label{subsec:block_gibbs}
We propose a Block Gibbs approach that alternates between updating $\bA^\ast$
 with LIP and the continuous parameters $\bTheta$ using a continuous kernel
  $Q_c(\bTheta \mid \bA^\ast, \bD)$, 
  such as HMC \citep{neal2011mcmc} or No-U-Turn-Sampler \citep[NUTS,][]{hoffman2014no}.
Algorithm~\ref{algo:block_gibbs} outlines the procedure.
% We describe it for the posterior \eqref{eq:observed.posterior.obs}, but analogous definitions apply for \eqref{eq:observed.posterior}.
In each iteration, we first perform $L$ updates of $\bA^\ast$, for 
a maximum of $L \times K$ entry flips per iteration. 
Given its new network state $\bA^\ast_{t+1}$, we perform a single update 
of the continuous parameters. The latter can also consist of multiple steps, 
for example, if an HMC with multiple integration steps is used \citep{neal2011mcmc}. 
See \citet{betancourt2018} for more details regarding practical applications of HMC.
\begin{center}
    % \begin{minipage}{0.8\textwidth}
        \begin{algorithm}[ht]
        \captionsetup{width=1\textwidth}  
        \caption{Block Gibbs Posterior Sampling}
        \label{algo:block_gibbs}
        % \begin{center}  
            % \begin{minipage}{0.8\textwidth} 
        \begin{algorithmic}[1]
        \Statex \textbf{Input:} Data $\bD$, Continuous kernel $Q_c$, total steps $T$, LIP steps $L$, entries per update $K$.
        \State Initialize $\bA^\ast_0, \bTheta_0$. 
        \For{$t=0$ to $T-1$}
            \State $\bA^\ast_{t+1} \gets \bA^\ast_t$.
                % \Comment{Initialize for LIP steps}
            \For{$\ell = 1$ to $L$}
            \Comment{LIP updates using Algorithm~\ref{algo:lip}}
                \State $\bA^\ast_{t+1} \gets \text{LIP}(\bA^\ast_{t+1}, K \mid \bTheta_t, \bD)$. 
            \EndFor
            \State $\bTheta_{t+1} \gets Q_c(\bTheta_{t} \mid \bA^\ast_{t+1}, \bD).$
        \EndFor
        \State \textbf{Return:} $\Big\{\bA^\ast_t, \bTheta_t\Big\}_{t=1}^{T}$
        \end{algorithmic}
        \end{algorithm}
    % \end{minipage}
\end{center}
There is an inherent trade-off in choosing $L$ and $K$.
Larger values allow for more extensive exploration of the network space per iteration,
potentially improving mixing. However, they also increase the computational cost per iteration.
Furthermore, performing excessive updates of $\bA^\ast$ before updating $\bTheta$ may lead to 
drift in the Markov chains of the continuous parameters,
as changing the network structure can significantly alter the geometry of the conditional posterior of $\bTheta$.

We found that this trade-off can be mitigated by a robust initialization strategy.
Specifically, we employ the following multi-stage procedure grounded in
 Bayesian modularization \citep{Bayarri2009, Jacob2017} to ensure the chain starts
in a high-probability region:
\begin{enumerate}
    \item Estimate the parameters governing the 
    network and proxy models, $(\btheta, \bgamma)$, 
    by sampling from the marginalized network module 
    $p(\btheta,\bgamma \mid \cA,\bX) \propto p(\btheta)p(\bgamma)\sum_{\bA^\ast} p(\cA \mid \bA^\ast,\bX,\bgamma) p(\bA^\ast \mid \bX,\btheta)$.
     This approach mirrors latent network reconstruction methods using proxies
      \citep{Young2021}. We denote the resulting estimates (e.g., posterior means) 
      by $(\hat{\btheta}, \hat{\bgamma})$.
    \item Conditionally on $(\hat{\btheta}, \hat{\bgamma})$, sample multiple networks from
     $p(\bA^\ast \mid \cA, \bX, \hat{\btheta}, \hat{\bgamma})$ 
    and select $\bA^\ast_0$ as the network with the highest conditional probability.
    \item Estimate the outcome model parameters $\boeta$ 
    (and $\bbeta$ if applicable) using $\bA^\ast_0$ as a fixed covariate,
    yielding initial estimates $\bTheta_0$.
    \item Finally, refine $\bA^\ast_0$ by performing a short sequence
     of LIP updates (Algorithm \ref{algo:lip}) conditional on the initialized full continuous parameters $\bTheta_0$.
\end{enumerate}
Steps 1--3 provide initial estimates using the cut-posterior distribution, 
which removes the `feedback' between the network and outcome modules. 
The final LIP refinement step modifies $\bA^\ast_0$ using information from the 
outcome and treatment models, effectively moving the initial state towards the
 higher density region of the full posterior. 
 The resulting values are used to initialize Algorithm~\ref{algo:block_gibbs}. 
 Further technical details regarding this procedure are provided in Appendix \ref{apdx.sec:posterior}.
% We found that this trade-off can be mitigated by a robust initialization strategy, especially of $\bA^\ast$.
% We found that sampling from the cut-posterior \citep{Bayarri2009, Jacob2017},
%  which removes the `feedback' between the networks and outcome models,
%  and then further refining $\bA^\ast$ via several LIP steps (Algorithm~\ref{algo:lip}) 
%  yielded good initial values and worked well in practice. 
%  This procedure is  
%  We describe the details of this initialization strategy in Appendix \ref{apdx.sec:posterior}.

% The causal estimands, e.g., $\tau(\bz,\bz')$ (Section \ref{subsec:estimands}) can be expressed as functionals of $\boeta$ and $\bA^\ast$. Therefore, given $T$ posterior samples from Algorithm~\ref{algo:block_gibbs}, we can draw from the posterior predictive distribution of the outcomes to obtain $T$ estimates of each estimand of interest. These estimates approximate the posterior distribution of each estimand, accounting for uncertainty in both the interference structure $\bA^\ast$ and the parameters governing the outcome model.   

\subsection{Causal Effects Estimation}
\label{subsec:bayes_g_formula}
We describe how to estimate the causal estimands (Section~\ref{subsec:estimands}) given $T$ posterior samples. For simplicity of presentation, we focus on the static policy. Following our Bayesian framework, let
$\mathbb{E}\big[Y_i(\bz) \mid \bD \big]$ be the expected outcome of unit $i$ under the intervention that sets $\bZ = \bz$, given the observed data $\bD$. Under the SCMs described in Figures~\hyperref[fig:DAGs]{1(a)--(c)}, we have
\begin{equation*}
    \begin{split}
        \mathbb{E}\big[Y_i(\bz) \mid \bD \big] 
        &=
    \mathbb{E}_{\bA^\ast, \boeta}\Big[\mathbb{E}\big[Y_i(\bz) \mid \bD,  \bA^\ast, \boeta\big] \mid \bD \Big] 
        \\ &=
    \mathbb{E}_{\bA^\ast, \boeta}\Big[\mathbb{E}\big[Y_i \mid \bZ=\bz, \bX, \bA^\ast, \boeta\big] \mid \bD \Big],
    % \\ 
    % &\approx
    % T^{-1} \sum_{t=1}^{T}\mathbb{E}\big[Y_i \mid \bZ=\bz, \bX, \bA^\ast_t, \boeta_t \big],      
    \end{split}
\end{equation*}
% .
where the second equality follows since all back-door paths between $\bZ$ and $\bY$ are blocked (Section~\ref{sec:settings}).
Under Figure~\hyperref[fig:DAGs]{1(d)} we have to condition on the proxies $\cA$ as well, as described in Section~\ref{subsec:estimands}.
With a slight abuse of notation let $\widehat{\mu}_i(\bz \mid \bD)$ be the posterior expectation of $\mathbb{E}\big[Y_i(\bz) \mid \bD \big]$.  We can approximate this expectation using the posterior samples to obtain an estimator
\begin{equation}
    \label{eq:mu_i_est}
    \begin{split}
         \widehat{\mu}_i(\bz \mid \bD)
         \approx 
         T^{-1} \sum_{t=1}^{T}\mathbb{E}\big[Y_i \mid \bZ=\bz, \bX, \bA^\ast_t, \boeta_t \big],
    \end{split}
\end{equation}
where $\bA^\ast_t,\boeta_t$, $t=1,\ldots,T$ are the samples from the the posterior distribution $p(\bA^\ast,\boeta \mid \bD)$.
If the inner conditional expectation $\mathbb{E}\big[Y_i \mid \bZ=\bz, \bX, \bA^\ast_t, \boeta_t \big]$ does not have a closed-form expression, we can approximate it via Monte Carlo by drawing $Y_i$ from its posterior predictive distribution (Appendix~\ref{apdx.sec:posterior}). 
Therefore, the point estimate of a static policy for unit $i$ is $\widehat{\tau}_i(\bz,\bz' \mid \bD) = \widehat{\mu}_i(\bz \mid \bD) - \widehat{\mu}_i(\bz' \mid \bD)$, and the population-level point estimate is $\widehat{\tau}(\bz,\bz' \mid \bD) = N^{-1} \sum_i \widehat{\tau}_i(\bz,\bz' \mid \bD)$. In addition, credible intervals can be computed using the percentiles of
\begin{equation*}
    \widehat{\tau}^{(t)}(\bz,\bz'\mid \bD) =
    N^{-1} \sum_i \left(\mathbb{E}\big[Y_i \mid \bZ=\bz, \bX, \bA^\ast_t, \boeta_t \big] - \mathbb{E}\big[Y_i \mid \bZ=\bz', \bX, \bA^\ast_t, \boeta_t \big] \right),
\end{equation*}
over the $t=1,\ldots,T$ posterior samples.
Note that we can also write $\widehat{\tau}(\bz,\bz' \mid \bD) = T^{-1}\sum_{t=1}^{T} \widehat{\tau}^{(t)}(\bz,\bz'\mid \bD)$, i.e., as the posterior mean.
We can similarly obtain estimates and intervals for dynamic and stochastic estimands (Section~\ref{subsec:estimands}).

% Bayesian modularization offers several advantages \citep{Bayarri2009}. Notably, it severely reduces, and often eliminates, contamination between modules due to misspecification in some modules.
% This approach also allows for separate model checking and selection of the network models (true and observed) from that of the outcome model. Researchers can first perform model diagnostics on the network modules and subsequently repeat the process for the outcome model, effectively separating the diagnostic process for the interference network and the outcome model.
% Moreover, by breaking the complex full posterior into manageable parts, modularization provides a computationally tractable approach where direct sampling would be challenging.

\section{Numerical Illustrations}
\label{sec:sim}

We conducted numerical experiments using fully- and semi-synthetic data to evaluate the performance 
of our proposed methods. In the fully-synthetic scenario, all data components are simulated.
In contrast, the semi-synthetic scenario combines real network data with simulated treatments and outcomes. 
We compare our proposed estimator based on the Block Gibbs sampler against several baselines.
Key details and main results are presented below, while specific parameter settings and additional results are provided in Appendix \ref{apdx.sec:sim}.

In both fully- and semi-synthetic scenarios, we assigned treatments independently to each unit with 
    $\Pr(Z_i=1 \mid \bA^\ast) \propto d^\ast_i$, namely, treatment assignment was proportional to the degree of $i$ in the true interference network.
Outcomes were generated from a linear model with $\bY = \boldsymbol{\mu} + \varepsilon_i$,
where $\varepsilon_i \sim N(0,\sigma^2)$ are independent errors.
% \begin{equation*}
%     \bY = \boldsymbol{\mu} + \bpsi. 
% \end{equation*}
The mean component $\boldsymbol{\mu}$ for unit $i$ was $\mu_i =
     \eta_1Z_i + \eta_2 \phi_1(\bZ_{-i},\bA^\ast) + \boeta_x'\bX_i,$
% \begin{equation*}
%     \mu_i =
%      \eta_1Z_i + \eta_2 \phi_1(\bZ_{-i},\bA^\ast) + \boeta_x'\bX_i,
% \end{equation*}
and the exposure mapping $\phi_1(\mathbf{Z}_{-i},\mathbf{A}^\ast)$ was set to summarize neighbors' treatments through a weighted average  
% \begin{equation*}
    $
    \phi_1(\bZ_{-i},\bA^\ast) = \sum_{j \neq i}w_jZ_j A^\ast_{ij}.
    $
% \end{equation*}
The weights were taken to be the degree centrality, $w_j =\frac{d^\ast_j}{N-1}$, assigning greater influence to units with higher network centrality.
Here, $\eta_1$ represents the direct treatment effect, while $\eta_2$ captures the interference effect
 through neighbors' treatment. 
%  The vector $\boldsymbol{\psi} = (\psi_1, \ldots, \psi_N)$ represents network random effects and follows a Conditional Autoregressive distribution 
% \citep{banerjee2003hierarchical}
% \begin{equation*}
    % \label{eq:car_model}
    % $
    % \boldsymbol{\psi} \sim N\big(\boldsymbol{0},\sigma(\boldsymbol{D}-\rho\bA^\ast)^{-1}\big),
    % $
% \end{equation*}
% where $\boldsymbol{D}$ is the diagonal matrix of node degrees, $\sigma>0$ is a precision parameter, and $\rho \in (0,1)$ controls the strength of network dependence. Thus, the interference network $\mathbf{A}^*$ influences the outcomes $\mathbf{Y}$ through their mean and covariance structures.

\subsection{Fully-Synthetic Data}
\label{subsec:sim_fully}

We generated fully-synthetic data for $N=500$ units. In each simulation iteration, we sampled data from the following SCM. 
Covariates $\bX_i = (X_{1,i}, X_{2,i})$, were simulated from $X_{1,i} \sim N(0,1)$ and $X_{2,i} \sim Ber(0.1)$. The edge-level covariates were defined as $\widetilde{X}_{1,ij}=\big\lvert X_{1,i} - X_{1,j} \big\rvert$, capturing the covariates distance, and $\widetilde{X}_{2,ij} = \mathbb{I}\{X_{2,i} + X_{2,j} = 1\}$, an indicator of exactly one of the two units having $X_2=1$. 
Edges in the latent interference network $\bA^\ast$ were generated independently according to
\begin{equation}
\label{eq:a_star_lsm}
    \Pr(A^\ast_{ij}=1 \mid \bX,\btheta) = 
    s\big(\theta_0 + \theta_1 \widetilde{X}_{1,ij} + \theta_2\widetilde{X}_{2,ij}),
\end{equation}
where $s(x) = \frac{1}{1 + e^{-x}}$ is the sigmoid function.
Parameters $\btheta$ were selected to produce sparse networks with a heavy-tailed degree distribution, reflecting common features of real-world networks (see Appendix~\ref{apdx.sec:sim} for details).
Given the true network $\bA^\ast$, we generated two independent proxy networks $\cA = \{\bA_1, \bA_2\}$
from a differential measurement error model
\begin{equation}
\label{eq:proxy_nets}
\begin{split}
    \Pr(A_{ij}=1 \mid A^\ast_{ij}, \bX, \bgamma) &= s\big(A^\ast_{ij}\gamma_0 + (1-A^\ast_{ij})(\gamma_1 + \gamma_2\widetilde{X}_{1,ij} + \gamma_3\widetilde{X}_{2,ij})\big). \\
    % \Pr(A_{1,ij}=1 \mid A^\ast_{ij}, \bX, \bgamma) &= s\big(A^\ast_{ij}\gamma_0 + (1-A^\ast_{ij})(\gamma_1 + \gamma_2\widetilde{X}_{1,ij} + \gamma_3\widetilde{X}_{2,ij})\big) \\
    % \Pr(A_{2,ij}=1 \mid A_{1,ij}, A^\ast_{ij}, \bgamma) &= s\big(A^\ast_{ij}(\gamma_3 + \gamma_4 A_{1,ij}) + (1-A^\ast_{ij})(\gamma_5 + \gamma_6 A_{1,ij})\big).
\end{split}
   \end{equation}
We varied $\bgamma$ values to control the (dis)similarity between proxies and the true network.
This model represents a scenario of causal proxies.
Specifically, false-positive edges ($A_{ij}=1, A^\ast_{ij}=0$) occurred with probabilities that depend on covariate similarities, while true edges ($A_{ij}^\ast=1$) were missing in $\bA_1$ completely at random. 
%  The second proxy $\bA_2$ is a noisy measure based on the true network and the first proxy.

We took $N(0,3^2)$ priors for all coefficients and $\sigma \sim \text{HalfNormal}(0,3^2)$ for the noise term.
In each iteration, we re-sampled all data from the SCM but fixed the true parameter values throughout.  
We compared two variants of the proposed Block Gibbs (BG) algorithm, 
using either a single ($\bA_1$) or both proxies ($\cA$), to several baselines.
These include a naive estimation approach that takes the first proxy $\bA_1$ (``Obs.") as 
the interference network and the oracle that uses the true network $\bA^\ast$ (``True"). 
Also, we tested the performance of continuous relaxation of $\bA^\ast$ edges into 
$[0,1]$ using the CONCRETE distribution \citep{maddison2017concrete} (``Cont. relax"), which is a differentiable relaxation of the discrete Gumbel-Max trick.
We also tested how using a two-stage approach that first estimates $\bA^\ast$ from the proxies and then estimates the outcome model while treating the estimated network as fixed.
In this approach, we further considered two methods to obtain the estimated fixed network:
 (i) using only the cut-posterior samples of $\bA^\ast$ (``Two-stage (cut)"),
  corresponding to steps 1--3 of the BG initialization (Section~\ref{subsec:block_gibbs}), and 
 (ii) refining the samples via several LIP steps (``Two-stage (LIP)"),
  which incorporates step 4 of the initialization strategy.
 We also assessed the contribution of the treatment signal to the inference by running a model without the treatment model (``BG (no Z)").
 The impact of model misspecification was also evaluated by omitting the covariate 
 $\bX_1$ from all models (``BG (misspec)"). 

The fixed-network methods (``Obs.", ``True", and two-stage methods) require
 only an outcome model and were estimated using NUTS \citep{hoffman2014no}.
In the BG algorithms, we used NUTS for the continuous kernel. 
Each BG iteration included $L=1$ LIP steps with $K=1$ edge updates per step. 
Initial values were obtained from the cut-posterior followed by LIP refinement (Appendix \ref{apdx.sec:posterior}).
The continuous relaxation method was estimated using Stochastic Variational Inference, 
see Appendix~\ref{apdx.sec:sim} for details. 
All MCMC methods used $T=1.2 \times 10^4$ posterior samples ($3,000$ samples across four chains).

Our analysis evaluated dynamic and static policies (Section~\ref{subsec:estimands}).  For static policies, we considered the TTE, $\tau(\boldsymbol{1}, \boldsymbol{0})$,
comparing the scenarios where all units are treated versus none.
We considered the dynamic policy  $ h_{c,i}(X_{1}) = \mathbb{I}\{X_{1,i} \not\in [-c,c]\}$,
assigning treatment only to units whose covariate exceeds a threshold $c$. 
The dynamic estimand was taken to be $\tau(h_{c_1},h_{c_2}) = \mu(h_{c_1}) - \mu(h_{c_2})$, with thresholds $c_1=0.75$ and $c_2=1.5$.

% The stochastic policy $\pi_\alpha(\bz)$ assigns treatments independently with probability $\alpha$. The stochastic estimand is $\tau(\alpha_1, \alpha_0)$, with probabilities $\alpha_1=0.7$ and $\alpha_0 = 0.3$. 
% The dynamic estimand is computed analytically, while the stochastic estimand is approximated by drawing $10^3$ treatment vectors from $\pi_{\alpha}$ for each $\alpha$.
Let $\widehat{\tau}_i$ be the point estimate for units $i$ of either the dynamic or static policies (Section~\ref{subsec:bayes_g_formula}).
Estimation accuracy was assessed by unit-level Mean Absolute Percentage Error (MAPE) between posterior point
 estimates and true estimands $\tau_i$,
% Let $\hat{\tau}_{i,t}$ be the estimate from the posterior predictive distribution of unit $i \in [N]$ in posterior sample $t=1,\ldots,T$.
% The point estimate is $\hat{\tau}=\frac{1}{NT}\sum_t\sum_i \hat{\tau}_{i,t}$ and the RE is 
defined by $\text{MAPE} = N^{-1} \sum_{i \in [N]} \big\lvert \frac{\hat{\tau}_i-\tau_i}{\tau_i}\big\rvert$.
Lower MAPE values indicate better estimation accuracy.
We also evaluated the mean relative error of the population-level estimands, defined by $\bbE\left(\left \vert \frac{\hat{\tau}-\tau}{\tau}\right \vert\right)$. These results are presented in Appendix~\ref{apdx.sec:sim} and exhibit performance patterns consistent with the unit-level MAPE.

Figure~\ref{fig:sim_rrmse} summarizes the $\text{MAPE} \; (\text{mean} \; \pm \;\text{standard deviation})$ values over $300$ iterations across
varying proxies strength ($\gamma_2$).
As proxies become weaker (larger $\gamma_2$), naively estimating $\tau$ while 
treating the observed proxies as the true network leads to higher estimation error.
In contrast, the BG algorithms consistently achieves accurate estimates.
Specifically, the BG with either a single or two proxies attains the lowest MAPE 
across all proxy strengths and estimands compared to all baselines other than the true oracle.
Ignoring the treatment model in the BG sampler (``BG (no Z)") slightly increases MAPE, illustrating the benefit of integrating all available data modules.
Moreover, model misspecification (``BG (misspec)") only mildly affected the MAPE.
The continuous relaxation approach also yielded high MAPE values.
The two-stage methods performed worse than all of the BG samplers.
However, the two-stage approaches with LIP refinement significantly outperformed the cut-posterior variant. 
Nevertheless, the two-stage methods (even with LIP refinement) tend to underestimate the uncertainty of the estimands, as reflected by empirical coverage rates of the $95\%$ credible intervals (CIs) 
 being below the nominal level (Appendix~\ref{apdx.sec:sim}).
In contrast, the BG algorithms achieved empirical coverage rates close to
 the nominal level across all proxy strengths,
demonstrating the necessity of the full iterative sampling procedure.

Figure~\ref{fig:networks_heatmap} illustrates how the posterior distribution of $\bA^\ast$ 
concentrates around the true network structure, by comparing the true network $\mathbf{A}^\ast$, 
the observed proxy network $\mathbf{A}_1$, and the posterior edge probabilities derived from the BG algorithm.
 The figure highlights that the posterior distribution effectively captures the true network topology even
  under weak proxy information ($\gamma_2=3$).
Additional results, including traceplots, mixing diagnostics, and scaling BG to larger networks, are provided in Appendix~\ref{apdx.sec:sim}.

\begin{figure}[!htbp]
    \centering
    \includegraphics[width=0.75\linewidth]{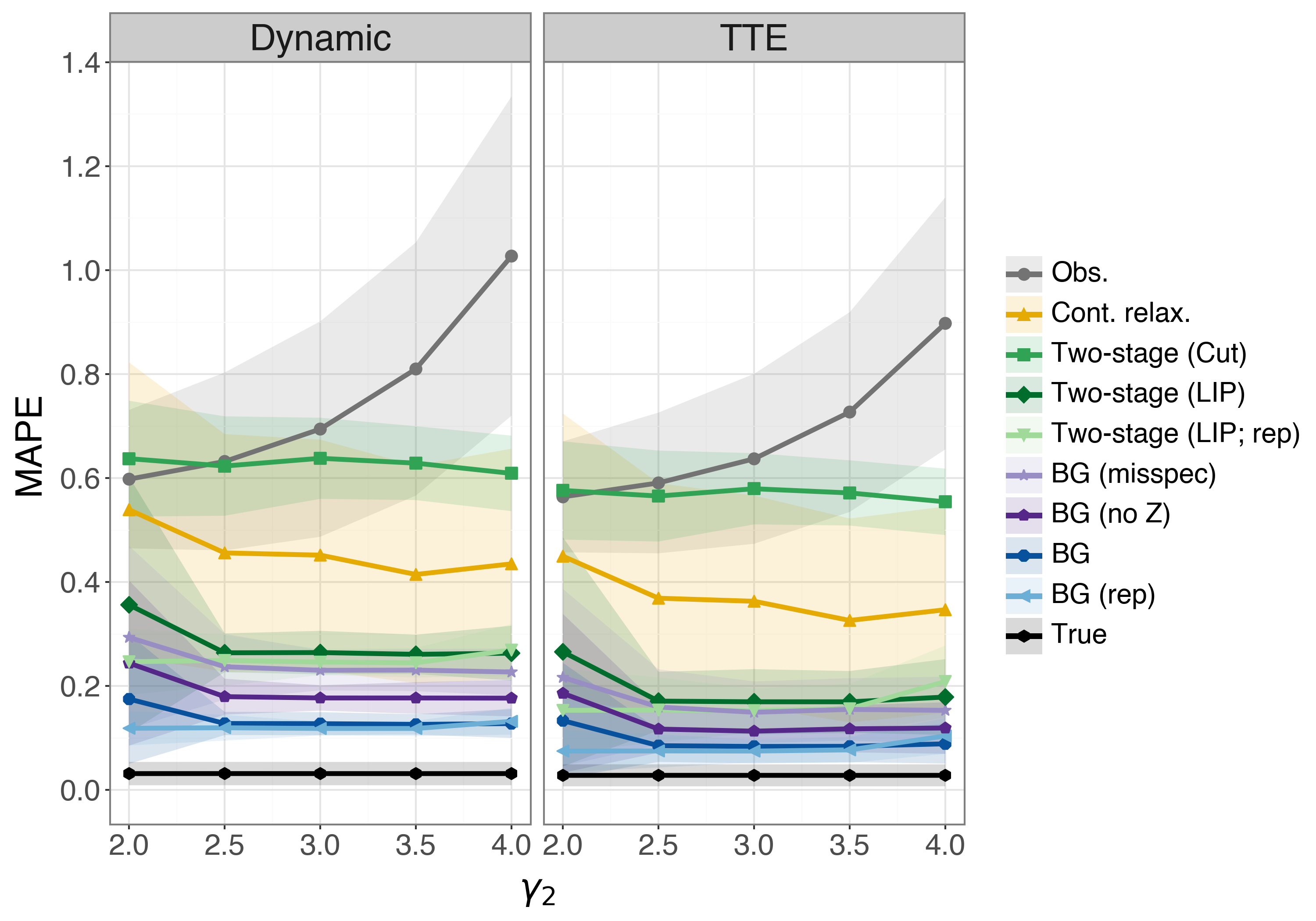}
    \caption{Mean ($\pm \; \text{SD}$) of MAPE values across $300$ simulation iterations for dynamic and TTE estimands. 
    Proxy networks $\cA$ are weaker as $\gamma_2$ (x-axis) increases.
    %  `True' when the true network $\bA^\ast$ is used, `BG' and `BG (rep)' for Block Gibbs sampler with either
    %   a single or two proxy networks, respectively. 
    %   `BG (no Z)' denotes the BG sampler without the treatment model, and `BG (misspec)' 
    %   the BG sampler with misspecified models omitting covariate $X_1$.
    %   `Two-stage (cut)' and `Two-stage (LIP)' represent the two-stage methods using either cut-posterior samples or LIP refinement, respectively.
    %   `Two-stage (LIP, rep)' uses two proxies in the two-stage method with LIP refinement.
    %     `Cont. relax' stands for the continuous relaxation method.
    %   `Obs.' stands for observed proxy $\bA_1$. 
      Smaller values indicate better performance.}
    \label{fig:sim_rrmse}
\end{figure}

\begin{figure}[!htbp]
    \centering
    \includegraphics[width=0.7\linewidth]{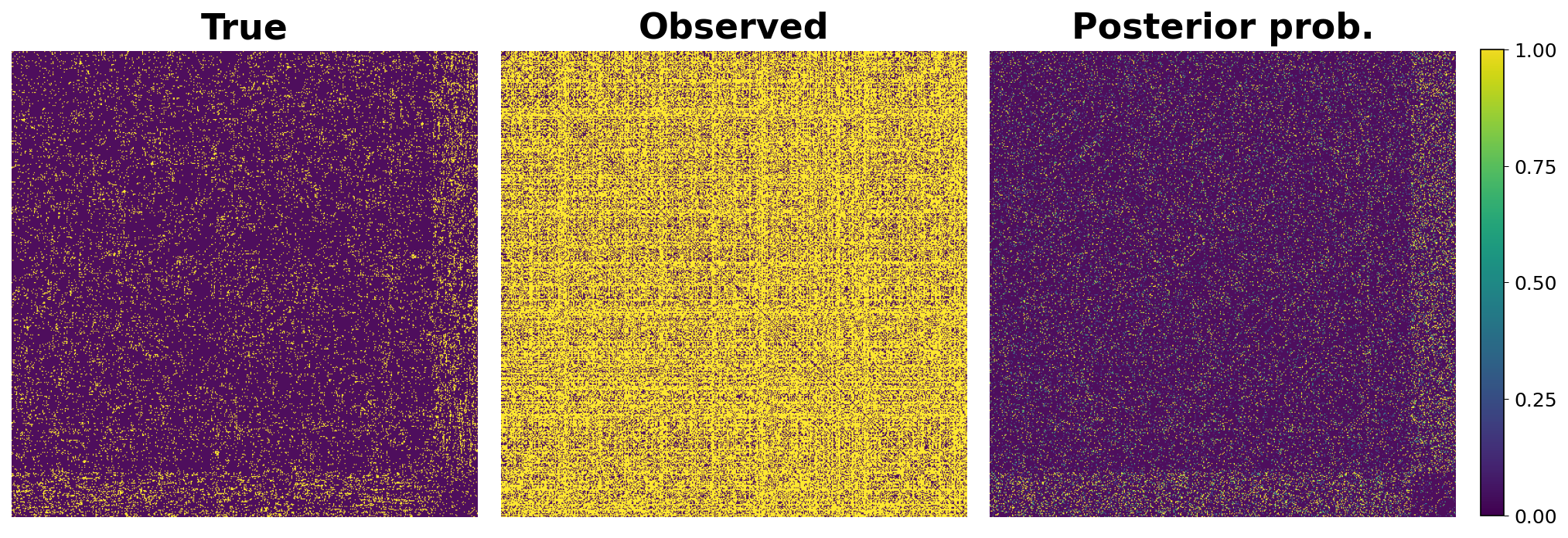}
    \caption{Heatmap of true $\bA^\ast$, observed proxy $\bA_1$, 
    and posterior probabilities under one iteration with $\gamma_2=3$. 
    In the True and Observed heatmaps, entries are binary. 
    The posterior probabilities are the proportion of times an edge existed in the posterior network samples using a Block Gibbs sampler with a single proxy network. Nodes were first rearranged by hierarchical clustering for clearer visualization.}
    \label{fig:networks_heatmap}
\end{figure}

\subsection{Semi-Synthetic Data}
\label{subsec:sim_semi}
In the second experiment, we analyzed a multilayer network dataset collected by \citet{magnani2013}, which originally contained five layers describing social connections among $N=61$ employees at Aarhus University's Department of Computer Science. Three layers are obtained through surveys asking employees about regular working relationships (``Work"), repeated leisure activities (``Leisure"), and eating lunch together (``Lunch"). Two additional layers captured online Facebook friendships and publication co-authorship. Since the co-authorship network is almost fully contained within other layers \citep{magnani2013}, it was excluded from our analysis, leaving us with four unweighted and undirected networks.
Summary statistics for each layer are presented in Table~\ref{tab:network_stats}.
\begin{table}[!htbp]
    \centering
    \begin{tabular}{lcccc}
        \hline
         & Facebook & Leisure & Lunch & Work \\
        \hline
        Number of edges & 124 & 88 & 193 & 194 \\
        Average degree & 7.75 & 3.74 & 6.43 & 6.47 \\
        Number of isolated units & 29 & 14 & 1 & 1 \\
        \hline
    \end{tabular}
    \caption{Summary statistics of the multilayer network. Each column represents a layer.}
    \label{tab:network_stats}
\end{table}

To illustrate the performance of our proposed method, we sequentially designated each layer as the latent interference network $\bA^\ast$, treating the remaining three layers as observed proxies $\cA$. 
The joint distribution of the four layers was modeled using LSM with shared bivariate latent positions $\bV$
\begin{equation*}
    \Pr(A_{b,ij}=1 \mid \bV, \bgamma) = s\left(\gamma_{b,0} - e^{\gamma_{b,1}} \lVert \bV_i-\bV_j \rVert_2 \right),
\end{equation*}
where $\gamma_{b,0}$ controls the baseline probability of edge creation in layer $b$ and $\gamma_{b,1}$ controls how latent distances affects edge probabilities. 
We assumed independent Gaussian distribution for the latent positions $\bV_i \sim N_2(\boldsymbol{0}, I)$, where $I$ is the identity matrix,  and hierarchical priors for the layer-specific parameters
\begin{equation*}
    \gamma_{b,j} \sim N(\mu_j, \sigma^2_j), \; \mu_j \sim N(0,3^2), \; \sigma_j \sim \text{HalfNormal}(0,3^2), \;j=0,1.
\end{equation*}
When a specific layer $b$ is treated as the latent network $\bA^\ast$, this model assumes that the three proxies indirectly inform its structure through the latent variables $\bV$, $\mu_j$, and $\sigma_j$. 
This setup represents non-causal proxies (Figure~\hyperref[fig:DAGs]{1(c)}), as none of the observed layers are assumed to directly influence the latent interference network 
and treatments depends only on $\bA^\ast$.
In each latent network scenario, treatments and outcomes were generated as in Section \ref{subsec:sim_fully}, although without covariates data. 
We focused on the TTE estimands, similar to Section~\ref{subsec:sim_fully}. 
To produce a single ``Observed" network from the three proxy layers, we aggregated edges using either the union (``OR") or intersection (``AND") operations. 
We compared the true and aggregated networks to the BG sampler using the same sampling 
setup as in Section~\ref{subsec:sim_fully}.
%  although with $L=1$ LIP steps with $K=1$ edges updates per step due to the smaller sample size.
In addition, we included the two-stage method with LIP refinement as an additional baseline.
% \begin{figure}[!hbtp]
%     \centering
%     \includegraphics[width=0.55\linewidth]{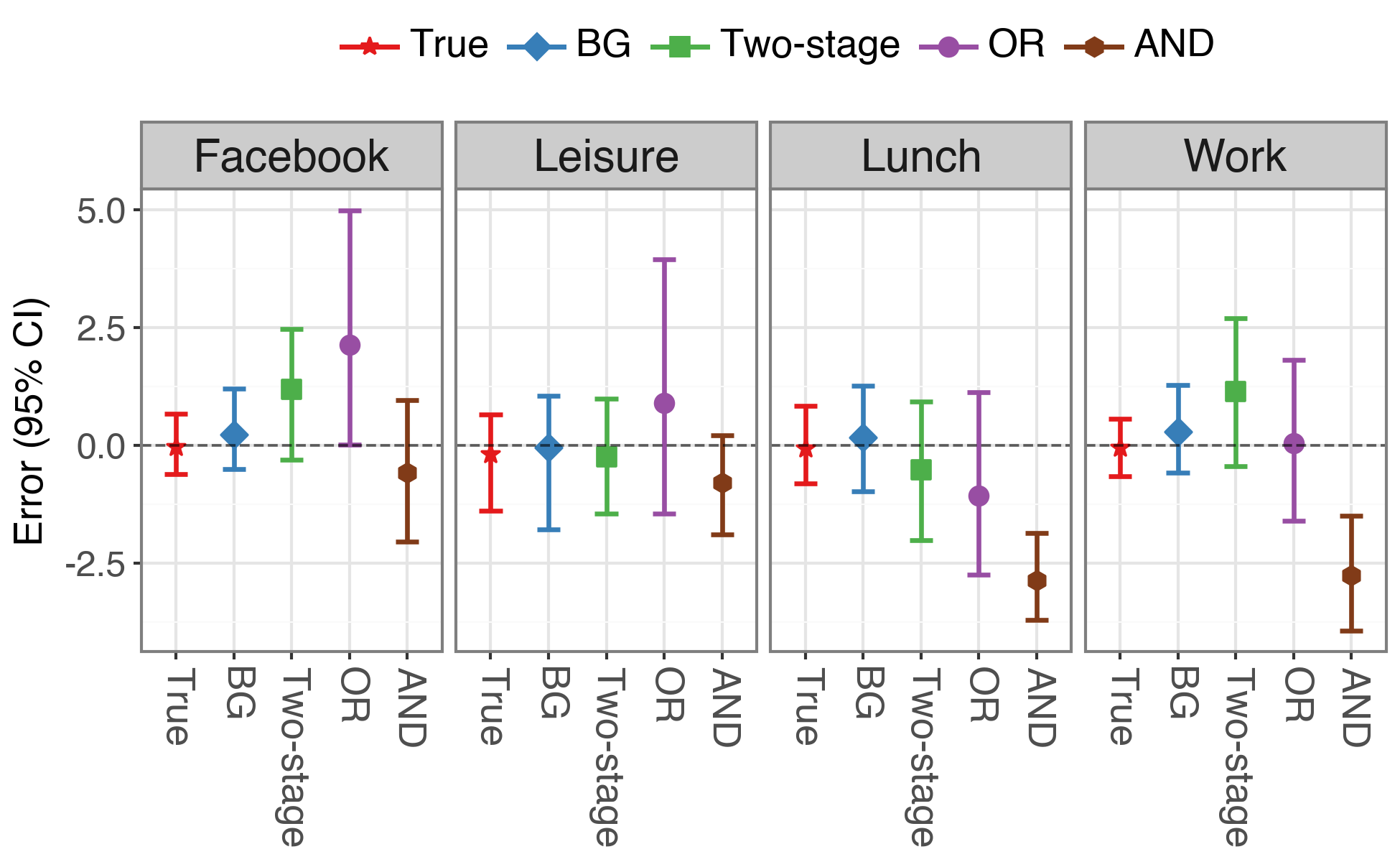}
%     \caption{Mean over $300$ iterations of the errors $\hat{\tau} - \tau$ and of the $95\%$ credible intervals for estimating the stochastic estimand, separately for each latent interference network layer $\bA^\ast$. ``True" denotes estimation using the known true network. ``OR" and ``AND" represent union and intersection aggregation of edges across observed layers. ``BG" refers to our proposed Block Gibbs sampler. Smaller absolute errors (closer to zero) indicate better performance.}
%     \label{fig:cs_bias_ci}
% \end{figure}

Table~\ref{tab:model_performance} summarizes the MAPE values across $300$ iterations for each latent interference network layer and estimation method.
The BG method consistently improved accuracy over other methods using the aggregated observed networks 
and two-stage methods with LIP refinement.
Specifically, the BG method had estimation errors closer to those obtained with the true network.
Two-stage methods with LIP refinement also improved over aggregation methods,
 but still lagged behind the BG sampler.
Union aggregation (``OR") outperformed the intersection (``AND")
in two of the four layers (Work and Lunch), while the opposite occurred in the other two layers (Facebook and Leisure).
In addition, the BG method achieved empirical coverage rates close to the nominal level across all layers,
 while the Two-stage method and the aggregated networks did not (Appendix~\ref{apdx.sec:sim}).

\begin{table}[!htbp]
\centering
\begin{tabular}{lcccc}
\toprule
 \emph{Method} & \emph{Facebook} & \emph{Leisure} & \emph{Lunch} & \emph{Work} \\
\midrule
True & 0.070 (0.045) & 0.104 (0.080) & 0.055 (0.041) & 0.040 (0.024) \\
BG & 0.132 (0.059) & 0.157 (0.095) & 0.111 (0.049) & 0.096 (0.036) \\
Two-stage & 0.397 (0.099) & 0.187 (0.059) & 0.169 (0.044) & 0.287 (0.073) \\
OR & 0.783 (0.198) & 0.419 (0.229) & 0.272 (0.057) & 0.320 (0.065) \\
AND & 0.501 (0.093) & 0.233 (0.061) & 0.425 (0.072) & 0.324 (0.055) \\
\bottomrule
\end{tabular}
\caption{Mean (SD) of MAPE values across $300$ iterations of the TTE estimand by layer 
and estimation method.} 
\label{tab:model_performance}
\end{table}

\section{Discussion}
\label{sec:discussion}

In this paper, we have introduced a novel framework
 for estimating causal effects when only proxy measurements of the latent interference
  network are available. Our proposed SCMs provide researchers flexibility in specifying models and choosing policy-relevant estimands.
   The latent interference network introduces considerable inferential challenges. 
   Within a Bayesian framework, these challenges are translated to the  posterior distribution being over a high-dimensional mixed parameter space composed of discrete (latent networks)
    and continuous (model parameters and latent variables) components. 
    To address these challenges, we developed a Block Gibbs sampling algorithm 
    that iteratively updates the continuous and discrete parameters. 
    Specifically, discrete updates utilize Locally Informed Proposals, 
    enabling efficient exploration of the large discrete latent network space.
Crucially, these updates leverage all available data modules (e.g., proxy networks, outcomes, and treatments)
 to propose edge flips. This process actively integrates information from these diverse components, ensuring that the latent network is informed by all relevant data sources.

% Our framework accommodates static, dynamic, and stochastic estimands, addressing various policy-related queries. Optimal policy selection can then be achieved through an additional post-processing step, for example, using Bayesian decision theory \citep{berger2013} or influence maximization approaches \citep{caljon2024optimizing}. 

Our work can be extended to include estimands for peer effects
 and direct interventions on the network structure \citep{Ogburn2022}. 
 Peer effects quantify how baseline outcomes, 
 interpreted as ``treatments", propagate through the interference network.
Furthermore, estimands quantifying the effects of intervening on
 the interference network $\bA^\ast$ structure can also be defined. 
 However, even if $\bA^\ast$ had been observed, in the presence of latent common causes $\bU$ of $\bA^\ast$ and $\bY$
(Figure \ref{fig:DAGs}), these estimands cannot generally be identified from observed data alone
Instead, their estimation relies on additional modeling assumptions and prior specifications in our Bayesian framework.

As with all model-based methods, our approach relies on accurate model specification.
The validity of these specifications can be assessed using prior and posterior predictive checks, which are integral to the Bayesian workflow \citep{gelman2020bayesian}.
We derive the posterior predictive distributions and describe how to draw from them in our settings in 
Appendix~\ref{apdx.sec:posterior}.
A promising direction for future research is extending our 
framework to generalized Bayesian methods \citep{bissiri2016general,matsubara2024generalized},
which can mitigate the impact of model misspecification on inference.

While our method performs well for small to moderate network sizes, 
the dimension of the latent network space grows quadratically $\mathcal{O}(N^2)$ with 
the number of units $N$. 
Consequently, the computational burden of recovering the full posterior distribution
 via MCMC increases significantly for large networks. 
 Adapting our framework to large-scale settings remains a challenging area for future research. 
 Potential directions include developing more scalable approximation algorithms 
 or shifting towards estimators that are robust to network misspecification 
 without requiring full latent network recovery \citep[e.g., by adapting the network-misspecification-robust design-based estimator of][to model-based settings]{Weinstein2023}.

Our findings also carry broader implications for recent methodological advancements.
Existing techniques that rely on flexible outcome modeling, 
such as Graph Neural Networks \citep{ma2021gnn}, 
or those proposing novel experimental designs for networked experiments \citep{ugander2013, Eckles2017} 
typically assume that the true interference network is observed. 
Our analysis highlights the practical reality that researchers often have only proxies.
Extending these methods presents a valuable avenue for future research. 

\acks{ The authors gratefully acknowledge support from the Israel Science Foundation (ISF grant No. 2300/25). BW is supported by the Data Science Fellowship granted by the Israeli Council for Higher Education.}

% Manual newpage inserted to improve layout of sample file - not
% needed in general before appendices/bibliography.

\newpage

\appendix

% Reset numbering and change section names
% \setcounter{proposition}{0} 
% \renewcommand{\theproposition}{A.\arabic{proposition}}

\renewcommand{\theequation}{\thesection.\arabic{equation}}
\numberwithin{equation}{section}

\renewcommand\thefigure{\thesection.\arabic{figure}}    
\counterwithin{figure}{section}

\renewcommand\thetable{\thesection.\arabic{table}}    
\counterwithin{table}{section}

\renewcommand{\thetheorem}{\thesection.\arabic{theorem}}
\numberwithin{theorem}{section}

% 1. Start recording contents for a group named "appendix"
\startcontents[appendix]

% 2. Print the contents
% Syntax: \printcontents[name]{prefix}{start-level}{code-before}
\printcontents[appendix]{l}{1}{\section*{Appendix -- Table of Contents}}

\section{Structural Causal Models}
\label{apdx.sec:SCM_extra}
We describe here the SCMs corresponding to the other DAGs in Figure~\ref{fig:DAGs}, extending the description of the SCMs of Figure~\hyperref[fig:DAGs]{1(a)} given in \eqref{eq:SCM}.
Throughout this section, we follow the main text and assume that $f_U, f_X, f_{A^\ast},f_{A_b}, f_Z, f_Y$ are fixed functions and $\beps_U, \beps_X, \beps_{A^\ast},\beps_{A_b}, \beps_Z, \beps_Y$ are noise terms. 
We assume the pairwise noise vectors are independent, e.g., $\beps_Z\indep \beps_Y$,  
but do not limit the dependence structure between units (e.g., the dependence of $\varepsilon_{Y_i}$
 and $\varepsilon_{Y_j}$).
Furthermore, we write the structural equations for proxies $\bA_b$ separately for each proxy, as in \eqref{eq:SCM}, but extending this for other scenarios, e.g., auto-regressive proxies, is possible as described in Section~\ref{subsec:struc}.
Also, each of the outcome models $f_Y$ can be modified to depend on summarizing functions $\phi_1,\phi_2,\phi_3$ as in \eqref{eq:SCM_mod}. 
Note that the causal estimands described in Section~\ref{subsec:estimands} are interventions on $\bZ$ and therefore their interpretation remains the same regardless of whether the proxies are causal or not and whether treatment assignment is based on the latent network or the proxies.

\paragraph{Causal proxies and proxy-based treatment assignment.}
The structural equations corresponding to Figure~\hyperref[fig:DAGs]{1(b)} are
\begin{equation*}
    \begin{aligned}
        \bU_i &= f_U\big(\beps_{U_i} \big), &&\quad i \in [N] \\ 
        \bX_i &= f_X\big( \beps_{X_i} \big), &&\quad i \in [N] \\
        \bA^\ast &= f_{A^\ast}\big(\bX, \bU, \beps_{A^\ast}\big), \\
        \bA_b &= f_{A_b}\big(\bA^\ast,\bX,\beps_{A_b}\big) ,&&\quad b \in [B] \\
        Z_i &= f_Z\big(\cA, \bX_i, \bX_{-i},\varepsilon_{Z_i}\big), &&\quad i\in [N] \\
        Y_i &= f_Y\big(Z_i,\bX_i, \bZ_{-i},\bX_{-i},\bA^\ast, \bU_i, \varepsilon_{Y_i} \big) &&\quad i\in [N].
    \end{aligned}
\end{equation*}

\paragraph{Non-causal proxies and latent-based treatment assignment.}
The structural equations corresponding to Figure~\hyperref[fig:DAGs]{1(c)} are
\begin{equation*}
    \begin{aligned}
        \bU_i &= f_U\big(\beps_{U_i} \big), &&\quad i \in [N] \\ 
        \bX_i &= f_X\big( \beps_{X_i} \big), &&\quad i \in [N] \\
        \bV_i &= f_V(\beps_{V_i}), &&\\
        \bA^\ast &= f_{A^\ast}\big(\bV, \bX, \bU, \beps_{A^\ast}\big), \\
        \bA_b &= f_{A_b}\big(\bV,\bX,\beps_{A_b}\big) ,&&\quad b \in [B] \\
        Z_i &= f_Z\big(\bA^\ast, \bX_i, \bX_{-i},\varepsilon_{Z_i}\big), &&\quad i\in [N] \\
        Y_i &= f_Y\big(Z_i,\bX_i, \bZ_{-i},\bX_{-i},\bA^\ast, \bU_i, \varepsilon_{Y_i} \big) &&\quad i\in [N],
    \end{aligned}
\end{equation*}
where $f_{V}$ is a fixed function and $\beps_{V_i}$ are noise terms similar to other mentioned above. Each of $\bV_i$ can be a vector of (latent) variables.

\paragraph{Non-causal proxies and proxy-based treatment assignment.}
The structural equations corresponding to Figure~\hyperref[fig:DAGs]{1(d)} are
\begin{equation*}
    \begin{aligned}
        \bU_i &= f_U\big(\beps_{U_i} \big), &&\quad i \in [N] \\ 
        \bX_i &= f_X\big( \beps_{X_i} \big), &&\quad i \in [N] \\
        \bV_i &= f_V(\beps_{V_i}), &&\\
        \bA^\ast &= f_{A^\ast}\big(\bV, \bX, \bU, \beps_{A^\ast}\big), \\
        \bA_b &= f_{A_b}\big(\bV,\bX,\beps_{A_b}\big) ,&&\quad b \in [B] \\
        Z_i &= f_Z\big(\cA, \bX_i, \bX_{-i},\varepsilon_{Z_i}\big), &&\quad i\in [N] \\
        Y_i &= f_Y\big(Z_i,\bX_i, \bZ_{-i},\bX_{-i},\bA^\ast, \bU_i, \varepsilon_{Y_i} \big) &&\quad i\in [N].
    \end{aligned}
\end{equation*}

\section{Identifiability of Structural Parameters}
\label{apdx.sec:identif}

We consider the problem of identifying the structural parameters
 $\bTheta = (\boeta, \bgamma, \bbeta, \btheta)$ from the observed data $\bD = (\bY, \bZ, \cA, \bX)$.
Let $\mathbf{m}(\bD)$ be a vector of observable moments (e.g., means and cross-covariances).
Denote the parameter space as $\Omega_\Theta$. We assume it is a subset of $\mathbb{R}^p$ for some $p>0$.
We define the \textit{moment map} $\mathcal{M} : \Omega_\Theta \to \mathbb{R}^k$ as the expected value
 of these moments implied by the model parameters:
\begin{equation}
    \label{eq:moment_map}
    \mathcal{M}(\bTheta) = \bbE_{\bD \mid \bTheta}[\mathbf{m}(\bD)].
\end{equation}
Global identification of $\bTheta$ can be established by showing that the moment map $\mathcal{M}(\bTheta)$ is injective over the parameter space $\Omega_\Theta$ \citep{rothenberg1971}.
\begin{definition}[Global Identification via Moments]
    \label{def:global_identification}
The parameter $\bTheta$ is globally identified with respect to the moments $\mathbf{m}(\bD)$ if the map $\mathcal{M}(\bTheta)$ is injective on the parameter space $\Omega_\Theta$. That is, for any $\bTheta_1, \bTheta_2 \in \Omega_\Theta$:
\begin{equation*}
    \mathcal{M}(\bTheta_1) = \mathcal{M}(\bTheta_2) \implies \bTheta_1 = \bTheta_2.
\end{equation*}
\end{definition}
Establishing injectivity of $\mathcal{M}(\bTheta)$ can be challenging in general.
However, we can leverage the structure of the SCMs and the conditional independencies 
it implies to derive explicit moment conditions that relate the observable moments to the parameters $\bTheta$.

Each of the observed variables $\bY$, $\bZ$, and $\cA$
can provide distinct information about the structural parameters $\bTheta$.
The conditional independencies between these variables given $\bA^\ast$ and $\bX$,
imply that any dependence observed in the data must be mediated through $\bA^\ast$.
Therefore, by analyzing the covariances and higher-order moments involving these variables, we can triangulate the information about the parameters $\bTheta$.
This moment-based perspective aligns with the identification results for latent 
variable models using three independent views \citep{Allman2009}.

For example, consider the settings of causal proxies with a latent-based treatment assignment (Figure~\hyperref[fig:DAGs]{1(a)}), which implies that $\bY \indep \cA \mid \bA^\ast, \bX$ and $\bZ \indep \cA \mid \bA^\ast, \bX$.
Let $\boldsymbol{\mu}_{\bY}(\bA^\ast; \boeta) = \bbE_{\boeta}[\bY \mid \bA^\ast, \bX]$,
$\boldsymbol{\mu}_{\bZ}(\bA^\ast; \bbeta) = \bbE_{\bbeta}[\bZ \mid \bA^\ast, \bX]$,
and $\boldsymbol{\mu}_{\cA}(\bA^\ast; \bgamma) = \bbE_{\bgamma}[\text{vec}(\cA) \mid \bA^\ast, \bX]$  
be the conditional means of $\bY$, $\bZ$, and $\cA$ given $\bA^\ast$ and $\bX$, where $\text{vec}(\cA)$ denotes the vectorization of the observed proxy networks $\cA$,
and when there are multiple proxies, we compute each vectorization separately.
By the law of total covariance and the conditional independence statements above, we have the following system of covariance moments
 relating the observable moments to the parameters $\bTheta$:
\begin{equation}
    \begin{aligned}
    \label{eq.apdx:cov_moments}
     \text{Cov}(\bY, \text{vec}(\cA) \mid \bX) &= 
     \text{Cov}_{\bA^\ast \mid \bX, \btheta} \left( \boldsymbol{\mu}_{\bY}(\bA^\ast; \boeta), \boldsymbol{\mu}_{\cA}(\bA^\ast; \bgamma) \right) \\
     \text{Cov}(\bZ, \text{vec}(\cA) \mid \bX) &= 
     \text{Cov}_{\bA^\ast \mid \bX, \btheta} \left( \boldsymbol{\mu}_{\bZ}(\bA^\ast; \bbeta), \boldsymbol{\mu}_{\cA}(\bA^\ast; \bgamma) \right) \\
    \end{aligned}
\end{equation}
In addition to the above moments, we can also consider higher-order moments or other cross-moments.
For instance, $\bbE[Y_i Y_j \mid \bX]$, $\bbE[Y_i Y_j \mid A_{ij} = k, \bX]$ for $k\in \{0,1\}$,
$\text{Cov}(\text{vec}(\bA_1), \text{vec}(\bA_2) \mid \bX)$ when $\cA$ contains at least two proxies,
and $\bbE[Z_i Z_j \mid \bX]$.
These additional moments can provide further constraints on the parameters $\bTheta$,
potentially aiding in establishing global identification.
The specific choice of moments will depend on the model structure and the available data.

We now turn to several examples of simplified models
where we can explicitly derive the moment conditions and establish global identification.
Specifically, we consider a class of models where the true network
$\bA^\ast$ is generated from a homogeneous Erd\H{o}s-R\'enyi model,
the observed proxy network $\cA$ is a noisy measurement of $\bA^\ast$,
and the outcomes $\bY$ follows a linear model where exposures $\phi_1$
is the number of treated neighbors.

\subsection{Examples of Global Identification}
Assume the following model:
\begin{equation}
    \label{eq.apdx:simple_model}
    \begin{aligned}
        A^\ast_{ij} &\sim \text{Ber}(\theta) \\
        A_{ij} \mid A^\ast_{ij}=k & \sim \text{Ber}(\gamma_k), \quad k=0,1 \\
        Z_i &\sim \text{Ber}(p_z) \\
        Y_i &= \eta_1 Z_i + \eta_2 \sum_{j \neq i} A^\ast_{ij}Z_j + \varepsilon_i, \\
    \end{aligned}
\end{equation}
where $\varepsilon_i$ are independent and zero mean error terms with $\bbE[\varepsilon_i \mid \bZ, \bA^\ast] = 0$.
In this model, the structural parameters are $\bTheta = (\theta, \gamma_0, \gamma_1, p_z, \eta_1, \eta_2)$.
The true network $\bA^\ast$ is latent. We seek to recover the parameters $\bTheta$ from the observed data $\bD = (\bY,\bZ,\cA)$.
If we are able to do so, then we can recover the true network model $p(\bA^\ast \mid \theta)$,
and therefore the distribution of causal estimands (see Remark~\ref{remark:1}).
The simple model \eqref{eq.apdx:simple_model} is an example 
of causal proxies (Figure~\hyperref[fig:DAGs]{1(a)-(b)}), when there is one proxy and no covariates $\bX$ or latent variables $\bV$.

We consider here different extensions of the model \eqref{eq.apdx:simple_model} and
investigate their identifiability. 
To establish global identification, we utilize Definition~\ref{def:global_identification} 
and determine whether the system of moment equations has a unique solution,
thus ensuring that the moment map $\mathcal{M}(\bTheta)$ is injective. 

Starting with $p_z$, obviously, we can write $\bbE[Z_i] = p_z$ with the marginal first moment 
of $Z_i$. 
Also, note that 
\begin{equation}
    \label{eq.apdx:EY_given_Z}
    \bbE[Y_i \mid Z_i=z] = \eta_1 z + \eta_2 (N-1) p_z \theta,
\end{equation}
therefore, 
\begin{equation}
    \label{eq.apdx:EY}
    \bbE[Y_i] = \eta_1 p_z + \eta_2 (N-1) p_z \theta,
\end{equation}
and we can represent $\eta_1$ by
\begin{equation}
    \label{eq.apdx:eta_1}
    \eta_1 = \bbE[Y_i \mid Z_i=1] - \bbE[Y_i \mid Z_i=0].
\end{equation}
We also have
\begin{equation}
    \label{eq.apdx:m_a}
    \mu_A = \Pr(A_{ij}=1) = \gamma_1 \theta + \gamma_0 (1-\theta).
\end{equation}
Let $d_i^{obs} = \sum_{j \neq i} A_{ij}$ be the observed degree of unit $i$ in the proxy network.
Look at covariance between $Y_i$ and $d_i^{obs}$,
\begin{equation*}
    \begin{aligned}
        \text{Cov}(Y_i, d_i^{obs}) &= 
        \text{Cov}(\eta_1 Z_i + \eta_2 \sum_{j \neq i} A^\ast_{ij}Z_j, \sum_{j \neq i} A_{ij}) \\  
        &= \eta_2 \sum_{j \neq i} \text{Cov}(A^\ast_{ij}Z_j, A_{ij}) \\
        &= \eta_2 (N-1)p_z \theta(\gamma_1 - \mu_A),
    \end{aligned}
\end{equation*}
where we used the fact that 
$\bbE[Z_j A_{ij} A^\ast_{ij}] = p_z \theta \gamma_1$ and
$\bbE[Z_j A_{ij}^\ast] = p_z \theta$. 
Therefore, we can write $\eta_2$ as
\begin{equation}
    \label{eq.apdx:eta_2}
    \eta_2 = \frac{\text{Cov}(Y_i, d_i^{obs})}{(N-1)p_z \theta(\gamma_1 - \mu_A)}
    = \frac{\text{Cov}(Y_i, d_i^{obs})}{(N-1)p_z \theta(1-\theta)(\gamma_1 - \gamma_0)}.
\end{equation}
Omitting $p_z$ and $\eta_1$, as they are trivially identified, we focus on the remaining parameters
$(\theta, \gamma_0, \gamma_1, \eta_2)$.
However, we have three equations \eqref{eq.apdx:EY}, \eqref{eq.apdx:m_a}, and \eqref{eq.apdx:eta_2}
for four unknowns, so without further assumptions the model \eqref{eq.apdx:simple_model} is not identified.
Consider the following extension to \eqref{eq.apdx:simple_model} that yield global identification.
\paragraph{Known $\gamma_0$ or $\gamma_0 = f(\gamma_1)$.}
           If either $\gamma_0$ is known or there exists a known function $f$ such that $\gamma_0 = f(\gamma_1)$,
           then we can solve the three equations for the three unknowns $(\theta, \gamma_1, \eta_2)$.
              For example, if we know that $\gamma_0 = 0$, then from \eqref{eq.apdx:m_a} we have
           $\gamma_1 = \mu_A / \theta$, and substituting this into \eqref{eq.apdx:eta_2} gives
           $\eta_2 = \text{Cov}(Y_i, d_i^{obs}) / \big((N-1)p_z \mu_A (1 - \theta)\big)$.
           We can then substitute $\eta_2$ into \eqref{eq.apdx:EY_given_Z} to solve for $\theta$.
\paragraph{Two proxy networks.}
           Assume we have two proxy networks $\bA_1$ and $\bA_2$ independently generated given $\bA^\ast$ 
           as in \eqref{eq.apdx:simple_model} with the same parameters $\gamma_0$ and $\gamma_1$.
           Thus, given $A^\ast_{ij} = 1$ we have 
           \begin{equation*}
            \lvert A_{1,ij} - A_{2,ij} \rvert 
            = 
            \begin{cases}
                -1 & \text{w.p. } \gamma_1 (1-\gamma_1) \\
                0 & \text{w.p. } 1- 2\gamma_1 (1-\gamma_1) \\
                1 & \text{w.p. } \gamma_1 (1-\gamma_1),
            \end{cases}
           \end{equation*}
           and given $A^\ast_{ij} = 0$ we have
            \begin{equation*}
            \lvert A_{1,ij} - A_{2,ij} \rvert 
            = 
            \begin{cases}
                -1 & \text{w.p. } \gamma_0 (1-\gamma_0) \\
                0 & \text{w.p. } 1- 2\gamma_0 (1-\gamma_0) \\
                1 & \text{w.p. } \gamma_0 (1-\gamma_0).
            \end{cases}
           \end{equation*}
           Therefore, marginally we have
              \begin{equation*}
            \Pr\left(\lvert A_{1,ij} - A_{2,ij} \rvert = 1\right) 
            = 2 \left[\gamma_1 (1-\gamma_1) \theta + \gamma_0 (1-\gamma_0) (1-\theta)\right].
           \end{equation*}
           Define the mean absolute difference between the two proxy networks as
           \begin{equation*}
            m_{AD} = \frac{2}{N(N-1)} \sum_{i<j} \lvert A_{1,ij} - A_{2,ij} \rvert.
           \end{equation*}
           Then, $\bbE [m_{AD}] = 2 \left[\gamma_1 (1-\gamma_1) \theta + \gamma_0 (1-\gamma_0) (1-\theta)\right]$.
           Therefore, we have four equations \eqref{eq.apdx:EY}, \eqref{eq.apdx:m_a}, \eqref{eq.apdx:eta_2}, and 
           the latter expectation for $m_{AD}$ for four unknowns $(\theta, \gamma_0, \gamma_1, \eta_2)$,
           and the model is globally identified.

\paragraph{Treatment as a function of $\bA^\ast$.}
          Assume that treatment assignment depends on the true network $\bA^\ast$.
          Specifically, assume that independently for each unit $i$,
          \begin{equation*}
            \Pr(Z_i = 1 \mid \bA^\ast) = p_z \frac{d^\ast_i}{\bbE[d^\ast_i]} = 
            p_z \frac{d^\ast_i}{(N-1)\theta} \propto d^\ast_i, 
          \end{equation*}  
          where $d^\ast_i = \sum_{j \neq i} A^\ast_{ij}$ is the degree of unit $i$ in the true network.
          We can identify $p_z$ from the marginal first moment $\bbE[Z_i] = p_z$.
          Note that $Z_i$ and $Z_j$ are independent only given $\bA^\ast$,
            but marginally they are dependent. 
            Let $\kappa = \frac{p_z}{(N-1)\theta}$.
          We have
          \begin{equation*}
            \begin{aligned} 
                \bbE[Z_i Z_j] &= \bbE\left[\bbE[Z_i Z_j \mid \bA^\ast]\right] 
                \\ &= 
                \bbE\left[\left(\kappa \sum_{k \neq i} A^\ast_{ik} \right)
                            \left(\kappa \sum_{k \neq j} A^\ast_{jk} \right)
                \right]
                \\ &= 
                \kappa^2 \bbE\left[
                    (A^\ast_{ij})^2
                    + ((N-1)^2 - 1) A^\ast_{ik} A^\ast_{jl}
                \right]
                \\ &=
                \kappa^2 \left[
                    \theta + ((N-1)^2 - 1) \theta^2
                \right]
                \\ &=
                \frac{p_z^2}{(N-1)^2 \theta^2} \left[
                    \theta + ((N-1)^2 - 1) \theta^2
                \right]
                \\ &= 
                \frac{p_z^2}{(N-1)^2} \left[
                    \frac{1}{\theta} + ((N-1)^2 - 1)
                \right]
                \end{aligned}
            \end{equation*}
            solving for $\theta$ gives
            \begin{equation*}
                \theta = \frac{p_z^2}{(N-1)^2 (\bbE[Z_iZ_j] - p_z^2) + p_z^2}.
            \end{equation*}
            With this expression, we can solve for $\eta_2$ using \eqref{eq.apdx:EY}. Then, solve for $\gamma_1$ and $\gamma_0$ using \eqref{eq.apdx:m_a} and \eqref{eq.apdx:eta_2}.
            Furthermore, 
           given $\bA^\ast$, the treatment $Z_i$ is independent of the observed degree $d_i^{obs}$,
           therefore, using the law of total covariance, we can express the covariance between $Z_i$ and $d_i^{obs}$ as
           \begin{equation*}
            \begin{aligned}
                Cov(Z_i, d_i^{obs}) &= 
                Cov\left(
                    \bbE[Z_i \mid \bA^\ast], 
                    \bbE[d_i^{obs} \mid \bA^\ast]
                 \right)
                \\ &=
                Cov\left(
                    \kappa \sum_{j \neq i} A^\ast_{ij}, 
                    \sum_{j \neq i}[ A^\ast_{ij} \gamma_1 + (1 - A^\ast_{ij}) \gamma_0]
                 \right)
                \\ &=
                    Cov\left(
                    \kappa \sum_{j \neq i} A^\ast_{ij}, 
                    \sum_{j \neq i} A^\ast_{ij} (\gamma_1-\gamma_0)
                 \right)
                \\ &=
                (N-1) \kappa \theta (1-\theta) (\gamma_1 - \gamma_0)
                \\ &=
                p_z(1-\theta) (\gamma_1 - \gamma_0).    
            \end{aligned}
           \end{equation*}
           Thus, we can also express $\gamma_1 - \gamma_0 = \frac{Cov(Z_i, d_i^{obs})}{p_z(1-\theta)}$,
           then solve for $\eta_2$ using \eqref{eq.apdx:eta_2} and finally solve 
           for $\gamma_0$ and $\gamma_1$ using \eqref{eq.apdx:m_a}. 
           This implies that the structural parameters are over-identified in this scenario.
\paragraph{Network-correlated errors.}
Assume that the error terms $\varepsilon_i$ are still mean zero but have a covariance structure
$\text{Cov}(\varepsilon_i, \varepsilon_j) = \rho A^\ast_{ij}$ for some $\rho > 0$.
These represent a neighborhood correlation structure in the outcomes.
The parameters (other than $p_z$ and $\eta_1$) are now $(\theta, \gamma_0, \gamma_1, \eta_2, \rho)$.
We have,
\begin{equation}
    \label{eq.apdx:Cov_Yi_Yj}
    Cov(Y_i, Y_j) = \eta_2^2 (N-2) \theta^2 p_z (1-p_z ) + \rho \theta.
    % \approx  
    % \eta_2^2 (N-2) \theta^2 p_z (1-p_z),
\end{equation}
% where we ignored the term $\rho \theta$ in the approximation since for large $N$ it is negligible.
Unlike the independent error case, this moment depends on $\rho$.
Next, we define the auxiliary variable $R_{ij} = \eta_1 Z_i + \eta_2 \sum_{k \neq i,j} A^\ast_{ik} Z_k$.
We can write $Y_i = R_{ij} + \eta_2 A^\ast_{ij} Z_j + \varepsilon_i$.
Thus, $\bbE[R_{ij}] = \bbE[Y_i] - \eta_2 \theta p_z$.
Also, note that $\Pr(A^\ast_{ij} =1 \mid A_{ij}=1) = \frac{\theta \gamma_1}{\mu_A}$ and
$\Pr(A^\ast_{ij} =1 \mid A_{ij}=0) = \frac{\theta (1-\gamma_1)}{1-\mu_A}$.
Then, we have for $k=0,1$
\begin{equation*}
    \begin{aligned}
        \bbE[Y_i Y_j \mid A_{ij}=k] &=
        \bbE\left[
            (R_{ij} + \eta_2 A^\ast_{ij} Z_j + \varepsilon_i)
            (R_{ji} + \eta_2 A^\ast_{ij} Z_i + \varepsilon_j)
            \mid A_{ij}=k
        \right]
         \\ &=
        \bbE[R_{ij} R_{ji}] +
         \bbE\left[ \bbE \left[
            2 R_{ij} \eta_2 A^\ast_{ij} Z_j 
            + 
            (\eta_2 A^\ast_{ij})^2 Z_i Z_j 
            +
            \varepsilon_i \varepsilon_j
            \mid A^\ast_{ij}
            \right]
            \mid A_{ij}=k
            \right]
            \\ &=
            \bbE[R_{ij} R_{ji}] 
            +
            \Pr(A^\ast_{ij}=1 \mid A_{ij}=k)
            \left[ 
                2\eta_2p_z (\bbE[Y_i] - \eta_2 \theta p_z) + \eta_2^2 p_z^2 + \rho.
            \right]
    \end{aligned}
\end{equation*}
Taking the difference for $k=1$ and $k=0$ gives
\begin{equation}
    \label{eq.apdx:diff_Yi_Yj}
    \begin{aligned}
        \bbE[Y_i Y_j \mid A_{ij}=1] - \bbE[Y_i Y_j \mid A_{ij}=0] &=
      \left[
        \frac{\gamma_1\theta}{\mu_A} - \frac{(1-\gamma_1)\theta}{1-\mu_A}
      \right]
        \left[ 
                2\eta_2p_z (\bbE[Y_i] - \eta_2 \theta p_z) + \eta_2^2 p_z^2 + \rho.
                \right]
        \end{aligned}
    \end{equation}
This provides five equations for the five unknowns $(\theta, \gamma_0, \gamma_1, \eta_2, \rho)$,
thereby achieving global identification.
The system formed by equations \eqref{eq.apdx:EY}, \eqref{eq.apdx:m_a},
 \eqref{eq.apdx:eta_2}, \eqref{eq.apdx:Cov_Yi_Yj}, and \eqref{eq.apdx:diff_Yi_Yj}
  provides five independent equations for the five 
  unknowns $(\theta, \gamma_0, \gamma_1, \eta_2, \rho)$,
   thereby achieving global identification.

\subsection{Local Identifiability}
\label{apdx.subsec:local_id}

We now discuss local identifiability of the model parameters $\bTheta$.
 Local identifiability is weaker than global identifiability.
 It means that in a neighborhood of the true parameter value $\bTheta_0$,
 there are no other observationally equivalent parameter values \citep{rothenberg1971}.
 That is, there is identifiability only in a small region around $\bTheta_0$.
 A standard approach to establish local identifiability is through the Fisher Information Matrix (FIM).
 Specifically, \citet[][Theorem~1]{rothenberg1971}
 established that $\bTheta$ is locally identifiable at $\bTheta_0$ if and only if
 the FIM is nonsingular at $\bTheta_0$.
 We describe here the general conditions for local identifiability
 in our setting with latent network $\bA^\ast$.
 We then provide some intuition on when these conditions hold.
 This result is formal and complements the global identification arguments presented earlier.
 However, verifying local identifiability through the FIM in practice 
 is challenging due to the intractability of the observed data likelihood.

Consider for simplicity the causal proxies and latent-based treatment assignment (Figure~\hyperref[fig:DAGs]{1(a)}).
Let $\bTheta = (\boeta, \bbeta_Z, \bgamma, \btheta)$ be the vector of all model parameters.
Omitting covariates for simplicity, the observed data are $\bD = (\bY, \bZ, \cA)$.
The latent network $\bA^\ast$ is missing.
In this case, the complete data (if $\bA^\ast$ were observed) likelihood factorizes as
\begin{equation}
    \label{eq.apdx:complete_data_likelihood}
   p(\bD, \bA^\ast \mid \bTheta) = 
   p(\bY \mid \bZ, \bA^\ast, \boeta)
   p(\bZ \mid \bA^\ast, \bbeta_Z)
   p(\cA \mid \bA^\ast, \bgamma)
   p(\bA^\ast \mid \btheta).
\end{equation}
However, the observed data likelihood is $p(\bD \mid \bTheta) = \sum_{\bA^\ast} p(\bD, \bA^\ast \mid \bTheta)$,
which is impossible to compute in closed form for most models of interest.
The complete data score function is defined as 
\begin{equation}
    \label{eq.apdx:complete_data_score}
    \bS_{comp}(\bD, \bA^\ast \mid \bTheta) = \nabla_{\bTheta} \log p(\bD, \bA^\ast \mid \bTheta).
\end{equation}
Similarly, the observed data score function is
\begin{equation}
    \label{eq.apdx:observed_data_score}
    \bS_{obs}(\bD \mid \bTheta) = \nabla_{\bTheta} \log p(\bD \mid \bTheta).
\end{equation}
We can represent the observed data score function \eqref{eq.apdx:observed_data_score} as
 the posterior expected complete data score function \eqref{eq.apdx:complete_data_score}.
 Since $\nabla_{\bTheta} p(\bD \mid \bTheta) = \sum_{\bA^\ast} \nabla_{\bTheta} p(\bD, \bA^\ast \mid \bTheta)$, 
 we have 
\begin{equation}
    \label{eq.apdx:score_decomp}
    \begin{aligned}
    \bS_{obs}(\bD \mid \bTheta) &=
    \frac{\nabla_{\bTheta}p(\bD \mid \bTheta)}{p(\bD \mid \bTheta)} 
    \\ &= 
    \sum_{\bA^\ast} \frac{\nabla_{\bTheta} p(\bD, \bA^\ast \mid \bTheta)}{p(\bD \mid \bTheta)}
    \\ &=
    \sum_{\bA^\ast} \frac{p(\bD, \bA^\ast \mid \bTheta)}{p(\bD \mid \bTheta)}
    \frac{\nabla_{\bTheta} p(\bD, \bA^\ast \mid \bTheta)}{p(\bD, \bA^\ast \mid \bTheta)}
    \\ &=
    \sum_{\bA^\ast} p(\bA^\ast \mid \bD, \bTheta)
    \bS_{comp}(\bD, \bA^\ast \mid \bTheta)
    \\ &= 
    \mathbb{E}_{\bA^\ast \mid \bD, \bTheta}
    \left[ \bS_{comp}(\bD, \bA^\ast \mid \bTheta) \right].
    \end{aligned}
\end{equation}
The Fisher Information Matrix (FIM) of the complete data is defined as the 
covariance of the complete data scores:
\begin{equation}
    \label{eq.apdx:complete_data_FIM}
    \begin{aligned}
        \mathscr{I}_{comp}(\bTheta) &= 
        \mathbb{E}_{\bD, \bA^\ast \mid \bTheta}
        \left[ \bS_{comp}(\bD, \bA^\ast \mid \bTheta)
        \bS_{comp}(\bD, \bA^\ast \mid \bTheta)^\top \right]
        \\ &= 
        \text{Var}_{\bD, \bA^\ast \mid \bTheta}
        \left[ \bS_{comp}(\bD, \bA^\ast \mid \bTheta)\right].
    \end{aligned}
\end{equation}
By the law of total variance, we can decompose the complete data FIM \eqref{eq.apdx:complete_data_FIM} as
\begin{equation}
    \label{eq.apdx:FIM_decomp}
    \mathscr{I}_{comp}(\bTheta) = 
    \mathbb{E}_{\bD \mid \bTheta}
    \left[
    \text{Var}_{\bA^\ast \mid \bD, \bTheta}
    \left[ \bS_{comp}(\bD, \bA^\ast \mid \bTheta) \right]
    \right]
    +
    \text{Var}_{\bD \mid \bTheta}
    \left[
    \mathbb{E}_{\bA^\ast \mid \bD, \bTheta}
    \left[ \bS_{comp}(\bD, \bA^\ast \mid \bTheta) \right]
    \right].
\end{equation}
From \eqref{eq.apdx:score_decomp}, we see that the second term in \eqref{eq.apdx:FIM_decomp} is precisely the observed data FIM:
\begin{equation}
    \label{eq.apdx:observed_data_FIM}
    \mathscr{I}_{obs}(\bTheta) = 
    \text{Var}_{\bD \mid \bTheta}
    \left[\bS_{obs}(\bD \mid \bTheta)
    \right]
    =
    \text{Var}_{\bD \mid \bTheta}
    \left[
    \mathbb{E}_{\bA^\ast \mid \bD, \bTheta}
    \left[ \bS_{comp}(\bD, \bA^\ast \mid \bTheta) \right]
    \right].
\end{equation}
Moreover, the first term in \eqref{eq.apdx:FIM_decomp} is the expected missing FIM, 
representing the variance of the complete data scores with respect to the 
posterior distribution of the missing data $\bA^\ast$ given the observed data $\bD$ and parameters $\bTheta$.
This is just the missing data FIM: 
\begin{equation}
    \label{eq.apdx:missing_information_FIM}
    \mathscr{I}_{miss}(\bTheta) = 
    \mathbb{E}_{\bD \mid \bTheta}
    \left[
    \text{Var}_{\bA^\ast \mid \bD, \bTheta}
    \left[ \bS_{comp}(\bD, \bA^\ast \mid \bTheta) \right]
    \right].
\end{equation}
Combining \eqref{eq.apdx:FIM_decomp}, \eqref{eq.apdx:observed_data_FIM}, and \eqref{eq.apdx:missing_information_FIM}, we obtain the missing information principle \citep{Louis1982}:
\begin{equation}
    \label{eq.apdx:missing_information_principle}
    \mathscr{I}_{obs}(\bTheta) = 
    \mathscr{I}_{comp}(\bTheta) - \mathscr{I}_{miss}(\bTheta).
\end{equation}
Assuming there are no additional latent variables in the SCM (Figure~\hyperref[fig:DAGs]{1(a)})\footnote{If there are any latent variables, the same approach will work but researchers have to model their process.}, 
the complete data likelihood \eqref{eq.apdx:complete_data_likelihood} composed of several independent components (in terms of the parameters).
Thus, the complete data FIM \eqref{eq.apdx:complete_data_FIM} is block-diagonal with
 blocks corresponding to each of the modules in \eqref{eq.apdx:complete_data_likelihood}.
For example, under \eqref{eq.apdx:complete_data_likelihood}, we have 
\begin{equation*}
    \mathscr{I}_{comp}(\bTheta) = \text{diag}\big(\mathscr{I}_{comp}(\boeta), \mathscr{I}_{comp}(\bbeta_Z), \mathscr{I}_{comp}(\bgamma), \mathscr{I}_{comp}(\btheta)\big)
\end{equation*}
where, for example, 
$\mathscr{I}_{comp}(\boeta) = \bbE_{\bD,\bA^\ast \mid \boeta}\left[-\nabla_{\boeta}^2 \log p(\bY \mid \bZ,\bA^\ast,\boeta)\right]$.
That is, had we observed $\bA^\ast$, the parameters of each module would be independent
and the complete data FIM block-diagonal.
However, the missing data FIM \eqref{eq.apdx:missing_information_FIM} is not block-diagonal, 
as the missing network $\bA^\ast$ induces dependencies between the different modules.
Therefore, from \eqref{eq.apdx:missing_information_principle} the observed data FIM $\mathscr{I}_{obs}$
 is not block-diagonal either.
For the observed data FIM to be positive definite and therefore nonsingular, the missing data FIM cross-blocks must not be too large compared to the complete data FIM blocks.
Namely, the observed data FIM need to be strictly block-diagonal dominant.
This leads to the following sufficient conditions for local identifiability of the parameters $\bTheta$
at point $\bTheta_0$.
\begin{theorem}[Local Identifiability]
    \label{thm.apdx:identif}
    Let $\mathscr{I}_{obs, kk}(\bTheta_0) = \mathscr{I}_{comp, kk}(\bTheta_0) - \mathscr{I}_{miss, kk}(\bTheta_0)$
    be the observed data FIM in module $k$ at point $\bTheta_0$. 
    The parameters $\bTheta$ are locally identifiable at point $\bTheta_0$ if $\mathscr{I}_{obs}$ 
    is a strictly block-diagonal dominant matrix. 
    That is, if for every module $k$,
\begin{equation}
    \label{eq.apdx:diag_dominance}
    \lambda_{\min}\left(\mathscr{I}_{comp, kk}(\bTheta_0) - \mathscr{I}_{miss, kk}(\bTheta_0)\right) 
    > 
    \sum_{j \neq k} \lVert \mathscr{I}_{miss, jk}(\bTheta_0) \rVert_2,
\end{equation}
where $\lambda_{\min}$ denotes the minimum eigenvalue and $\lVert \cdot \rVert_2$ is the spectral norm.
\end{theorem}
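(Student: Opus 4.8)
The plan is to reduce the claim to the nonsingularity of the observed-data information matrix $\mathscr{I}_{obs}(\bTheta_0)$ and then show that the stated block-diagonal dominance forces this matrix to be positive definite. By \citet[][Theorem~1]{rothenberg1971}, $\bTheta$ is locally identifiable at $\bTheta_0$ whenever $\mathscr{I}_{obs}(\bTheta_0)$ is nonsingular; since a Fisher information matrix is a covariance and hence positive semi-definite, nonsingularity is equivalent to positive definiteness. Moreover, once $\mathscr{I}_{obs}(\bTheta_0)$ is nonsingular, $\bTheta_0$ is automatically a regular point of the information matrix (the rank is locally constant by continuity of the determinant), so Rothenberg's regularity hypothesis holds for free. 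It therefore suffices to verify that \eqref{eq.apdx:diag_dominance} implies $\mathscr{I}_{obs}(\bTheta_0) \succ 0$.

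First I would exploit the block structure already established in the text. Because the complete-data likelihood \eqref{eq.apdx:complete_data_likelihood} factorizes into independent modules, $\mathscr{I}_{comp}(\bTheta_0)$ is block-diagonal, so its off-diagonal blocks vanish. Combined with the missing information principle \eqref{eq.apdx:missing_information_principle}, this yields $\mathscr{I}_{obs, jk}(\bTheta_0) = -\mathscr{I}_{miss, jk}(\bTheta_0)$ for $j \neq k$ and $\mathscr{I}_{obs, kk}(\bTheta_0) = \mathscr{I}_{comp, kk}(\bTheta_0) - \mathscr{I}_{miss, kk}(\bTheta_0)$ on the diagonal. In particular the off-diagonal spectral norms in \eqref{eq.apdx:diag_dominance} are exactly $\lVert \mathscr{I}_{obs, jk}(\bTheta_0) \rVert_2$, so the hypothesis becomes a statement purely about the blocks of the symmetric matrix $M \defeq \mathscr{I}_{obs}(\bTheta_0)$.

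The core step is a block generalization of the Gershgorin/diagonal-dominance criterion for symmetric matrices. Writing a conformally partitioned vector $x = (x_1,\dots,x_K)$, I would lower-bound the quadratic form by
\begin{equation*}
x^\top M x = \sum_k x_k^\top M_{kk} x_k + \sum_{j \neq k} x_j^\top M_{jk} x_k \ge \sum_k \lambda_{\min}(M_{kk}) \lVert x_k \rVert^2 - \sum_{j \neq k} \lVert M_{jk} \rVert_2 \lVert x_j \rVert \lVert x_k \rVert .
\end{equation*}
Applying $\lVert x_j \rVert \lVert x_k \rVert \le \tfrac{1}{2}\big(\lVert x_j \rVert^2 + \lVert x_k \rVert^2\big)$ and regrouping the double sum using the symmetry $\lVert M_{jk}\rVert_2 = \lVert M_{kj}\rVert_2$, the off-diagonal contribution is bounded by $\sum_k \big(\sum_{j\neq k}\lVert M_{jk}\rVert_2\big)\lVert x_k\rVert^2$, so that
\begin{equation*}
x^\top M x \ge \sum_k \Big(\lambda_{\min}(M_{kk}) - \sum_{j\neq k}\lVert M_{jk}\rVert_2\Big)\lVert x_k\rVert^2 .
\end{equation*}
Under \eqref{eq.apdx:diag_dominance} every coefficient is strictly positive, giving $x^\top M x > 0$ for all $x \neq 0$, hence $M \succ 0$ and the desired nonsingularity; the conclusion then follows from Rothenberg's criterion as above.

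I expect the main obstacle to be expository rather than mathematical. The block-dominance-implies-positive-definiteness step is a clean lift of the scalar case, but care is needed in the regrouping of the double sum so that it lands exactly on the row sums $\sum_{j\neq k}\lVert \mathscr{I}_{miss, jk}(\bTheta_0)\rVert_2$ appearing in the hypothesis, and in confirming that the relevant blocks are symmetric (so the quadratic-form characterization of positive definiteness is legitimate). A secondary point to state carefully is the verification of Rothenberg's regularity condition, which, as noted, is immediate once nonsingularity is in hand.
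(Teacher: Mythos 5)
Your proposal is correct, and it reaches the same headline reduction as the paper (Rothenberg's nonsingularity criterion plus the observation that, by block-diagonality of $\mathscr{I}_{comp}$, the off-diagonal blocks of $\mathscr{I}_{obs}$ are exactly $-\mathscr{I}_{miss,jk}$), but the key linear-algebra step is executed by a genuinely different mechanism. The paper argues by contradiction in the style of the block Levy--Desplanques theorem: it takes a putative null vector $\mathbf{c}$ of $\mathscr{I}_{obs}$, isolates the block equation for the module $k$ with the largest $\lVert \mathbf{c}_k \rVert$, and plays $\lambda_{\min}(\mathscr{I}_{comp,kk}-\mathscr{I}_{miss,kk})\lVert \mathbf{c}_k\rVert$ against $\sum_{j\neq k}\lVert \mathscr{I}_{miss,jk}\rVert_2 \lVert \mathbf{c}_k\rVert$ to contradict \eqref{eq.apdx:diag_dominance}. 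You instead give a direct energy argument: lower-bound the quadratic form $x^\top \mathscr{I}_{obs}\, x$ blockwise, absorb the cross terms via $\lVert x_j\rVert \lVert x_k\rVert \le \tfrac{1}{2}(\lVert x_j\rVert^2 + \lVert x_k\rVert^2)$, and regroup using the symmetry $\lVert M_{jk}\rVert_2 = \lVert M_{kj}\rVert_2$ so that the row sums of \eqref{eq.apdx:diag_dominance} appear as coefficients; your regrouping is executed correctly (the split of the double sum gives half a row-sum and half a column-sum contribution, which coincide by symmetry). Your route buys slightly more: it establishes $\mathscr{I}_{obs}(\bTheta_0)\succ 0$ rather than mere nonsingularity, and it avoids proof by contradiction; the paper's route is marginally more economical, needing only the eigenvalue inequality $\lVert M_{kk}\mathbf{c}_k\rVert_2 \ge \lambda_{\min}(M_{kk})\lVert \mathbf{c}_k\rVert_2$ for the (symmetric, positive definite) diagonal blocks rather than the AM--GM regrouping. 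You also explicitly verify Rothenberg's regularity hypothesis (the rank of the information matrix is locally constant near a point of nonsingularity, granting continuity of the FIM in $\bTheta$), a point the paper passes over silently when citing his Theorem~1; this is a welcome addition rather than a defect of either argument.
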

\begin{proof}
    We can show it by contradiction. Assume that \eqref{eq.apdx:diag_dominance} holds,
    but $\mathscr{I}_{obs} = \mathscr{I}_{obs}(\bTheta_0)$ is singular at $\bTheta_0$.
    Then, there exists a non-zero vector $\mathbf{c}$ such that
    $\mathscr{I}_{obs} \mathbf{c} = \boldsymbol{0}$.
    Partition $\mathbf{c}$ into blocks $\mathbf{c} = (\mathbf{c}_1^\top, \ldots, \mathbf{c}_K^\top)^\top$.
    Thus, for any block $k$ of $\mathscr{I}_{obs}$, we have
    \begin{equation*}
        \mathscr{I}_{obs, kk} \mathbf{c}_k +
        \sum_{j \neq k} \mathscr{I}_{obs, jk} \mathbf{c}_j = \boldsymbol{0}.
    \end{equation*}
    By the information decomposition \eqref{eq.apdx:missing_information_principle}, we can write it as
    \begin{equation*}
        \left(\mathscr{I}_{comp, kk} - \mathscr{I}_{miss, kk}\right) \mathbf{c}_k +
        \sum_{j \neq k} \left(-\mathscr{I}_{miss, jk}\right) \mathbf{c}_j = \boldsymbol{0}.
    \end{equation*}
    Rearranging terms, we obtain
    \begin{equation*}
        \left(\mathscr{I}_{comp, kk} - \mathscr{I}_{miss, kk}\right) \mathbf{c}_k =
        \sum_{j \neq k} \mathscr{I}_{miss, jk} \mathbf{c}_j.
    \end{equation*}
    Taking norms on both sides and using the triangle inequality, we get
    \begin{equation*}
        \lVert \left(\mathscr{I}_{comp, kk} - \mathscr{I}_{miss, kk}\right) \mathbf{c}_k \rVert_2
        =
        \left\lVert \sum_{j \neq k} \mathscr{I}_{miss, jk} \mathbf{c}_j \right\rVert_2
        \leq
        \sum_{j \neq k} \lVert \mathscr{I}_{miss, jk} \rVert_2 \lVert \mathbf{c}_j \rVert_2
\end{equation*}
    As $\mathscr{I}_{comp, kk} - \mathscr{I}_{miss, kk}$ is positive definite and symmetric since 
    condition \eqref{eq.apdx:diag_dominance} holds,
    then all its eigenvalues are positive, and we have
    \begin{equation*}
        \lVert \left(\mathscr{I}_{comp, kk} - \mathscr{I}_{miss, kk}\right) \mathbf{c}_k \rVert_2
        \geq
        \lambda_{\min}\left(\mathscr{I}_{comp, kk} - \mathscr{I}_{miss, kk}\right) \lVert \mathbf{c}_k \rVert_2.
    \end{equation*}
    Take module $k$ with the largest $\lVert \mathbf{c}_k \rVert_2$.
    Combining the above inequalities, we obtain
    \begin{equation*}
        \lambda_{\min}\left(\mathscr{I}_{comp, kk} - \mathscr{I}_{miss, kk}\right) \lVert \mathbf{c}_k \rVert_2
        \leq
        \sum_{j \neq k} \lVert \mathscr{I}_{miss, jk} \rVert_2 \lVert \mathbf{c}_j \rVert_2
        \leq
        \sum_{j \neq k} \lVert \mathscr{I}_{miss, jk} \rVert_2 \lVert \mathbf{c}_k \rVert_2.
    \end{equation*}
    Dividing both sides by $\lVert \mathbf{c}_k \rVert_2 > 0$, we get
    \begin{equation*}
        \lambda_{\min}\left(\mathscr{I}_{comp, kk} - \mathscr{I}_{miss, kk}\right)
        \leq
        \sum_{j \neq k} \lVert \mathscr{I}_{miss, jk} \rVert_2,
    \end{equation*}
    which contradicts \eqref{eq.apdx:diag_dominance}.
    Thus, $\mathscr{I}_{obs}$ is nonsingular, and, 
    by \citet[][Theorem~1]{rothenberg1971}, the parameters $\bTheta$ are locally identifiable. 
\end{proof}
We can give some structure and inuition regarding the $(j,k)$-th block of $\mathscr{I}_{miss}$.
Let $S_k(\bD, \bA^\ast \mid \bTheta)$ be the score function of module $k$ (e.g., outcome, treatment, proxy, or network model) with respect to its parameters.
The partial First-order Taylor expansion for $S_k$ with respect to $\bA^\ast$
 around the posterior mean $\bbE [\bA^\ast \mid \bD, \bTheta]$ yields
\begin{equation*}
    S_k(\bD,\bA^\ast \mid \bTheta) \approx
     S_k(\bD, \bbE [\bA^\ast \mid \bD, \bTheta] \mid \bTheta) + 
     \nabla_{\bA^\ast} S_k(\bD, \bbE [\bA^\ast \mid \bD, \bTheta] \mid \bTheta) 
     (\bA^\ast - \bbE [\bA^\ast \mid \bD, \bTheta]),
\end{equation*}
where the gradient is with respect to the entries of $\bA^\ast$, and should correspond to the discrete
differences since $\bA^\ast$ is a binary adjacency matrix.
Namely, to changes in one edge at a time.
For brevity, write $S_k(\bD,\bA^\ast \mid \bTheta) = S_k(\bA^\ast)$.
Using the above approximation, we can express the posterior covariance 
between scores of modules $j$ and $k$ as:
\begin{equation*}
    \text{Cov}_{\bA^\ast \mid \bD, \bTheta} 
    \left( S_j(\bA^\ast), S_k(\bA^\ast) \right) \approx
    \nabla_{\bA^\ast} S_j(\bbE [\bA^\ast \mid \bD, \bTheta]) \;
    \text{Cov}_{\bA^\ast \mid \bD, \bTheta}(\bA^\ast) \;
    \nabla_{\bA^\ast} S_k(\bbE [\bA^\ast \mid \bD, \bTheta])^\top, 
\end{equation*}
Thus, the $(j,k)$ cross-block of the missing data FIM \eqref{eq.apdx:missing_information_FIM} can be approximated as
\begin{equation}
    \label{eq.apdx:missing_information_FIM_block_approx}
    \mathscr{I}_{miss, jk} \approx
    \bbE_{\bD \mid \bTheta} \left[
    (\nabla_{\bA^\ast} S_j) \;
    \Sigma_{\bA^\ast \mid \bD, \bTheta} \;
    (\nabla_{\bA^\ast} S_k)^\top
    \right],
\end{equation}
where $\Sigma_{\bA^\ast \mid \bD, \bTheta} = \text{Cov}_{\bA^\ast \mid \bD, \bTheta}(\bA^\ast)$
is the posterior covariance of the latent network.
This expression shows that the missing data FIM between modules $j$ and $k$ depends
on how sensitively their scores respond to changes in $\bA^\ast$ (the gradients), weighted by the uncertainty in $\bA^\ast$ given the data (the posterior covariance).
Specifically, $\nabla_{\bA^\ast} S_k$ captures how much parameter block $k$ (e.g, $\boeta$) is sensitive to 
one edge changes in $\bA^\ast$.
If two modules have similar sensitivities to $\bA^\ast$ for edges with high posterior uncertainty,
then their scores will be correlated, leading to larger off-diagonal blocks in $\mathscr{I}_{miss}$.
If that is the case for many modules pairs and edges in $\bA^\ast$, then the missing data information matrix will have large off-diagonal entries.
By \eqref{eq.apdx:observed_data_FIM} the observed data FIM is the difference between the block-diagonal complete data FIM and the dense missing data FIM.
Therefore, if the off-diagonal blocks of $\mathscr{I}_{miss}$ are large, then the observed data FIM will be close to singular.
On the other hand, if the off-diagonal blocks of $\mathscr{I}_{miss}$ are small compared to the diagonal blocks, then the observed data FIM will be diagonally dominant and thus invertible.

For example, if two modules depend on $\bA^\ast$ through similar sufficient statistics (e.g., both depend on the degree of nodes), then their score functions will respond similarly to changes in $\bA^\ast$, leading to high covariance in their scores and larger off-diagonal blocks in 
$\mathscr{I}_{miss}$.
Conversely, if modules depend on $\bA^\ast$ through distinct sufficient statistics,
e.g., if the proxy networks module $p(\cA \mid \bA^\ast, \bgamma)$
depend on individual edges while the module of outcomes $p(\bY \mid \bZ, \bA^\ast, \boeta)$
depend on aggregated properties (e.g., number of treated neighbors),
then their score functions will respond differently to single edge changes in $\bA^\ast$,
leading to lower covariance in their scores and smaller off-diagonal blocks in $\mathscr{I}_{miss}$.
This analysis suggests that identification improves when the modules depend on $\bA^\ast$ through
distinct structural features, or when the posterior uncertainty in $\bA^\ast$ is low.

\section{Sampling From The Posterior}
\label{apdx.sec:posterior}

\subsection{Non-Causal Proxies}

Under the scenario described in Figures~\hyperref[fig:DAGs]{1(c)-(d)} of non-causal proxy networks $\cA$, we assume there exist latent variables $\bV$ such that $\cA \leftarrow \bV \rightarrow \bA^\ast$ but without a direct path between the proxies $\cA$ and the latent network $\bA^\ast$.

As in the main text, in the case of latent-based treatment assignment  (Figure~\hyperref[fig:DAGs]{1(c)}),
the treatment assignment model is $p(\bZ \mid \bX, \bA^\ast, \bbeta)$.
In that case, the joint posterior distribution \eqref{eq:observed.posterior} should also include the unobserved $\bV$ and can be written as
\begin{equation}
    \label{eq:posterior_noncausal}
    \begin{split}
      p(\boeta, \bbeta, \btheta,\bgamma, \bbeta_{\bV},\bV,\bA^\ast \mid \bD) 
        \propto
        &\; p(\bY \mid \bZ,\bA^\ast,\bX,\boeta) p(\boeta) 
        \\ & 
         \times p(\bZ \mid \bA^\ast, \bX, \bbeta) p(\bbeta)
        \\ & 
      \times p(\cA \mid \bX, \bV, \bgamma) 
       p(\bgamma)
        \\ &
       \times p(\bA^\ast \mid \bX, \bV, \btheta) 
       p(\btheta)
       \\ &
       \times p(\bV \mid \bbeta_{\bV})p(\bbeta_{\bV}),
    \end{split}
\end{equation}
where $p(\bV \mid \bbeta_{\bV})$ is an assumed model for the latent variables (which could often be assumed to be multivariate normal) parameterized by $\bbeta_{\bV}$ with prior $p(\bbeta_{\bV})$.
We obtain that the posterior of $\bA^\ast$ given the data and the other unknowns (the analog of \eqref{eq:a_star_post} that also includes $\bV$) is 
\begin{equation}
    \label{eq:a_star_post_noncausal}
    \begin{aligned}
        p(\bA^\ast \mid \cdot) 
        &\equiv 
        p(\bA^\ast \mid \bD, \boeta, \bbeta, \btheta,\bgamma, \bV) 
        \\ & \propto p(\bY \mid \bZ, \bA^\ast, \bX, \boeta) 
        \\ & \times p(\bA^\ast \mid \bX, \bV,\btheta)
        \\ & \times p(\bZ \mid \bA^\ast, \bX, \bbeta),
    \end{aligned}
\end{equation}
which simplifies the Locally Informed Proposals updates as it does not include the proxy network model $p(\cA \mid \cdot)$. Thus, the proposed updates for $\bA^\ast$ will depend on the proxies $\cA$ only through their information about the latent variable $\bV$ and the joint structure of the parameters $\btheta$ and  $\bgamma$ as will be captured in the posterior sampling through the conditional posterior sampling of $\bV$ (via the continuous kernel in the Block Gibbs algorithm).

For the scenario of non-causal proxies with proxy-based treatment assignment (Figure~\hyperref[fig:DAGs]{1(d)}), the posterior is similar to \eqref{eq:posterior_noncausal} but without the treatment assignment model terms $p(\bZ \mid \bA^\ast, \bX, \bbeta) p(\bbeta)$. That is, in this scenario 
the treatment model is $p(\bZ \mid \cA, \bX, \bbeta)$ and does not provide information on $\bA^\ast$.
Therefore, we can marginalize out $\bbeta$ and $\bZ$ from \eqref{eq:posterior_noncausal} 
(similarly to \eqref{eq:observed.posterior.obs}) to obtain the posterior distribution
\begin{equation}
    \label{eq:posterior_noncausal_rct}
    \begin{split}
      p(\boeta, \btheta,\bgamma, \bbeta_{\bV},\bV,\bA^\ast \mid \bD) 
        \propto
        &\; p(\bY \mid \bZ,\bA^\ast,\bX,\boeta) p(\boeta) 
        \\ & 
      \times p(\cA \mid \bX, \bV, \bgamma) 
       p(\bgamma)
        \\ &
       \times p(\bA^\ast \mid \bX, \bV, \btheta) 
       p(\btheta)
       \\ &
       \times p(\bV \mid \bbeta_{\bV})p(\bbeta_{\bV}).
    \end{split}
\end{equation}
Therefore, in this case, the posterior of $\bA^\ast$ given the data and the other unknowns is
\begin{equation}
    \label{eq:a_star_post_noncausal_rct}
        p(\bA^\ast \mid \cdot) \propto p(\bY \mid \bZ, \bA^\ast, \bX, \boeta) 
        \times p(\bA^\ast \mid \bX, \bV,\btheta),
\end{equation}
which is similar to \eqref{eq:a_star_post_noncausal} but without the treatment model term.
As described below, for estimating the causal effects, we marginalize out all other unknowns except $\bA^\ast$ and $\boeta$ for the posteriors.

\subsection{Extensions}

We provide details on how to adapt the posterior in two cases: when researchers want to augment the outcome model with propensity score and when there are latent network-outcome confounders $\bU$. 

\begin{itemize}
    \item 
    Under latent-based treatment assignment, the treatment model is 
    $p(\bZ \mid \bX, \bA^\ast, \bbeta)$.
    As mentioned above, this model informs the outcome model only through its effect on $\bA^\ast$.
    Also, under proxy-based treatment assignment, the treatment model $p(\bZ \mid \cA, \bX, \bbeta)$ does not provide information on $\bA^\ast$, and therefore does not affect the outcome model.
    In both cases, researchers might want to augment the outcome model with propensity scores 
    $p(\bZ \mid \cdot)$. 
    However, misspecification of the treatment model can adversely affect the estimation of causal effects \citep{zigler2013}.
    To mitigate this issue, it is often suggested to remove 'feedback' between the two models to reduce sensitivity to model misspecification \citep{zigler2013}. That can be achieved by replacing the continuous updates in the Block Gibbs algorithm with samples from a cut-posterior \citep{Bayarri2009, Jacob2017} of the continuous parameters that removed the feedback term between the propensity scores and the outcome model. Such augmentation will not directly affect $\bA^\ast$ updates.
    
    \item Assume there is unmeasured network-outcome confounding represented by latent variable $\bU$, and accounting for it is required for the estimation of, for example, the impact of interventions on the interference network structure. In this case, researchers specify a model $p(\bU \mid \bbeta_U)$ for parameters $\bbeta_U$ with prior $p(\bbeta_U)$ and assume prior independence as before. The network and outcome models now include $\bU$: $p(\bA^\ast \mid \bX, \bU,\btheta)$ and $p(\bY \mid \bZ,\bA^\ast, \bX, \bU,\boeta)$. As $\bU$ will typically be modeled as a continuous latent variable (e.g., multivariate normal), it can be included with the rest of the continuous latent variables in the Block Gibbs algorithm. Namely, in $\bA^\ast$ updates with LIP will include the current state of $\bU$ as a fixed variable, and in the continuous updates,   $\bU$ will be updated together with the other continuous latent variables and parameters. 
    A similar approach of network-outcome confounding adjustment using latent variables has been proposed by \citet{um2024bayesian}, who assumed the interference network is fully observed.
\end{itemize}
\subsection{Block Gibbs Initialization}

As explained in the main text, the proposed Block Gibbs algorithm can be 
initialized by estimators of latent variables obtained from sampling from the `cut' posterior \citep{Bayarri2009, Jacob2017}.
This can be followed by a further refinement of $\bA^\ast$ using several LIP steps, 
which is found to greatly improve mixing and convergence of the Block Gibbs sampler in practice.
We first present the sampling procedure and then discuss the choice of initial values for the Block Gibbs algorithm. 

Sampling from the cut-posterior is a form of Bayesian modularization. It offers several advantages \citep{Bayarri2009}. Notably, it severely reduces, and often eliminates, contamination between modules due to misspecification in some modules.
This approach also allows for separate model checking and selection of the network models (true and observed) from that of the outcome model. Researchers can first perform model diagnostics on the network modules and subsequently repeat the process for the outcome model, effectively separating the diagnostic process for the interference network and the outcome model.
Moreover, by breaking the complex full posterior into manageable parts, modularization provides a computationally tractable approach where direct sampling would be challenging, as evident in our case. 

We consider here the scenario of causal proxies with proxy-based treatment assignment (Figure~\hyperref[fig:DAGs]{1(b)})
that have posterior \eqref{eq:observed.posterior.obs}, but the approach can be adapted to other scenarios with the posteriors \eqref{eq:observed.posterior}, \eqref{eq:posterior_noncausal}, or \eqref{eq:posterior_noncausal_rct}.
The posterior \eqref{eq:observed.posterior.obs} can be written as
\begin{equation*}
    p(\boeta,\btheta,\bgamma, \bA^\ast \mid \bD) = p(\boeta \mid \bD,\btheta,\bgamma,\bA^\ast)p(\btheta,\bgamma, \bA^\ast \mid \bD).
\end{equation*}
%where we use the fact that $p(\boeta \mid \bD, \bA^\ast, \btheta, \gamma)=p(\boeta \mid \bY,\bZ,\bA^\ast,\bX)$. 
This implies that the posterior is composed of two modules: network module $p(\btheta,\bgamma, \bA^\ast \mid \bD)$ and outcome module $p(\boeta \mid \bD,\btheta,\bgamma,\bA^\ast)$. The modules are connected and can thus generate `feedback' loops between one another. To see that, notice that 
\begin{align*}
     p(\btheta,\bgamma, \bA^\ast \mid \bD) &= 
     \int_{\boeta}  p(\boeta,\btheta,\bgamma, \bA^\ast \mid \bD) d\boeta
     \\ &\propto 
   p(\btheta)p(\bgamma)p(\cA \mid \bA^\ast, \bX,\bgamma)
   p(\bA^\ast \mid \bX,\btheta) 
 p(\bY \mid \bZ,\bA^\ast,\bX)
 \\ &\equiv
 p(\btheta,\bgamma,\bA^\ast \mid \cA,\bX)
 p(\bY \mid \bZ,\bA^\ast,\bX)
\end{align*}
where $p(\bY \mid \bZ,\bA^\ast,\bX) = \int_{\boeta} p(\bY \mid \bZ,\bA^\ast,\bX,\boeta)p(\boeta)d\boeta$. Namely, the posterior of $(\btheta,\bgamma,\bA^\ast)$ given the observed data $\bD$ is proportional to the posterior given $\bX$ and $\cA$ (the ``first module" or ``network module") multiplied by $p(\bY \mid \bZ,\bA^\ast,\bX)$, the feedback term between the network and outcome modules. Accordingly, the outcome module can be written as
\begin{align*}
    p(\boeta \mid \bD,\btheta,\bgamma,\bA^\ast) &= \frac{p(\boeta,\btheta,\bgamma, \bA^\ast \mid \bD)}{p(\btheta,\bgamma, \bA^\ast \mid \bD)}
    \\ &\propto \frac{p(\bY \mid \bZ,\bA^\ast,\bX,\boeta)p(\boeta)}{p(\bY \mid \bZ,\bA^\ast,\bX)} 
    \\ &\propto p(\bY \mid \bZ,\bA^\ast,\bX,\boeta)p(\boeta),
\end{align*}
which can be depicted as $p(\boeta \mid \bD,\btheta,\bgamma,\bA^\ast) \equiv p(\boeta \mid \bY,\bZ,\bA^\ast,\bX)$. 
Viewed as two modules with feedback between them, the posterior can thus be represented as
\begin{equation*}
    p(\boeta,\btheta,\bgamma,\bA^\ast \mid \bD) \propto 
    \underbrace{p(\boeta \mid \bY,\bZ,\bA^\ast,\bX)}_{\text{Outcome module}}
    \underbrace{p(\btheta,\bgamma,\bA^\ast \mid \cA,\bX)}_{\text{Network module}}
    \underbrace{p(\bY \mid \bZ,\bA^\ast,\bX)}_{\text{Feedback term}}.
\end{equation*}
The cut-posterior that removes the feedback term from the full posterior is
\begin{equation*}
    p_{cut}(\boeta,\btheta,\bgamma,\bA^\ast \mid \bD) \propto 
    p(\boeta \mid \bY,\bZ,\bA^\ast,\bX)
    p(\btheta,\bgamma,\bA^\ast \mid \cA,\bX)
\end{equation*}
\textbf{Sampling from the network module.}
The network module is proportional to
\begin{equation*}
     p(\btheta,\bgamma,\bA^\ast \mid \cA,\bX) \propto p(\btheta)p(\bgamma)p(\cA \mid \bA^\ast, \bX,\bgamma)
   p(\bA^\ast \mid \bX,\btheta), 
\end{equation*}
and can also be represented as  
\begin{equation*}
    % \label{eq:network_module_2}
    p(\btheta,\bgamma,\bA^\ast \mid \cA,\bX) =
    p(\bA^\ast \mid \cA,\bX,\btheta,\bgamma)
    p(\btheta,\bgamma \mid \cA,\bX),
\end{equation*}
where
\begin{equation*}
        % \label{eq:marginal_network_mod}
\begin{aligned}   
   p(\btheta,\bgamma \mid \cA,\bX) &=
    \sum_{\bA^\ast} p(\btheta,\bgamma,\bA^\ast \mid \cA,\bX)
    \\ &= 
    p(\btheta)p(\bgamma)
    \sum_{\bA^\ast} 
    p(\cA\mid \bA^\ast,\bX,\bgamma) 
    p(\bA^\ast \mid \bX,\btheta).
\end{aligned}
\end{equation*}
Thus, sampling from the network module can be performed by first sampling $(\btheta,\bgamma)$ from the marginalized model $ p(\btheta,\bgamma \mid \cA,\bX)$, and then sampling networks from $p(\bA^\ast \mid \cA,\bX,\btheta,\bgamma)$.
For example, if both the network generation model $p(\bA^\ast \mid \bX,\btheta)$ and the observed network model $p(\cA\mid \bA^\ast,\bX,\bgamma)$ are conditionally dyad independent, i.e., can be written as 
\begin{align*}   
    p(\bA^\ast \mid \bX,\btheta) &= 
     \prod_{i>j} \Pr(A^\ast_{ij} \mid \bX,\btheta)
     % \equiv \prod_{i>j} \nu(A^\ast_{ij}; \btheta) 
     \\
    p(\cA \mid \bA^\ast,\bgamma) &= 
    \prod_{i>j} \Pr(A_{ij} \mid A^\ast_{ij},\bX,\bgamma) 
    % \equiv \prod_{i>j} \xi(A_{ij} ; A^\ast_{ij},\bgamma)
    ,
\end{align*}
% for some functions $\nu$ and $\xi$ that can also depend on covariates $\bX$ and latent variables $\bU$. 
the sampling from the network module will be simplified, as we now show.
Examples for $\Pr(A^\ast_{ij} \mid \cdot)$ include random graphs, SBM, and latent space models. Examples for $\Pr(A_{ij} \mid \cdot)$ are given in Section \ref{subsec:net_models}. When 
$\cA = (\bA_1,\dots,\bA_B)$, if 
\begin{equation*}
    p(\cA\mid \bA^\ast,\bX,\bgamma) = \prod_{b}\prod_{i>j} \text{Pr}_b(A_{b,ij} \mid A^\ast_{ij},\bX,\bgamma_b),
    \end{equation*}
then we proceed the same way. For example, the factorization over $b$ conditional on $\bgamma$ is feasible when $\bgamma$ has a hierarchical prior. Alternatively, in the case of auto-regressive proxy measurements, we can often write
\begin{align*}
    p(\cA\mid \bA^\ast,\bX,\bgamma) &=
    p(\bA_1 \mid \bA^\ast,\bX,\bgamma_1)\prod_
    {b>1}p(\bA_b \mid \bA^\ast,\bX, \bA_1,\ldots, \bA_{b-1},\bgamma).
\end{align*}
For ease of presentation, consider the case of a single proxy network ($B=1$). We obtain,
\begin{equation}
    \label{eq:marginal_network_factorized}
    \begin{aligned}
    p(\btheta,\bgamma \mid \cA,\bX) 
    &\propto
    p(\btheta)p(\bgamma) \sum_{\bA^\ast} \prod_{i>j} 
    \Pr(A_{ij} \mid A^\ast_{ij},\bX,\bgamma)  \Pr(A^\ast_{ij} \mid \bX,\btheta)
    \\ &= 
     p(\btheta)p(\bgamma) \prod_{i>j} \sum_{k=0,1}   \Pr(A_{ij} \mid A^\ast_{ij}=k,\bX,\bgamma)  \Pr(A^\ast_{ij} = k \mid \bX,\btheta).
\end{aligned}
\end{equation}
Sampling from \eqref{eq:marginal_network_factorized} is trivial in most probabilistic programming languages by encoding the posterior with the log-sum-exp trick. We essentially have a mixture of Bernoulli random variables $A_{ij}$, with mixture probabilities $\Pr(A^\ast_{ij} = k \mid \bX,\btheta)$ and  Bernoulli probabilities $ \Pr(A_{ij} \mid A^\ast_{ij}=k,\bX,\bgamma)$, both for $k=0,1$.
In the case of multiple proxies $B>1$, we will obtain a mixture of categorical random variables (the proxies) with the same mixture probabilities and number of categories equal to $2^B$ in the case that all proxy networks are unweighted.
Note that even if $\cA$ or $\bA^\ast$ are discrete with more than two categories, we can proceed the same way by modifying the mixture probabilities accordingly. 
Finally, generating samples of $\bA^\ast$ is reduced to sampling edges independently from
\begin{equation}
    \label{eq:posterior_network_sampling}
    \begin{aligned}
        p(\bA^\ast \mid \cA,\bX,\btheta,\bgamma) &=
        \frac{p(\cA\mid \bA^\ast,\bX,\bgamma) 
        p(\bA^\ast \mid \bX,\btheta)}{\sum_{\bA^\ast} 
        p(\cA\mid \bA^\ast,\bX,\bgamma) 
        p(\bA^\ast \mid \bX,\btheta)}
        \\ &=
        \prod_{i>j}\frac{
        \Pr(A_{ij} \mid A^\ast_{ij},\bX,\bgamma)  \Pr(A^\ast_{ij}\mid \bX,\btheta)}
        {
        \sum_{k=0,1}
        \Pr(A_{ij} \mid A^\ast_{ij}=k,\bX,\bgamma)  \Pr(A^\ast_{ij} = k \mid \bX,\btheta)
        },
    \end{aligned}
\end{equation}
which can be performed by simple sampling of $N(N-1)/2$ Bernoulli random variables with probabilities 
\begin{equation*}
    \frac{
        \Pr(A_{ij} \mid A^\ast_{ij}=1,\bX,\bgamma)  \Pr(A^\ast_{ij} = 1 \mid \bX,\btheta)}
        {
        \sum_{k=0,1}
        \Pr(A_{ij} \mid A^\ast_{ij}=k,\bX,\bgamma)  \Pr(A^\ast_{ij} = k \mid \bX,\btheta)
        }    
\end{equation*}
In the case of non-causal proxies \eqref{eq:posterior_noncausal}, the network module will not include the proxy networks model $p(\cA \mid \cdot)$. Therefore, sampling $\bA^\ast$ from \eqref{eq:posterior_network_sampling} is equivalent to sampling networks from the  model $p(\bA^\ast \mid \bX,\bV,\btheta)$ given covariates $\bX$, and current values of the latent variables $\bV$ and the parameters $\btheta$.

% Therefore, using multiple proxies can impact $\bA^\ast$ samples in two ways. First, given $\btheta$ and $\bgamma$ values, the probabilities in \eqref{eq:posterior_network_sampling} will change due to the composition of $\xi$. Second, the inference for $\btheta$ might be affected by the types and properties of the proxy networks. That implies that even if single or multiple proxy networks can recover $\btheta$ accurately, the posterior probabilities \eqref{eq:posterior_network_sampling} can be different since the posterior proxy network probabilities $\xi$ are included in \eqref{eq:posterior_network_sampling} as well. 

\noindent
\textbf{Sampling from the outcome module.}
Given multiple $\bA^\ast$ draws from the network module, we can use the modified SCM \eqref{eq:SCM_mod} to sample from the network module. Specifically, under the modified SCM, outcomes depend on $\bA^\ast$ only through summarizing values of $\phi_1,\phi_2,\phi_{3}$. 
 Let $\bphi_{m,i}=\big(\phi_1(\bZ_{-i},\bA^{\ast,m}),\phi_2(\bX_{-i},\bA^{\ast,m}),\phi_{3,i}(\bA^{\ast,m})\big)$, be the unit-level summarizing values based on the $m-$th posterior draw from the network module.
 We average $\bphi_{m,i}$ across the posterior draws $\widehat{\bphi}_i=M^{-1}\sum_{m=1}^{M}\bphi_{m,i}$ and then sample $\boeta$ from the outcome module 
 \begin{equation*}
     p(\boeta \mid \bY,\bZ,\bA^\ast,\bX) \propto 
    p(\boeta) p(\bY \mid \bZ,\bA^\ast,\bX,\boeta)
 \end{equation*}
    using the plug-in values $\widehat{\bphi} = (\widehat{\bphi}_1,\ldots, \widehat{\bphi}_N)$ in the outcome model \citep{Bayarri2009}.
  The outcome model often also depends on $\bA^\ast$ through paths that are not possible to summarize with a simple function. In that case, we will plug in the outcome module the sampled $\bA^{\ast,m}$ with the largest conditional cut-posterior probability,
  \begin{equation*}
      p(\bA^\ast \mid \cA, \bX, \btheta, \bgamma) \propto p(\cA \mid \bA^\ast,\bX,\gamma)p(\bA^\ast \mid \bX,\btheta)
  \end{equation*}
  or just $p(\bA^\ast \mid \bX,\bV, \btheta)$ in the case of non-causal proxies $\cA$.

\noindent
\textbf{Choosing initial values for Block Gibbs algorithm.} 
For $\btheta,\bgamma$ we can either take maximum a-posteriori (MAP) or mean-posterior values as initial values.
Given these values, we sample multiple networks $\bA^\ast$ using \eqref{eq:posterior_network_sampling} and 
choose the network with the highest conditional cut-posterior log-probability as initial value $\bA^\ast_0$.
From the outcome module, we took the posterior means of $\boeta$ as initial values.
As the network module does not use any information from the treatment and outcome models,
this initialization strategy for $\bA^\ast$ can be improved.
We do so by running several LIP updates (Algorithm~\ref{algo:lip}) of $\bA^\ast$ given the initial values of $\btheta,\bgamma,\boeta$.
We take this refined $\bA^\ast$ as the initial value for the Block Gibbs algorithm. 
The initialization strategy can be summarized as

\begin{enumerate}
    \item Compute MAP or mean-posterior values of $(\btheta,\bgamma)$ from the marginalized network-module in the cut-posterior \eqref{eq:marginal_network_factorized}.
    \item Sample multiple $\bA^\ast$ using the values of $(\btheta, \bgamma)$ with probabilities \eqref{eq:posterior_network_sampling}. Save $\bA^\ast_0$ as the network with the highest log-probability \eqref{eq:posterior_network_sampling}. Estimate summary network statistics of the outcome model $\phi_1,\phi_2,\phi_3$ using a plugin estimator of the mean across the network samples.
    \item Sample $\eta$ from the network model using plug-in $\widehat{\phi}_1,\widehat{\phi}_2,\widehat{\phi}_3$ and $\bA^\ast_0$ (if necessary), and take mean posterior (or MAP) values of $\eta$.
    \item Refine $\bA^\ast_0$ by running several LIP updates (Algorithm~\ref{algo:lip}) of $\bA^\ast$ given $(\btheta,\bgamma,\boeta)$ from steps 1 and 3. Take the refined $\bA^\ast$ as initial value for the Block Gibbs algorithm.
\end{enumerate}
In the numeric illustrations on fully-synthetic data (Section \ref{subsec:sim_fully}), 
we sample $(\btheta,\bgamma)$ and use their MAP values in step 1 using stochastic variational inference (SVI)
with Multivariate Normal variational family and ClippedAdam optimizer with learning rate of $5 \times 10^{-4}$
which we ran for $2 \times 10^4$ steps. 
In step 3, we estimated the outcome model (given fixed network), with NUTS with $1.2 \times 10^4$
posterior samples ($3,000$ samples from four chains).
 We found that SVI for the marginalized network models was faster with sufficient accuracy, 
 whereas NUTS better explore the posterior space for the outcome model.
 For the refinement in step 4, we ran $L=2\times 10^4$ LIP updates with $K=5$ edge flip proposal in each step.
Start to end, the initialization was very fast to run for all setups considered in this paper.
%  In the numerical illustrations on semi-synthetic data (Section \ref{subsec:sim_semi}), 
%  the SVI approximation in step 1 was replaced with NUTS (and posterior means)  due to the smaller sample size.

\subsection{Posterior Predictive Distribution of Outcomes}
\label{apdx.sec:ppd}
Recall that the observed data are $\bD = (\bY,\bZ,\cA,\bX)$. We saw in Section~\ref{subsec:bayes_g_formula} that we can estimate causal effects using the conditional expectations $\mathbb{E}\big[Y_i \mid \bZ=\bz, \bX, \bA^\ast_t, \boeta_t \big]$ where $\bA^\ast_t,\boeta_t$ are posterior samples of the latent interference network and the parameters of the outcome model, respectively.

Often, we do not have an analytic expression for the conditional expectation $\mathbb{E}\big[Y_i \mid \bZ=\bz, \bX, \bA^\ast_t, \boeta_t \big]$. In that case, we can approximate it. The posterior predictive distribution (PPD) of the outcome $\bY$ given a new treatment vector $\widetilde{\bZ}$ and the observed data $\bD$ is 
\begin{equation*}
    p(\widetilde{\bY} \mid \widetilde{\bZ},\bD) = \sum_{\bA^\ast}\int_{\boeta} p(\widetilde{\bY} \mid \widetilde{\bZ},\bA^\ast,\bX,\boeta)p(\bA^\ast,\boeta \mid \bD) d\boeta,
\end{equation*}
where $p(\bA^\ast,\boeta \mid \bD) = \int_{\btheta,\bgamma}p(\boeta,\btheta,\bgamma,\bA^\ast \mid \bD)d\btheta d\bgamma$ is the posterior distribution of $\bA^\ast$ and $\boeta$. Given each posterior sample $\bA^\ast_t,\boeta_t$ for $t=1,\ldots,T$ we can draw $S>0$ outcomes vectors $\bY^{(1,t)},\ldots,\bY^{(S,t)}$ with $\bY^{(s,t)}=(Y^{(s,t)}_1,\ldots,Y^{(s,t)}_N)$ from the outcome model $ p(\widetilde{\bY} \mid \widetilde{\bZ},\bA^\ast_t,\bX,\boeta_t)$. These draws can approximate the conditional expectation via
\begin{equation*}
    \mathbb{E}\big[Y_i \mid \bZ=\bz, \bX, \bA^\ast_t, \boeta_t \big] \approx S^{-1}\sum_{s=1}^{S} Y^{(s,t)}_i.
\end{equation*}
Therefore, we can approximate the mean expected outcome of unit $i$ using draws from the PPD 
\begin{equation*}
    \widehat{\mu}_i(\bz \mid \bD) = \mathbb{E}\big[Y_i(\bz) \mid \bD \big] \approx (ST)^{-1} \sum_{t=1}^{T}\sum_{s=1}^{S} Y^{(s,t)}_i.
\end{equation*}
As described in estimation scheme developed in Section~\ref{subsec:bayes_g_formula}, we can use $ \widehat{\mu}_i(\bz \mid \bD)$ to estimate the population-level estimands, e.g., $\tau(\bz,\bz')$.

In the case of Figure~\hyperref[fig:DAGs]{1(d)},
where conditioning on $\bA^\ast$ opens a back-door path between $\bZ$ and $\bY$,
an additional conditioning on the proxies $\cA$ is required for identification.
As stated in Section~\ref{subsec:estimands}, the back-door adjustment is therefore $\bbE[Y_i \mid \bZ=\bz,\cA,\bA^\ast,\bX]$.
In that case, the outcome model should be conditioned on $\cA$ as well, and the draws from its posterior predictive distribution will approximate the modified conditional expectation.

In the numerical illustrations, we used $S=1$ and found that for large  $T$ (number of posterior samples), this approximation was accurate. That is, for each posterior sample of $\boeta$ and $\bA^\ast$, we sampled one outcome vector.

\subsection{Gradient-Based Approximation of Locally Informed Proposals}
\label{apdx.sec:gradient_approx_error}

Let $\log p(\bA^\ast \mid \bD,\bTheta)$ denote the log conditional posterior of the latent network $\bA^\ast$ given the data and parameters, as defined in \eqref{eq:a_star_post}. 
Let $\bA^{\ast'} \in H(\bA^\ast)$ be a network in the Hamming ball of size one, meaning that $\bA^{\ast'}$ differ from $\bA^\ast$ in only one coordinate (i.e., one edge flip).
Denote the difference in the log-conditional posterior by 
\begin{equation*}
    \Delta(\bA^{\ast'},\bA^\ast) = 
    \log p(\bA^{\ast'} \mid \cdot) -
    \log p(\bA^{\ast} \mid \cdot).
\end{equation*}
Assume balancing function $g(a) = \sqrt{a}$ in \eqref{eq:local_balanced_ip}.
The \emph{exact} LIP kernel is \eqref{eq:lip_logpost}
\begin{equation*}
    Q(\bA^{\ast'} \mid \bA^\ast) \propto 
    \exp \left( \frac{1}{2}\Delta(\bA^{\ast'},\bA^\ast) \right),
\end{equation*}
subject to $\bA^{\ast'}$ being in the Hamming ball of size one, as mentioned above.
We approximate the differences using a first-order Taylor expansion. Let $\tilde{\Delta}(\bA^{\ast'}, \bA^\ast)$ be the gradient-based approximation of the differences $\Delta$, as defined in \eqref{eq:grad_approx_diff}. Write the \emph{approximate} LIP kernel is 
\begin{equation*}
    Q^\nabla(\bA^{\ast'} \mid \bA^\ast) \propto 
    \exp \left( \frac{1}{2}\tilde{\Delta}(\bA^{\ast'},\bA^\ast) \right).
\end{equation*}
We analyze the efficiency loss introduced by using $Q^\nabla$ instead of $Q$ by leveraging theoretical results from \citet{zanella2019informed} and \citet{grathwohl_2021}.
\citet{zanella2019informed} established that LIPs with locally balanced functions are asymptotically optimal in terms of Peskun ordering. Specifically, for any function $\psi$, define the asymptotic variance by
\begin{equation*}
    \text{Var}_p(\psi,Q)=\lim_{T\to \infty}\frac{1}{T}\text{Var}
    \left(
    \sum_{t=1}^{T}\psi(\bA^\ast_t)
    \right),
\end{equation*}
where $\bA^\ast_t$ are the samples from proposal $Q$ which is a $p = p(\bA^\ast \mid \cdot)$ stationary Markov transition kernel. The smaller $\text{Var}_p(\psi,Q)$, the more efficient the corresponding MCMC algorithm is in estimating the expectation of $\psi$ under $p$. 
Second, define the spectral of the Markov transition kernel $Q$ by 
\begin{equation*}
    \text{Gap}(Q) =1 - \lambda_2,
\end{equation*}
where $\lambda_2$ is the second largest eigenvalue of $Q$, and always satisfies $\text{Gap}(Q)\ge 0$ \citep{zanella2019informed}. The value of $\text{Gap}(Q)$ is closely related to the convergence properties of $Q$, where values close to $0$ indicate slow convergence while large values correspond to fast convergence \citep{levin2017markov}. 
\citet{zanella2019informed} showed that LIP with locally balanced function minimizes the variance $\text{Var}_p(\psi,Q)$ and maximize the spectral gap $\text{Gap}(Q)$.

\citet[][Theorem~1]{grathwohl_2021} quantified the performance gap when replacing exact differences with gradient approximations. Assuming the gradient of the log-posterior is $L$-Lipschitz, they showed
\begin{enumerate}[(a)]
    \item ${Var}_p(\psi,Q^\nabla) \leq  \frac{{Var}_p(\psi,Q)}{c} + 
    \frac{1-c}{c} {Var}_p(\psi)$
    \item $\text{Gap}(Q^\nabla) \geq c \text{Gap}(Q)$
\end{enumerate}
where $c = \exp(-\frac{1}{2}L)$. That is, using the gradient-based approximation is $c\in (0,1]$ times ``more efficient" than the exact differences \citep{zanella2019informed}.
Thus, $c$ values closer to $1$ imply a minimal efficiency loss due to the approximations.
While \citet{grathwohl_2021} derived $L$ values based on the global spectral norm of the Hessian, we refine this bound for our specific setting. Since our proposal is restricted to single-edge flips, the approximation error is governed by the coordinate-wise curvature rather than the global curvature.

Using the Lagrange remainder form of the Taylor expansion, the difference can be written as
\begin{equation*}
    \Delta(\bA^{\ast'},\bA^\ast) = 
    \log p(\bA^{\ast'} \mid \cdot) -
    \log p(\bA^{\ast} \mid \cdot) = \nabla \log p(\bA^\ast \mid \cdot) ^\top \delta + \frac{1}{2}\delta^\top \mathcal{H}\delta,
\end{equation*}
where $\delta$ is a one-hot vector representing the flipped edge in the upper triangle of $\bA^\ast$, and $\mathcal{H}$ is the Hessian matrix at some point between $\bA^\ast$ and $\bA^{\ast'}$. 
The approximation error is strictly the quadratic term
\begin{equation}
\label{eq.apdx:diff_approx_error}
\text{Error}= 
    \left\lvert 
    \Delta(\bA^{\ast'},\bA^\ast) -
    \nabla \log p(\bA^\ast  \mid \cdot) ^\top \delta 
    \right\rvert 
    = 
    \left\lvert 
    \frac{1}{2}\delta^\top \mathcal{H}\delta
    \right\rvert .
\end{equation}
Since $\delta$ has only one non-zero element, we can bound the error using the diagonal of the Hessian. Let $e=(i,j)$ be the flipped edge, such that $\delta_e=1$ and the rest are zero.
Then $\delta^\top \mathcal{H}\delta=\mathcal{H}_{e,e}$ is the $e,e = ij,ij$ diagonal element of the Hessian, where $\mathcal{H}_{e,e} = \mathcal{H}_{ij,ij}=\frac{\partial^2\log p(\bA^\ast\mid \cdot)}{(\partial A^\ast_{ij})^2}$.
Therefore, we can bound the error in \eqref{eq.apdx:diff_approx_error} by
\begin{equation*}
    \text{Error} \leq \frac{1}{2} \max_e 
    \lvert \mathcal{H}_{e,e} \rvert.
\end{equation*}
Thus, we can replace the global Lipschitz constant $L$ with the maximum absolute diagonal element of the Hessian
\begin{equation}
    \label{eq:c_lip}
    c = \exp\left(-\frac{1}{2} \max_e \lvert \mathcal{H}_{e,e} \rvert\right).
\end{equation}

\paragraph{Derivation for the Simplified Model.}
We verify this constant for the simplified model \eqref{eq:simple_model} introduced in Section~\ref{sec:identifiability}, extended here to include heterogeneous prior network and proxy measurements models:
\begin{equation*}
    \begin{aligned}
        A^\ast_{ij} &\sim \text{Ber}(\theta_{ij}), \\
        A_{ij} \mid A^\ast_{ij} = k &\sim \text{Ber}(\gamma_{k,ij}), \; k=0,1,\\
        Z_i &\sim \text{Ber}(p_z), \\
        Y_i &= \eta_1 Z_i + \eta_2 \sum_{j \neq i} A^\ast_{ij} Z_j + \varepsilon_i, \\
    \end{aligned}
\end{equation*}
where $\varepsilon_i$ are iid $N(0,\sigma^2)$. 
The log-posterior decomposes into the outcome likelihood, proxy likelihood, and network prior
\begin{equation*}
    \log p(\bA^\ast \mid \cdot) = \log p(\bY \mid \bZ,\bA^\ast, \boeta) +
    \log p(\bA \mid \bA^\ast , \bgamma) +
    \log p(\bA^\ast \mid \btheta).
\end{equation*}
Let $R_i = Y_i - \eta_1Z_i$. We have
\begin{equation*}
    \begin{aligned}
        \log p(\bA^\ast \mid \btheta)
        &= \sum_{i < j}A^\ast_{ij}\log(\theta_{ij}) +
        (1-A^\ast_{ij}) \log(1-\theta_{ij}) 
        \\
        \log p(\bA \mid \bA^\ast , \bgamma)
        &= \sum_{i < j}A^\ast_{ij}
        \left[
            A_{ij}\log(\gamma_{1,ij}) +
        (1-A_{ij}) \log(1-\gamma_{1,ij})
        \right] 
        \\ &+
        \sum_{i < j}
        (1-A^\ast_{ij})
        \left[
            A_{ij}\log(\gamma_{0,ij}) +
        (1-A_{ij}) \log(1-\gamma_{0,ij})
        \right] 
        \\
        \log p(\bY \mid \bZ,\bA^\ast, \boeta) &= 
        C - \frac{1}{2\sigma^2}\sum_{i=1}^{N} 
        \left( 
            R_i - \eta_2 \sum_{j\neq i}A^\ast_{ij}Z_j
        \right)^2,
    \end{aligned}
\end{equation*}
where $C$ is some constant. 
As both $\log p(\bA^\ast \mid \btheta)$ and $\log p(\bA \mid \bA^\ast , \bgamma)$ are linear in $A^\ast_{ij}$, the curvature will be determined only by the outcome model.
Recall that $A^\ast_{ij} =A^\ast_{ji}$ as the network is assumed to be symmetric.
Thus, the gradient with respect to edge $e=ij$ is 
\begin{equation*}
    \nabla_{ij}\log p(\bA^\ast \mid \cdot) =
    \frac{\partial \log p(\bA^\ast \mid \cdot)}{\partial A^\ast_{ij}} = 
    \widetilde{C} + \frac{\eta_2}{\sigma^2}
    \left[
    \left(R_i-\eta_2 \sum_{k\neq i}A_{ki}Z_k \right)Z_j
    +
    \left(R_j-\eta_2 \sum_{k\neq j}A_{kj}Z_k \right)Z_i
    \right],
\end{equation*}
for constant $\widetilde{C}$. 
Therefore, the $e,e = ij,ij$ element on the diagonal of the Hessian is
\begin{equation*}
    \mathcal{H}_{e,e} = 
    \frac{\partial^2\log p(\bA^\ast\mid \cdot)}{(\partial A^\ast_{ij})^2}
    =
    -\frac{\eta_2^2}{\sigma^2}\left(Z_j^2 + Z_i^2\right)
    = 
    -\frac{\eta_2^2}{\sigma^2}\left(Z_j + Z_i\right),
\end{equation*}
as $Z_i$ are binary.
Clearly, $\lvert \mathcal{H}_{e,e} \rvert$ is maximized when $Z_i=Z_j=1$. Substituting this into \eqref{eq:c_lip} gives the efficiency factor
\begin{equation*}
    c = \exp\left( -\frac{\eta_2^2}{\sigma^2}\right).
\end{equation*}
This result indicates that the efficiency of the gradient-based approximation depends on the signal-to-noise ratio of the interference effect, but importantly, does not degrade with the network size $N$.
For moderate interference-to-noise ratio with $\eta_2^2 = 0.5\sigma^2$, we have $c \approx 0.6$, suggesting the approximation does not severely degrade the asymptotic efficiency of the Markov chain. 

\section{Numerical Illustrations}
\label{apdx.sec:sim}

For the continuous kernels in the Block Gibbs algorithm we used NumPyro \citep{phan2019composable}. Code is available at \url{https://github.com/barwein/BayesProxyNets}.

The prior specification is as follows in all models and setups.
$\theta \sim N(0,3^2), \; \gamma \sim N(0,3^2), \eta\sim N(0,3^2),\; \sigma \sim \text{HalfNormal}(0,3^2)$. For the continuous relaxation using the CONCRETE distribution \citep{maddison2017concrete}, we assumed a temperature equal to $0.5$, implying a continuous density in $[0,1]$ with a U-shape \citep[see Figure~3(b) in][]{maddison2017concrete}.

Figure~\ref{fig:gradient_approx} shows how the first-order Taylor approximations using gradients correctly approximate, up to a scale, the manual differences 
for the Locally Informed Proposals for $\bA^\ast$ updates in the fully-synthetic data setting.

\begin{figure}[!hbtp]
    \centering
    \includegraphics[width=0.35\linewidth]{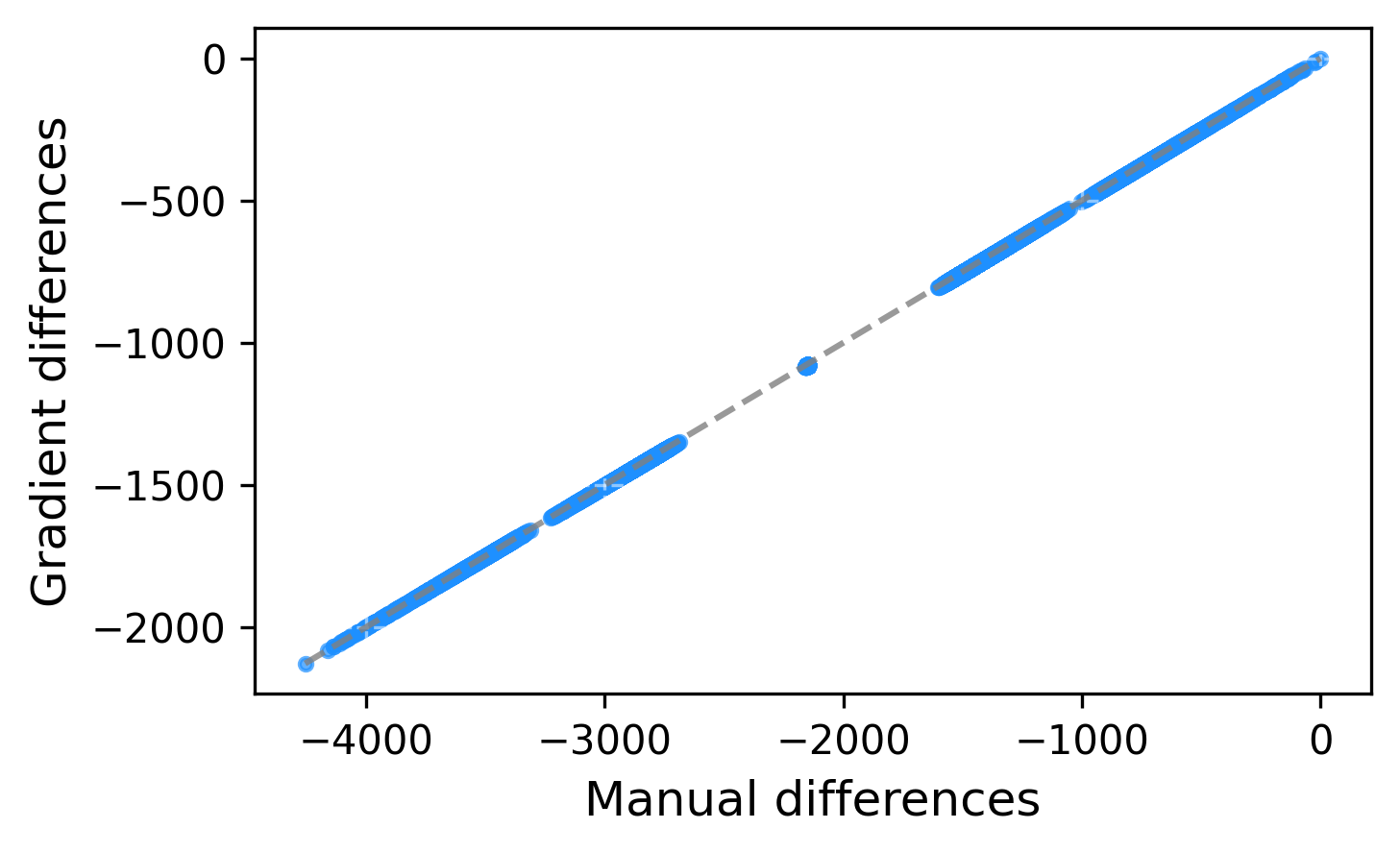}
    \caption{Gradient approximations $\widetilde{\Delta}\left(ij,t \mid \cdot \right)$ versus the differences $\Delta\left(ij,t \mid \cdot \right)$. Values are on the log-softmax scale. The approximation is accurate up to a scale.}
    \label{fig:gradient_approx}
\end{figure}

\subsection{Fully-Synthetic Data}
The full specification of the data generating process for $N=500$ units is
\begin{equation*}
    \begin{aligned}
        X_{1,i} &\sim N(0,1) \\
        X_{2,i} &\sim Ber(0.1) \\
        \Pr(A^\ast_{ij}=1 \mid \bX,\btheta) &= s\big(-2 +  \widetilde{X}_{2,ij} \big) \\
           \Pr(A_{ij}=1 \mid A^\ast_{ij}, \bX, \bgamma) &=
            s\big(A^\ast_{ij}\gamma_0 + (1-A^\ast_{ij})(\gamma_1 + \gamma_2\widetilde{X}_{1,ij} + \gamma_3\widetilde{X}_{2,ij})\big) \\
        %  \Pr(A_{2,ij}=1 \mid A_{1,ij}, A^\ast_{ij}, \bX, \bgamma) &= s\big(A^\ast_{ij}(logit(0.8) + 1.5 A_{1,ij}) + (1-A^\ast_{ij})(logit(0.2) + 1.5A_{1,ij})\big) \\
        \Pr(Z_i=1) &= p_z \frac{d^\ast_i}{N-1} \\
        % \bpsi &\sim N\Big(0, (\boldsymbol{D} - 0.5\bA^\ast)^{-1}\Big) \\
        \varepsilon_i &\sim N(0,1) \\
        \mu_i &= -1 + 3Z_i + 2\phi_1 - 0.5X_{1,i}\\
       Y_i &= \mu_i + \varepsilon_i,
    \end{aligned}
\end{equation*}
where we set $\gamma_0 = \text{logit}(0.95) - \gamma_2/2$, $\gamma_1=\text{logit}(0.05)+\gamma_2/2$,
 and took $\gamma_2 \in \{2,2.5,3,3.5,4\}$.
In each iteration, we resampled all the data. Given each $\bA^\ast$ sample, a proxy network is generated for each $\gamma_2$ value.

Figure~\ref{fig.apdx:A_star_deg} shows the degree distribution of $\bA^\ast$ in one iteration. The right heavy-tails of degrees are units with an extremely high number of connections in the network, representing highly connected units.

Figure~\ref{fig.apdx:rel_error} shows the mean ($\pm$ SD) of the Relative Error $\left \vert \frac{\hat{\tau}-\tau}{\tau}\right \vert$ of the population-level estimand $\tau$ across $300$ iterations.
The figure is similar to the results of the unit-level MAPE values presented in Figure~\ref{fig:sim_rrmse}, showing that all BG methods had minimal error, while using a misspecified model (``BG (misspec)") slightly increased the relative error. The continuous relaxation method had a small mean relative error, although with a very high standard deviation across the simulation iterations, highlighting its instability.

Figure~\ref{fig.apdx:rrmse} shows the mean ($\pm$ SD) 
of Relative RMSE values across $300$ iterations.
Relative RMSE over $T$ posterior samples is defined as
$\text{rRMSE} = \sqrt{T^{-1}\sum_{t=1}^{T}\left(\frac{\hat{\tau}_t - \tau}{\tau}\right)^2}$,
where $\hat{\tau}_t$ is point estimate of the estimand at posterior sample $t$,
and $\tau$ is the true estimand. Lower values of rRMSE are better.

Figure~\ref{fig.apdx:coverage} presents the frequentist coverage rate in $300$ iterations. Coverage is the proportion of iterations where the $95\%$ credible interval included the true estimand $\tau$. Figure~\ref{fig.apdx:coverage} show that the BG algorithm (with one or two proxies) achieves a nominal coverage rate.

Figure~\ref{fig.apdx:traceplots} shows traceplots of $1.2 \times 10^4$ MCMC samples of the BG sampler in one iteration.
The traceplots corresponds to $\gamma_1$, $\eta$ (multiplying $\phi_1$),
and the dynamic and static estimands. The chains appear to mix well,
with no apparent trends or drifts.

Figure~\ref{fig.apdx:mwg_scaling} shows the scaling and mixing of the BG across $10$ iterations 
in one setup ($\gamma_2=3$ and one proxy network).
The figure shows, for varying sample sizes $N$, 
the average ($\pm$ SD) of wall-clock time (in seconds) 
to obtain $1.2 \times 10^4$ MCMC samples, 
and the relative minimal effective sample size (ESS) per second compared
to $N=100$.
Even though the running time increase with $N$, 
for $N \geq 500$ the relative min ESS/sec stabilizes, indicating that 
increasing the sample size increase both computation time and sampling efficiency 
simultaneously.

\begin{figure}[!hbtp]
    \centering
    \includegraphics[width=0.35\linewidth]{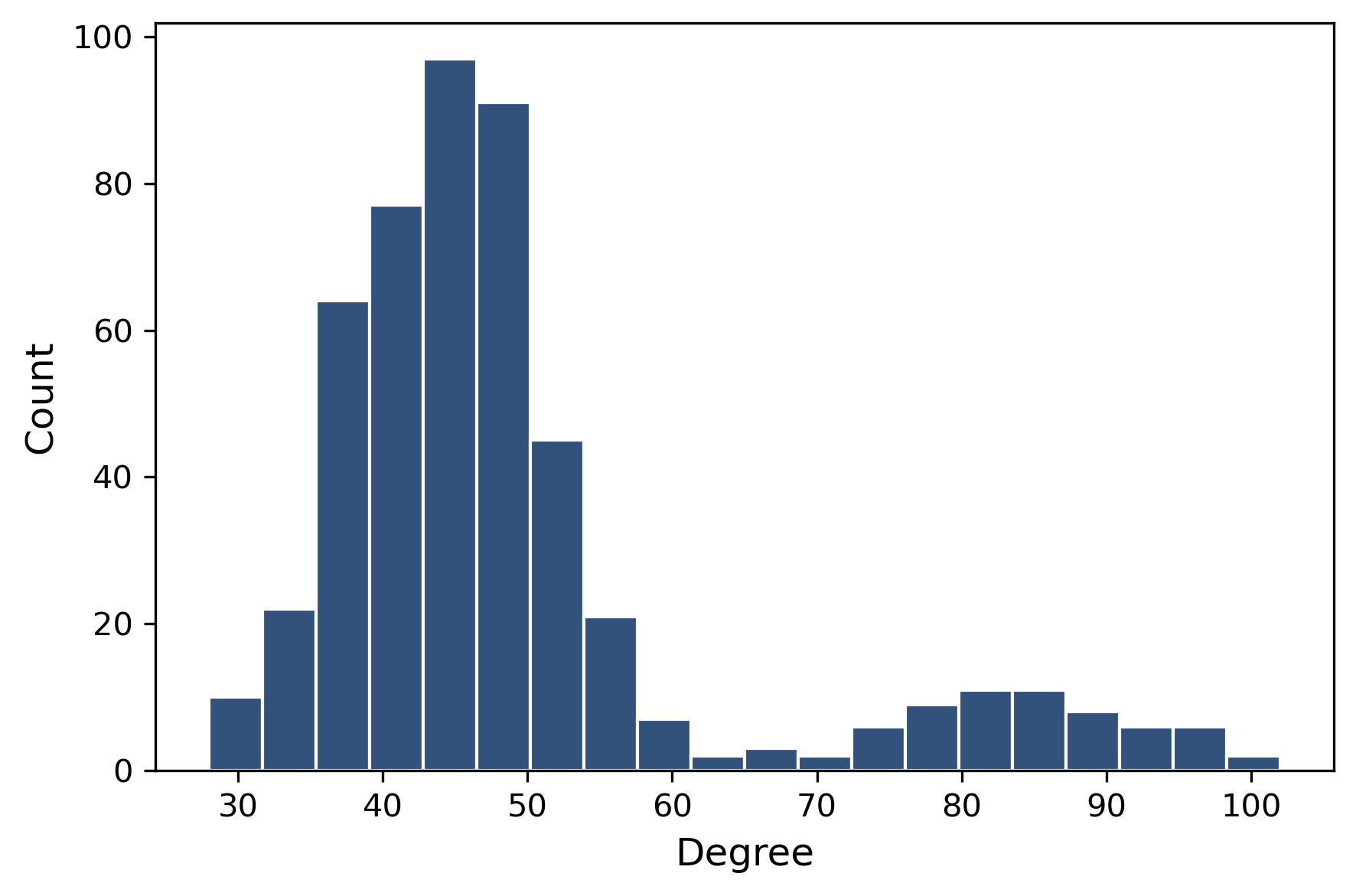}
    \caption{Degree distribution of $\bA^\ast$ in one iteration.}
    \label{fig.apdx:A_star_deg}
\end{figure}

\begin{figure}[!hbtp]
    \centering
    \includegraphics[width=0.6\linewidth]{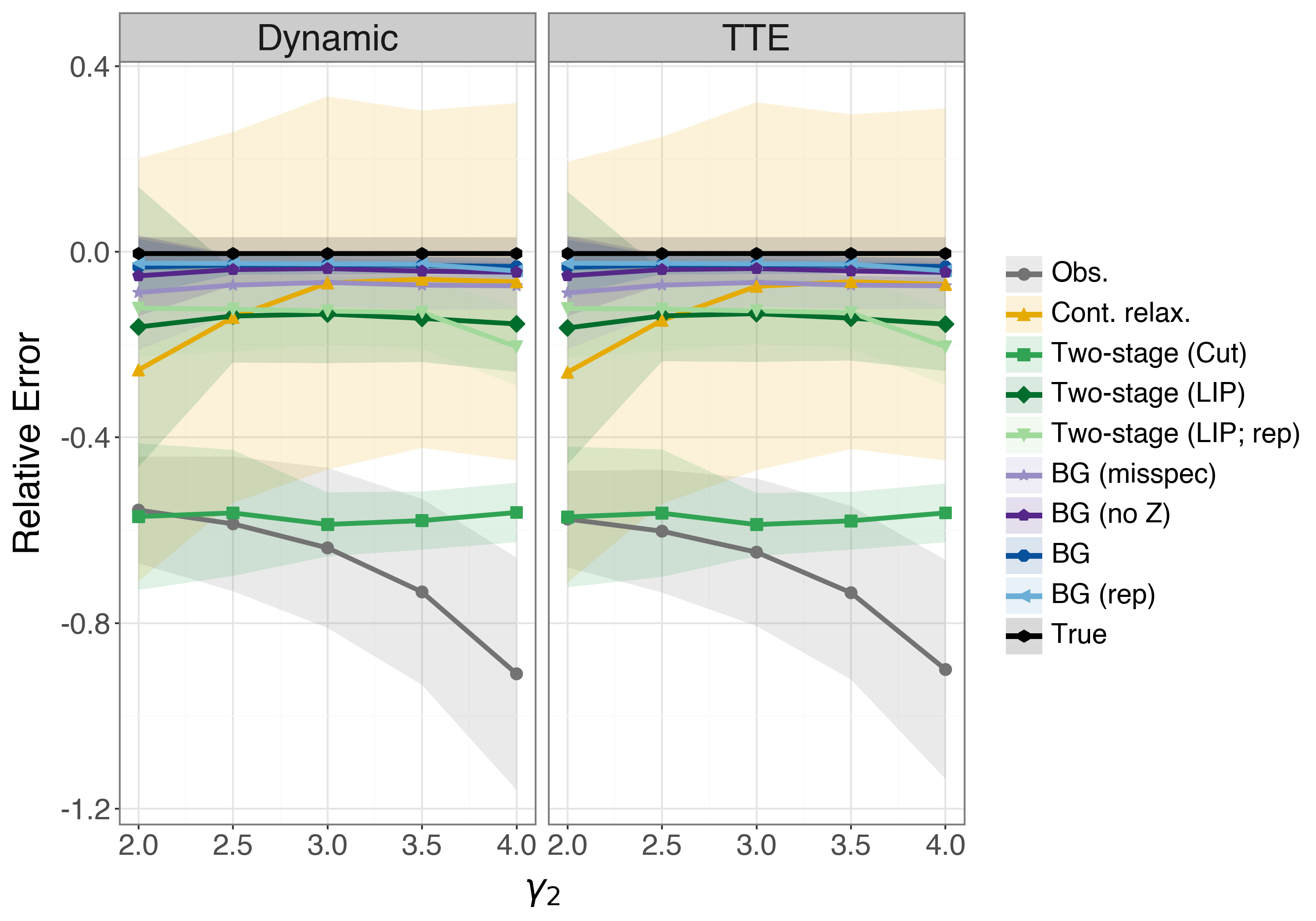}
    \caption{Mean ($\pm \; SD$) of Relative Error $\left \vert \frac{\hat{\tau}-\tau}{\tau}\right \vert$ 
    values across $300$ iterations.}
    \label{fig.apdx:rel_error}
\end{figure}

\begin{figure}[!hbtp]
    \centering
    \includegraphics[width=0.6\linewidth]{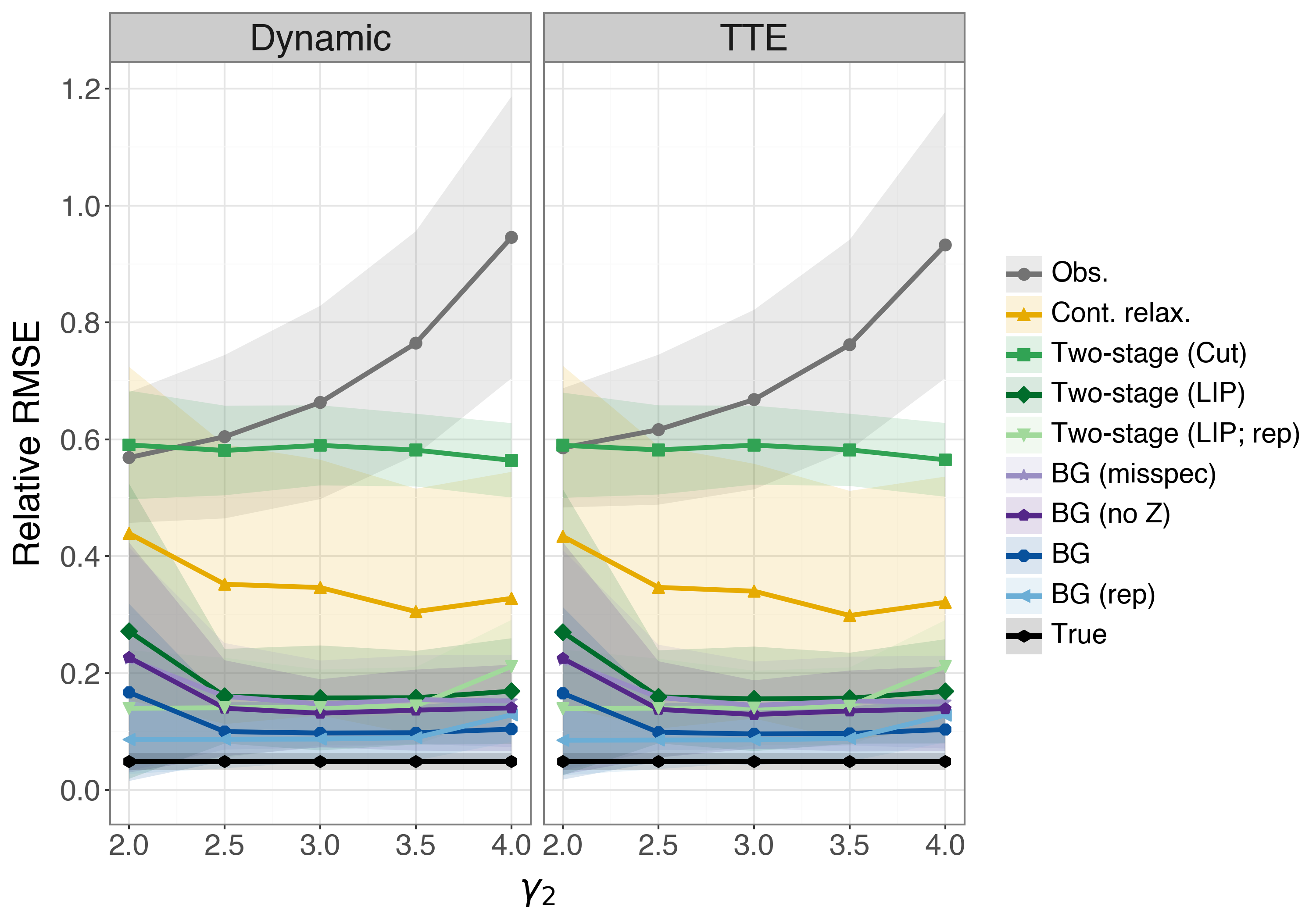}
    \caption{Mean ($\pm \; SD$) of Relative RMSE 
    values across $300$ iterations.}
    \label{fig.apdx:rrmse}
\end{figure}

\begin{figure}[!hbtp]
   \centering
   \includegraphics[width=0.6\linewidth]{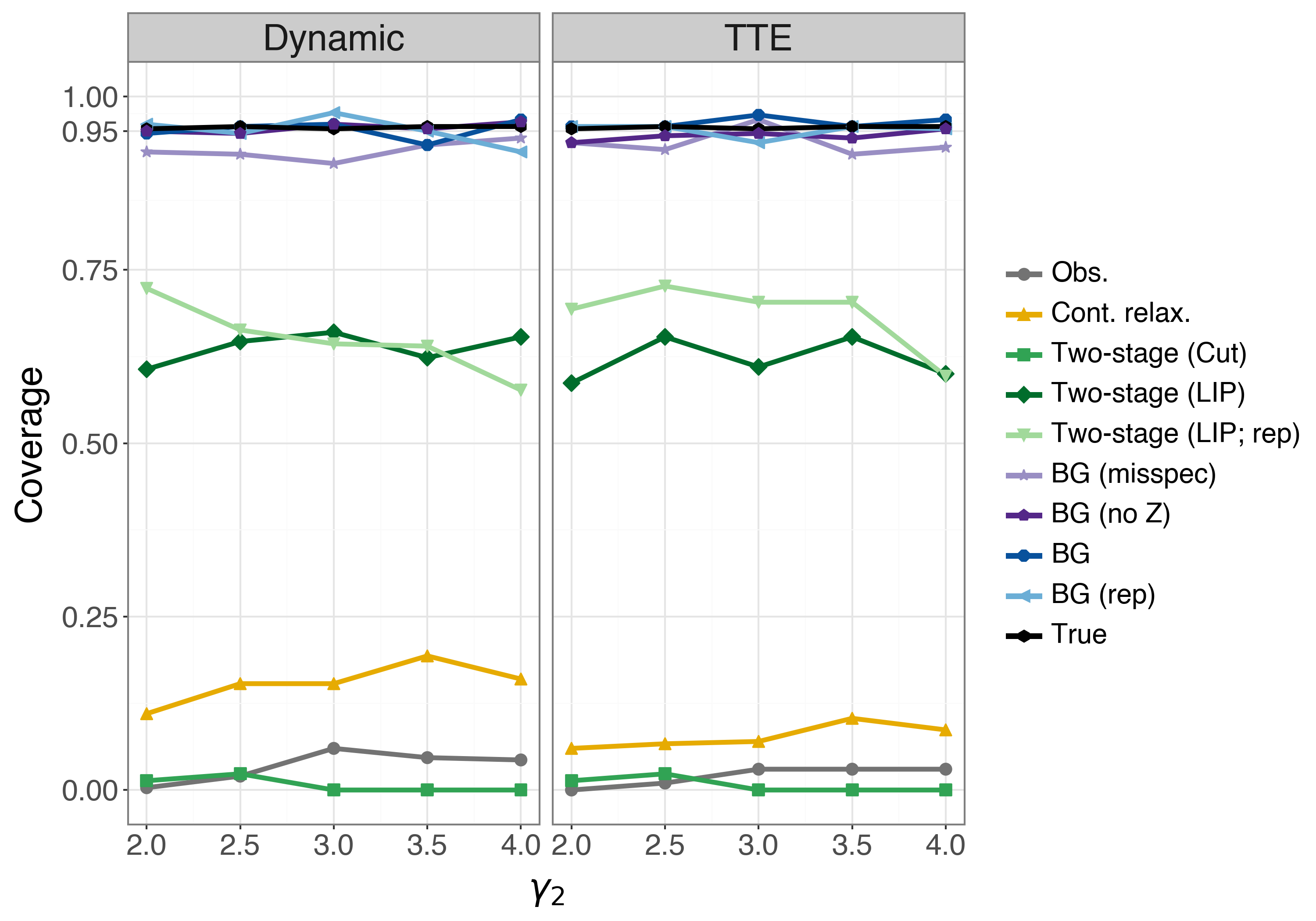}
   \caption{Coverage rate over $300$ iterations.}
    \label{fig.apdx:coverage}
\end{figure}

\begin{figure}[!hbtp]
    \centering
    \includegraphics[width=0.8\linewidth]{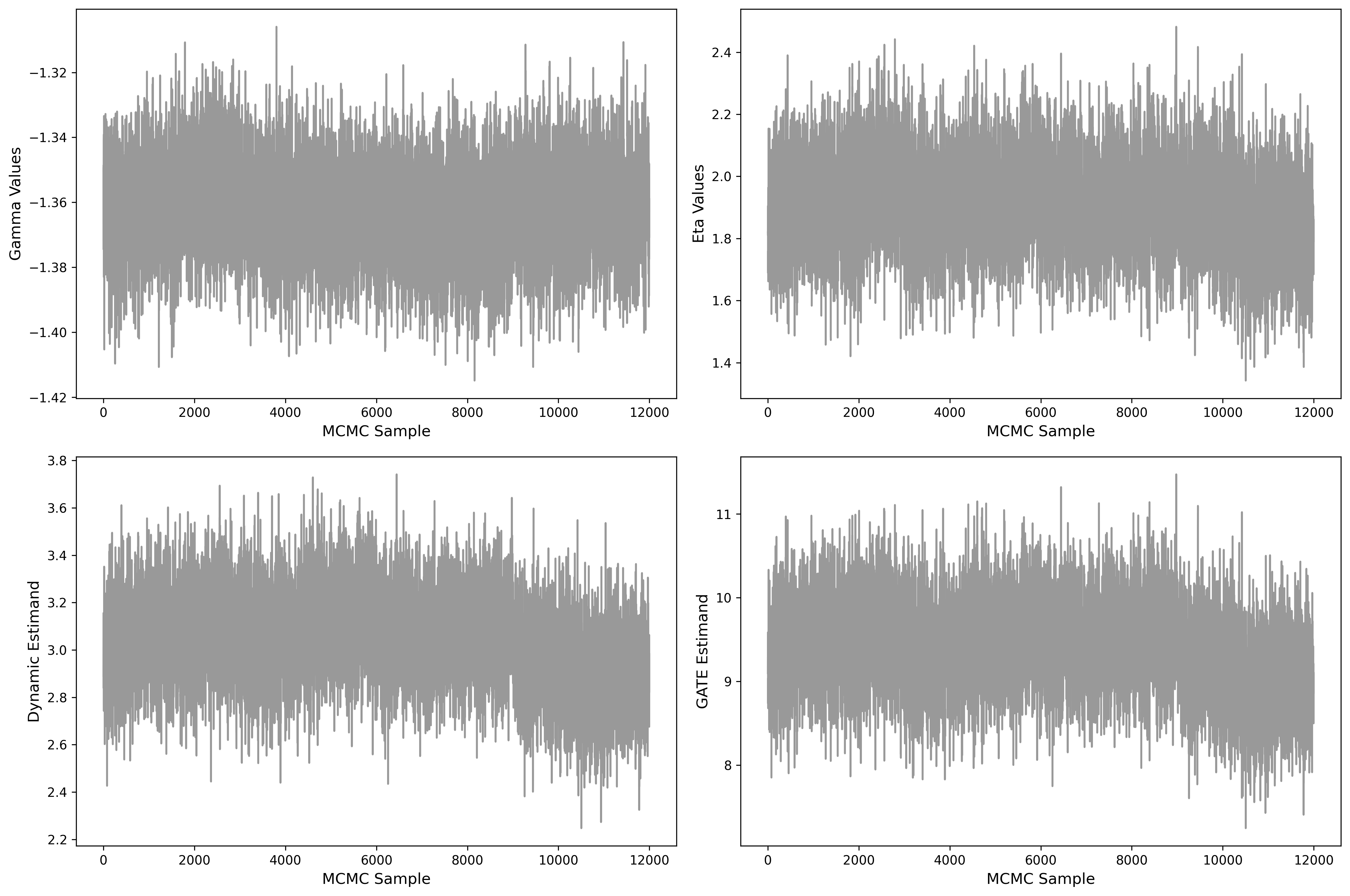}
    \caption{Traceplots of $1.2 \times 10^4$ MCMC samples of the BG sampler
    in one iteration. The traceplots corresponds to $\gamma_1$, $\eta$ (multiplying $\phi_1$),
    and the dynamic and static estimands.}
    \label{fig.apdx:traceplots}
\end{figure}

\begin{figure}[!hbtp]
    \centering
    \includegraphics[width=0.5\linewidth]{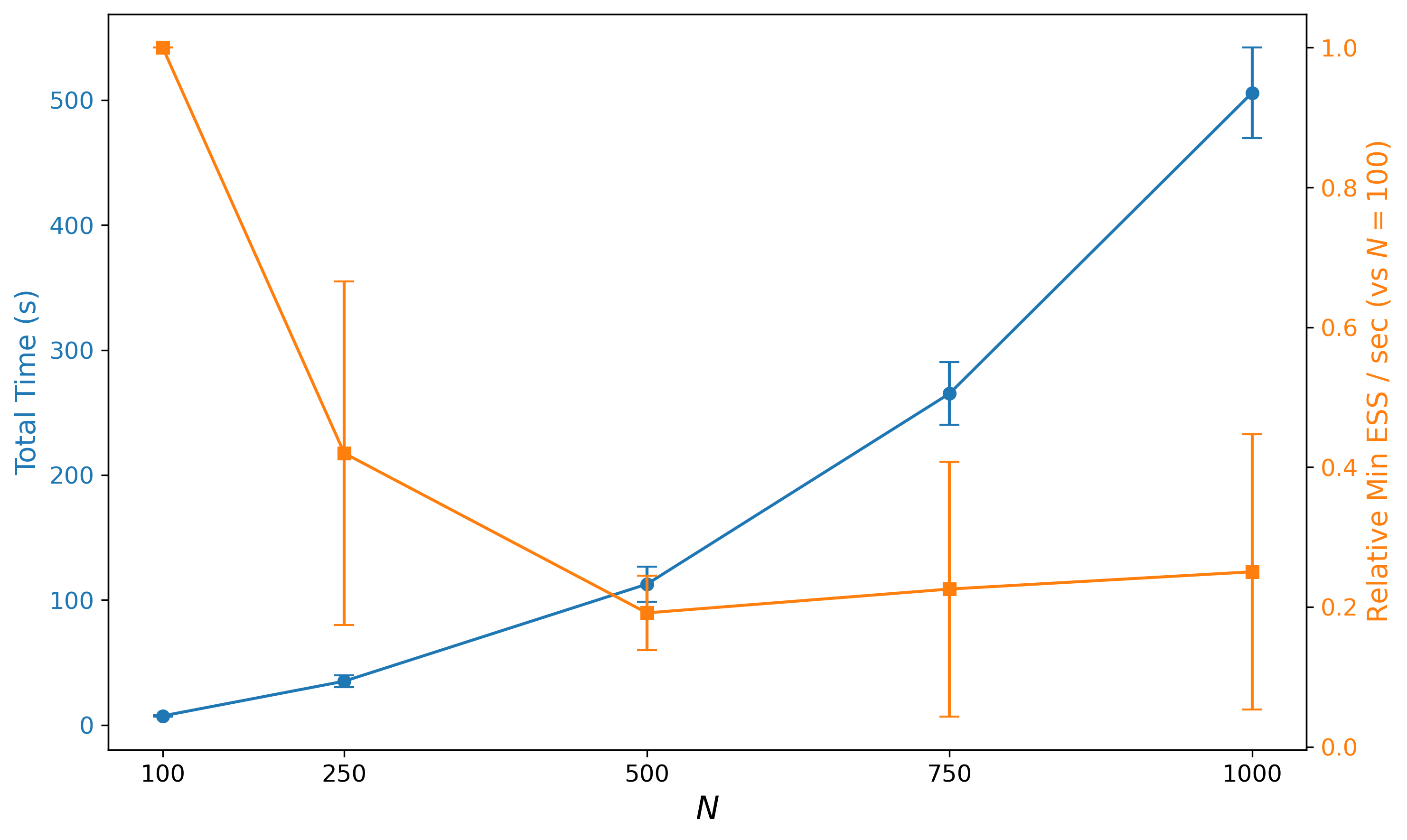}
    \caption{Scaling and mixing of the BG across $10$ iterations in one setup.
    For varying sample sizes $N$, 
    the figure shows the average ($\pm$ SD) of wall-clock time (in seconds) 
    to obtain $1.2 \times 10^4$ MCMC samples (left y-axis, blue), 
    and the relative minimal effective sample size (ESS) per second compared
    to $N=100$ (right y-axis, orange). Even though the running time increase with $N$, 
    for $N \geq 500$ the relative min ESS/sec stabilizes, indicating that 
    increasing the sample size increase both computation time and sampling efficiency 
    simultaneously.
    Clock-wall times were obtained on a PC equipped with a
     12-core Apple Silicon (ARM64) processor and 48 GB of RAM, running macOS (Darwin Kernel 24.6.0).}
    \label{fig.apdx:mwg_scaling}
\end{figure}

\subsection{Semi-Synthetic Data}
% The data are available at the following link: \url{https://www.icpsr.umich.edu/web/ICPSR/studies/37070}.

Figure~\ref{fig.apdx:multilayer_error} summarizes the mean ($95\%$ posterior interval) of the
error $\hat{\tau}-\tau$ in estimating the TTE estimand in each of the setups. 
The results show that the Block Gibbs sampler had errors close to zero across most iterations and network layers.
The Two-stage method had smaller errors than all aggregated networks, other than in the 
Work layer where the ``OR'' aggregation had slightly smaller errors.
Table~\ref{tab.apdx:coverage} show the empirical coverage of the TTE estimand 
across $300$ iterations by layer and estimation method.
The results shows that the BG method achieved nominal coverage across all layers.
However, both the Two-stage and the aggregated networks had poor coverage in most layers,
with ``AND'' network being the worst across all layers other than Facebook.

Figure~\ref{fig.apdx:multilayer_nets} displays the four networks used in the analysis. 
Network data is available at \url{https://manliodedomenico.com/data.php}.

\begin{table}[!htbp]
\centering
\begin{tabular}{lcccc}
\toprule
 \emph{Method} & \emph{Facebook} & \emph{Leisure} & \emph{Lunch} & \emph{Work} \\
\midrule
True & 0.953 & 0.933 & 0.957 & 0.960 \\
BG & 0.923 & 0.933 & 0.947 & 0.917 \\
Two-stage & 0.577 & 0.913 & 0.853 & 0.854 \\
OR & 0.450 & 0.900 & 0.607 & 0.964 \\
AND & 0.717 & 0.545 & 0.000 & 0.003 \\
\bottomrule
\end{tabular}
\caption{Empricial coverage across $300$ iterations of the TTE estimand by layer 
and estimation method.} 
\label{tab.apdx:coverage}
\end{table}

\begin{figure}[!hbtp]
    \centering
    \includegraphics[width=0.7\linewidth]{Graphics/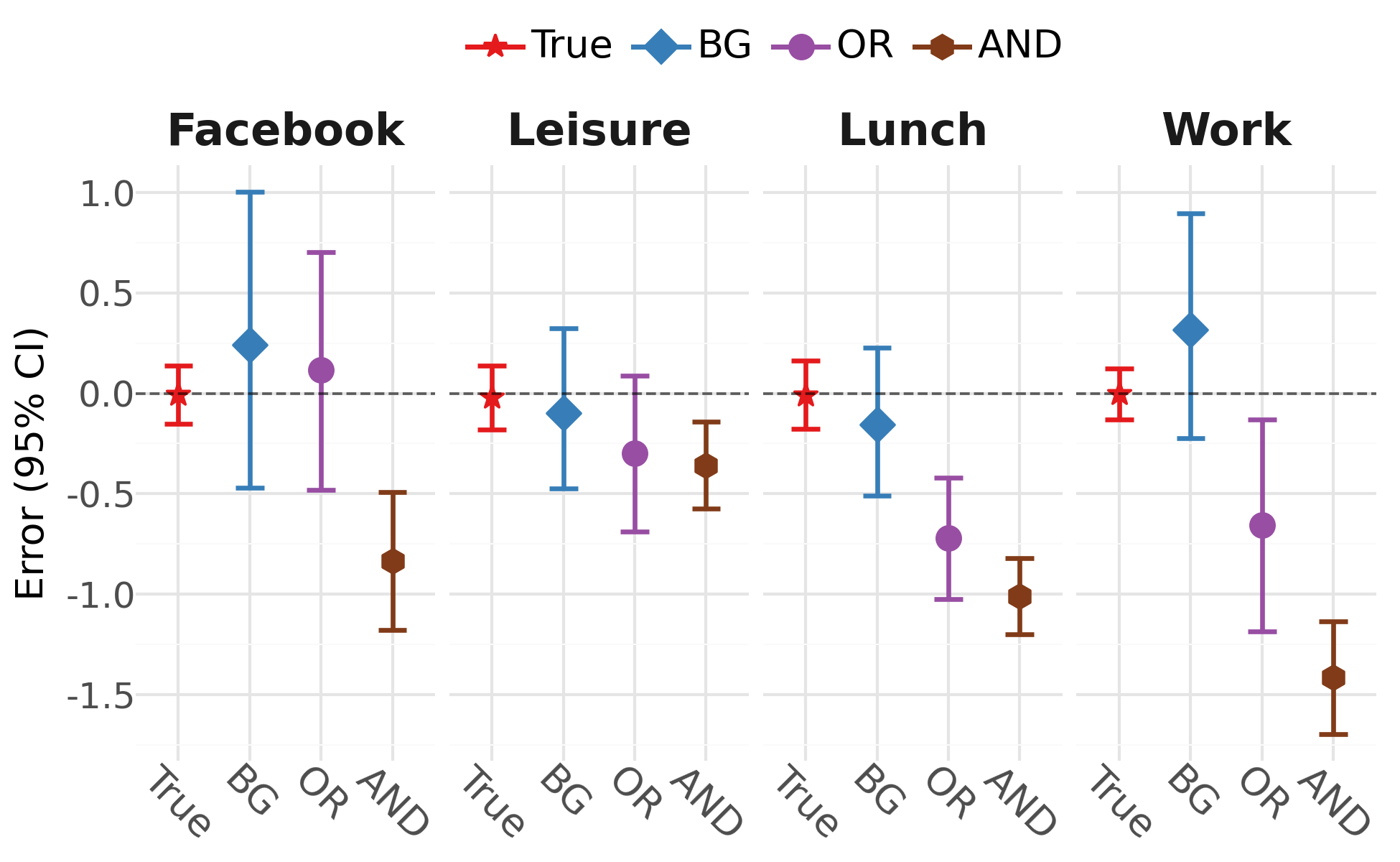}
    \caption{Mean ($95\%$ posterior interval) of error $\hat{\tau}-\tau$ values 
    across $300$ iterations in each setup.}
    \label{fig.apdx:multilayer_error}
\end{figure}

\begin{figure}[!hbtp]
    \centering
    \includegraphics[width=0.65\linewidth]{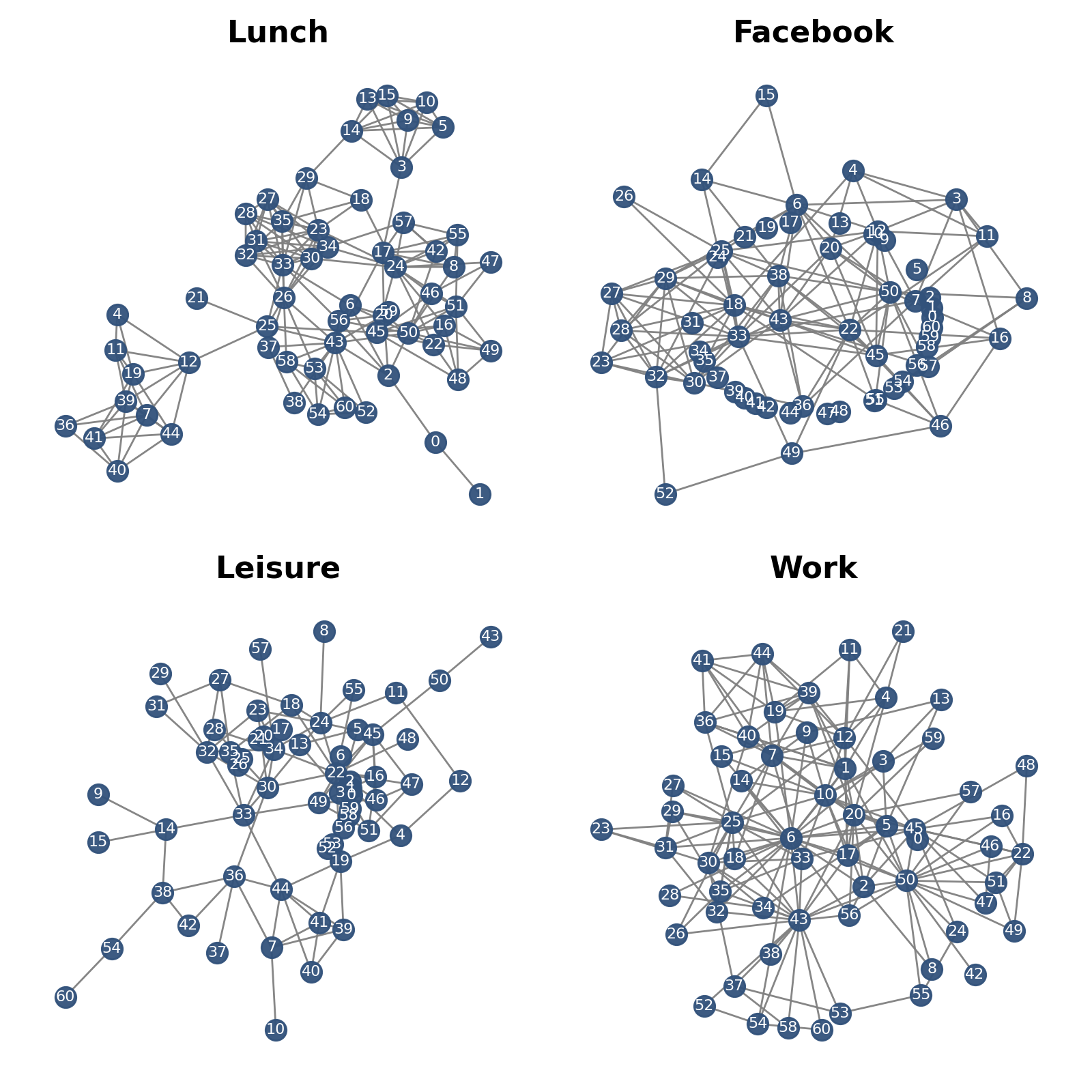}
    \caption{Multilayer networks.}
    \label{fig.apdx:multilayer_nets}
\end{figure}

\newpage

% \vskip 0.2in
\bibliography{all_biblo}

\end{document}